\documentclass[a4paper,twocolumn,11pt]{quantumarticle}
\pdfoutput=1

\usepackage[utf8]{inputenc}
\usepackage[english]{babel}
\usepackage[T1]{fontenc}

\usepackage{amsmath}
\usepackage{amsthm}
\usepackage{amsfonts}
\usepackage{amssymb}
\usepackage{braket}

\usepackage{makecell}
\usepackage{multirow}

\usepackage{xcolor}

\usepackage{hyperref}       
\hypersetup{colorlinks=true}
\usepackage{url}

\usepackage{graphicx}
\usepackage{caption}
\usepackage{float}
\usepackage{subcaption}

\usepackage{tikz}
\usetikzlibrary{shadows}
\usetikzlibrary{matrix, positioning}

\usepackage[numbers,compress]{natbib}
\usepackage[compat=0.6]{yquant}
\useyquantlanguage{groups}
\usetikzlibrary{fit,quotes,positioning}
\pgfdeclareshape{yquant-multictrlshape}{
\inheritsavedanchors[from=yquant-circle]
\foreach \anc in {center, north, north east, east, south east, south, south west, west, north west} {
	\inheritanchor[from=yquant-circle]{\anc}
}
\inheritanchorborder[from=yquant-circle]
\backgroundpath{
	\pgfpathmoveto{\pgfqpoint{-0.707107\dimexpr\xradius\relax}{-0.707107\dimexpr\yradius\relax}}
	\pgfpathlineto{\pgfqpoint{0.707107\dimexpr\xradius\relax}{0.707107\dimexpr\yradius\relax}}
	\pgfpathellipse{\pgfpointorigin}
	{\pgfqpoint{\xradius}{0pt}}
	{\pgfqpoint{0pt}{\yradius}}
}
\inheritclippath[from=yquant-circle]
}
\yquantset{multictrl/.style={every control/.style={shape=yquant-multictrlshape,radius=1.05mm}}}

\usepackage{orcidlink}

\newtheorem{theorem}{Theorem}
\newtheorem{definition}{Definition}

\newtheorem{lemma}{Lemma}
\newtheorem{corollary}{Corollary}
\newtheorem{proposition}{Proposition}

\begin{document}

\title{Resource-Tunable Quantum Circuit Implementation of Nonlinear Element-Wise Transformations}

\author{Chunlin Yang}
\affiliation{School of Mathematical Sciences, Harbin Engineering University, Harbin 150001, China}
\orcid{0009-0005-9716-5411}

\author{Yuqi Li}
\affiliation{School of Mathematical Sciences, Harbin Engineering University, Harbin 150001, China}

\author{Hongmei Yao}
\email{hongmeiyao@hrbeu.edu.cn}
\orcid{0000-0001-7772-8006}
\affiliation{School of Mathematical Sciences, Harbin Engineering University, Harbin 150001, China}

\author{Zhaobing Fan}
\affiliation{School of Mathematical Sciences, Harbin Engineering University, Harbin 150001, China}

\begin{abstract}
    Nonlinear transformations are important in quantum machine learning and other quantum numerical applications, which can be reduced to the implementation of element-wise polynomials through polynomial approximation.
    Existing works establish the feasibility of transformations and improve query or auxiliary-space efficiency, but circuit depth remains linear in scale with degree.
    In this work, we develop a resource-tunable quantum framework that decomposes a $(2^d-1)$-degree polynomial into $m$ lower-degree factors, with $m$ controlling the degree of parallelism.
    By varying $m$ from $1$ to $2^d-1$, the framework provides a trade-off among query complexity, additional gate complexity, ancilla count, and normalization, ranging from a binary-tree implementation to a $\mathcal{O}(n+d)$-depth construction with highly parallel oracle queries that match the query lower bound. 
    The effects of polynomial approximation and implementation errors are further analyzed. We demonstrate the framework for nonlinear activation functions such as sigmoid and tanh, as well as pixel-wise image transformations.
\end{abstract}

\maketitle

\section{Introduction}\label{sec: introduction}
Nonlinear transformations are central to many computational models, most notably neural networks, where nonlinear activation functions are essential for expressiveness~\cite{cybenko1989approximation, hornik1989multilayer, pinkus1999approximation}.
The prospect of exploiting quantum computation for machine learning and other numerical tasks has consequently motivated a broad range of quantum neural-network and quantum linear algebra approaches~\cite{schuld2014quest, benedetti2019parameterized, cerezo2021variational}.
A fundamental obstacle, however, arises when a nonlinear function must act directly on quantum-encoded data: admissible quantum dynamics is linear on state space and cannot realize an arbitrary entry-wise nonlinear map on amplitudes for arbitrary inputs~\cite{abrams1998nonlinear}.
Such a nonlinearity must instead be embedded in a larger linear process, typically via ancillary registers and post-selection. 
The full process remains linear, and nonlinearity appears only on the post-selected branch, where the conditional state depends nonlinearly on the input at the cost of an input-dependent failure probability and validity restricted to a limited class of inputs. 
This tension between linear evolution and nonlinear data processing is not confined to quantum neural networks; it arises whenever nonlinear functions are applied entry-wise to quantum-encoded data.

A natural way to formulate this problem is through polynomial approximation~\cite{pinkus2000weierstrass, mergelyan1951, mergelyan1952} of the scalar function to be applied. 
Element-wise transformations of this kind can be formulated for quantum-accessible data in either of the two standard representations~\cite{guo2025quantum}: the amplitudes of a quantum state and the matrix by block encoding~\cite{gilyen2019quantum}.
Let $A=(a_{ij})\in\mathbb{C}^{2^n\times 2^n}$ be a matrix encoded by a block encoding with bounded entries, and consider the entry-wise transformation
\begin{equation*}
    A \longmapsto f(A)=\left(f(a_{ij})\right),
\end{equation*}
where $f:\mathbb{C}\rightarrow\mathbb{C}$ is a nonlinear function that can be approximated by a polynomial. Approximating $f$ by the polynomial $P(x)=\sum_{k}c_kx^k$ reduces the nonlinear transformation to the implementation of
\begin{equation*}
    P(A) = \left(\sum_{k}c_ka_{ij}^k\right),
\end{equation*}
This object should be distinguished from the ordinary matrix polynomial $\sum_k c_kA^k$: the latter is governed by matrix multiplication, whereas $P(A)$ acts independently on each matrix entry and is called a Hadamard matrix function~\cite{horn1994topics}. 
Once the target nonlinear function has been reduced to this form, the central algorithmic problem is to construct a block encoding of a Hadamard matrix function efficiently.

Several approaches have been developed for nonlinear transformations of states, including quantum analog-to-digital conversion, which converts transformations on amplitudes into transformations on bit strings~\cite{mitarai2019quantum}; the weighted-state framework with its quantum Hadamard product subroutine for preparing powers and polynomials of state amplitudes~\cite{holmes2023nonlinear}; and constructions based on quantum singular value transformation (QSVT)~\cite{gilyen2019quantum,martyn2021grand} for state amplitudes, which transform states into diagonal matrices~\cite{rattew2023nonlinear, guo2024nonlinear}. 
More recently, element-wise transformations of general block-encoded matrices were constructed by combining Hadamard products with linear combinations of unitaries (LCU)~\cite{guo2025quantum}. A subsequent work~\cite{rossi2026quantum} improved the construction of~\cite{guo2025quantum}, reducing the auxiliary space required for such transforms to logarithmic in the polynomial degree while retaining linear depth in the degree.
Across this line of work, the relevant resources are now comparatively well understood—query complexity, auxiliary space, subnormalization, and success probability—but the circuit depth of every known construction is linear in the degree.
No systematic method has been described for reducing this depth, and the trade-off between depth and other resources has not been characterized. This motivates the question addressed in this work: whether the depth of implementation of a Hadamard matrix function can be reduced below linear in degree, and at what cost in auxiliary space, success probability, and normalization.

In this work, we develop a resource-tunable quantum framework for Hadamard matrix functions with explicit circuit construction. 
The central idea is to decompose a $(2^d-1)$-degree polynomial into a Hadamard product of $m$ lower-degree factors. We design a recursive binary-tree construction to implement these lower-degree factors, which reuses intermediate Hadamard powers and reduces the required auxiliary qubits at the cost of increased depth. The value of $m$ provides tunable trade-offs among query complexity, additional gate complexity, ancilla count, and normalization factor. 
Importantly, complete factorization into linear factors yields our logarithmic-depth construction; in the single-oracle model, it uses $2^d-1$ queries to the input block encoding and saturates the corresponding query lower bound.
Moreover, we derive error bounds for the polynomial approximation and the implementation of block encoding. We demonstrate the framework on nonlinear activation functions and pixel-wise image transformations. 
Furthermore, our framework is implemented as explicit circuits in Python; the code is available at \url{https://github.com/ChunlinYANG0/QuantumHadamardProcedure}.

The remainder of the paper is organized as follows. 
Section~\ref{sec: preliminaries} formulates the problem, defines the oracle models and resource metrics, and introduces the necessary building blocks. 
Section~\ref{sec: QHMF} presents the factorization, binary-tree, and trade-off constructions and analyzes their resource costs. 
Section~\ref{sec: error} provides the error analysis. 
Section~\ref{sec: applications} demonstrates the applications. Section~\ref{sec: conclusion} concludes with a summary and outlook.

\section{Problem formulation}\label{sec: preliminaries}

\subsection{Problem Setting}
The central task of this work is nonlinear amplitude transformation: given an $n$-qubit matrix $A = \sum_{ij} a_{ij}\ket{i}\bra{j} \in \mathbb{C}^{2^n\times2^n}$ with all entries on $\Omega\subseteq\mathbb{C}$ and a target function $f:\Omega\to\mathbb{C}$ that can be approximated by a polynomial, produce a new matrix whose entries are $f(a_{ij})$, i.e., $f(A)=\left(f(a_{ij})\right)$. 

The first step to implement the nonlinear amplitude transformation is the polynomial approximation. 
By Mergelyan's theorem~\cite{mergelyan1951,mergelyan1952}, if $\Omega$ is compact, $\mathbb{C}\setminus \Omega$ is connected, and $f: \Omega\to\mathbb{C}$ is continuous on $\Omega$ and holomorphic in the interior of $\Omega$, then $f$ can be uniformly approximated on $\Omega$ by polynomials.
Therefore, for a prescribed accuracy, the task reduces to implementing a polynomial entry-wise on $A$, given by the following definition.
\begin{definition}[Hadamard Matrix Function~\cite{horn1994topics}] 
	Let $A=\left(a_{ij}\right)$ be an $n$-qubit matrix. 
    A Hadamard matrix function of degree $2^d-1$ acting on $A$ is 
	\begin{equation}\label{eq: Hadamard matrix function}
		P \left(A\right) = \sum_{k=0}^{2^d-1} c_k A^{\circ k} = \left(\sum_{k=0}^{2^d-1}c_{k}a_{ij}^{k}\right),
	\end{equation}
	where $A^{\circ k}=\left(a_{ij}^k\right)$ denotes the $k$-fold Hadamard (entry-wise) power of $A$, and $A^{\circ 0}=J$ is the all-ones matrix.
\end{definition}
Here, the degree is chosen as $2^d-1$ so that the number of terms is a power of two. For convenience, the quantum realizations of the Hadamard product and the Hadamard matrix function are called QHP and QHMF, respectively.

Approximating $f$ by $P$ replaces the target transformation with a Hadamard matrix function of finite degree. This replacement introduces a scalar approximation error on $\Omega$, while implementing $P(A)$ through quantum circuits introduces an implementation error; the two together form the total error budget that any admissible construction must meet.
These quantities, together with the post-selection success probability $p_{\rm succ}$, are defined as metrics in Sec.~\ref{subsec: resource metrics}, and their propagation through the constructions of Sec.~\ref{sec: QHMF} is analyzed in Sec.~\ref{sec: error}.

\subsection{Oracle Model}\label{subsec: oracle models}
We adopt the block-encoding input model~\cite{gilyen2019quantum}, in which a non-unitary matrix is represented as a submatrix of a unitary operator. Several block-encoding protocols have been proposed~\cite{camps2022fable,camps2024explicit,sunderhauf2024block,yang2025dictionary,li2025binary} and can be used to realize this input model.
\begin{definition}[Block Encoding~\cite{gilyen2019quantum}]
	Suppose that $A$ is an $n$-qubit matrix, $\alpha,\epsilon\in\mathbb{R}_{+}$ and $a \in \mathbb{N}$, then we say that the $(n + a)$-qubit unitary $U_{A}$ is an $(\alpha,a,\epsilon )$-block-encoding of $A$, if
	\begin{equation*}
		\left\|A-\alpha\left(\bra{0}^{\otimes a}\otimes I_{2^n} \right) U_{A} \left(\ket{0}^{\otimes a} \otimes I_{2^n} \right)   \right\| \leq \epsilon,
	\end{equation*}
    where $\alpha$ is the normalization factor satisfying $\alpha\geq\|A\|$.
\end{definition}

In this paper, QHMF operates on an $n$-qubit matrix $A$, and access to $A$ is through oracles. We present two oracle models as follows. 
\begin{itemize}
    \item \textit{Single-oracle model.} 
    Only an $(\alpha_0,a_0,\epsilon_0)$-block-encoding $U_A$ of $A$, together with its controlled versions, is available. 
    No block encodings of higher Hadamard powers $A^{\circ 2^l}$ for $l\geq1$ are assumed; if needed, they must be synthesized from $U_A$ via the quantum Hadamard product.

    \item \textit{Power-oracle model.} 
    $(\alpha_l,a_l,\epsilon_l)$-block-encodings $U_{A^{\circ 2^l}}$ of $A^{\circ 2^l}$ for $l\in\{0,1,\cdots,d-1\}$, together with their controlled versions, are supplied as primitive oracles. These block encodings are treated as primitive oracles. 
\end{itemize}

In both models, the all-ones matrix $J$ is not treated as an oracle. 
Its exact block encoding is implemented explicitly, and its cost is included in the non-oracle gate and ancilla counts. 
This convention ensures that resource comparisons between different constructions remain fair. 
Moreover, query complexities are only comparable within the same oracle model; comparing across models requires including the cost of synthesizing the power oracles from $U_A$.

\subsection{Resource Metrics}\label{subsec: resource metrics}
To compare different QHMF constructions, we consider the following metrics. All counts are for the additional circuit beyond the oracle implementations unless stated otherwise. Throughout, $U_{P(A)}$ denotes the unitary implemented by a QHMF
circuit with $N$ ancilla qubits, 
\begin{equation*}
    \overline{P}(A) = \left(\bra{0}^{\otimes N}\otimes I_{2^n}\right) U_{P(A)} \left(\ket{0}^{\otimes N}\otimes I_{2^n}\right)
\end{equation*}
denotes the matrix it block-encodes, and $\|\cdot\|$, $\|\cdot\|_{\max}$ and $\|\cdot\|_F$ denote the spectral, entry-wise maximum and Frobenius norms, respectively.
\begin{itemize}
    \item Query complexity $Q_q^l$.\\
    Let $Q_q^l$ be the query time to the $m$-qubit controlled block encoding $U_{A^{\circ 2^l}}$ of $A^{\circ 2^l}$ for $l,q\geq0$. 
    In the power-oracle model, we use $\left\{Q_q^l:q\geq0\right\}_{l=0}^{d-1}$. 
    In the single-oracle model, only $Q^0_q$ for $q\geq0$ are used. 
    Query complexities are only comparable within the same oracle model.

    \item Additional depth $D$ and size $S$.\\
    Let $D$ be the circuit depth and $S$ the total number of elementary gates in $\left\{\operatorname{U}(2), \operatorname{CNOT}\right\}$, excluding the internal cost of the oracle implementations. 
    The all-ones matrix $J$ is implemented explicitly, and its depth and size are included in $D$ and $S$.

    \item Ancilla count $N$.\\
    Let $N$ be the number of ancilla qubits used by the circuit, excluding the data register. 
    Borrowed ancillas, which are returned to their initial state and can therefore be reused, are counted once by their maximum simultaneous usage.

    \item Normalization factor $\alpha$.\\
    The circuit block-encodes $P(A)$ with normalization factor $\alpha$, i.e., $P(A)/\alpha$ is the submatrix implemented by the circuit. 
    A larger $\alpha$ reduces the post-selection success probability and increases the absolute implementation error, but does not necessarily degrade the relative error.

    \item Post-selection success probability $p_{\rm succ}$. \\
    For a uniformly distributed input state $\rho=I_{2^n}/2^n$, the average probability of projecting the ancilla onto $\ket{0}$ after applying the circuit is
    \begin{align*}
        p_{\rm succ} =& \operatorname{Tr}\left(\rho \left(\frac{P(A)}{\alpha}\right)^\dagger \left(\frac{P(A)}{\alpha}\right)\right) \\
        =& \frac{\|P(A)\|_F^2}{\alpha^2 2^n},
    \end{align*}
    where $\|\cdot\|_F$ is the Frobenius norm. 
    This quantity determines the sampling cost: the expected number of circuit repetitions is $\mathcal{O}(1/p_{\rm succ})$.

    \item Total error budget $\varepsilon$. \\
    Two errors enter the total budget. 
    The \emph{approximation error} $\epsilon_{\rm appro}$ arises from replacing the target function $f$ by the polynomial $P$,
    \begin{equation*}
        \epsilon_{\rm appro} = \max_{z\in\Omega}|f(z)-P(z)|.
    \end{equation*}
    Because $P$ acts entry-wise on $A$, this scalar error propagates to the matrix elements without amplification, 
    \begin{equation*}
        \|f(A)-P(A)\|_{\max}\leq\epsilon_{\rm appro}.
    \end{equation*}
    The \emph{implementation error} $\epsilon_{\rm impl}$ is the error of the circuit in the block-encoding convention. It is quantified in the spectral norm at the two scales
    \begin{gather*}
        \epsilon_{\rm abs} = \epsilon_{\rm impl} = \left\|P(A)-\alpha\overline{P}(A)\right\|,
        \\
        \epsilon_{\rm rel} = \frac{\epsilon_{\rm abs}}{\alpha} = \left\|\frac{P(A)}{\alpha}-\overline{P}(A)\right\|.
    \end{gather*} 
    The absolute error $\epsilon_{\rm abs}$ is the quantity entering the block-encoding convention and the composition rules, whereas its relative error $\epsilon_{\rm rel}$ is the accuracy of the post-selected output, since $\overline{P}(A)$ replaces the ideal $P(A)/\alpha$ on the post-selected branch; the bounds on $\epsilon_{\rm rel}$ derived in Sec.~\ref{sec: error} are independent of $\alpha$, and it is therefore the measure in which constructions with different normalization factors are compared.
    Since $\|\cdot\|_{\max}\leq\|\cdot\|$, the two errors combine into the entry-wise total error
    \begin{equation*}
        \left\|f(A)-\alpha\overline{P}(A)\right\|_{\max}
        \leq\epsilon_{\rm appro}+\epsilon_{\rm impl} \leq \varepsilon,
    \end{equation*}
    which is the accuracy criterion adopted in this work.
\end{itemize}

\subsection{Building Block}
\subsubsection{Quantum Hadamard Product}
QHP is the fundamental operation for amplitude transformations through quantum circuits. An efficient implementation for two matrices was given in~\cite{zhao2021compiling,guo2025quantum}.
\begin{lemma}[QHP of Two Matrices~\cite{zhao2021compiling,guo2025quantum}]
    Assume access to an $\left(\alpha,a,\epsilon\right)$-block-encoding of an $n$-qubit matrix $A$ and a $\left(\beta,b,\delta\right)$-block-encoding of an $n$-qubit matrix $B$. Then the circuit in Fig.~\ref{circuit: Hadamard product of two matrices} implements a $\left(\alpha\beta,a+b+n,\alpha\delta+\beta\epsilon\right)$-block-encoding of $A \circ B$. 
	\begin{figure}[htbp]
		\centering
		\begin{tikzpicture}
			\begin{yquant}
				qubit {} b;
				qubit {} jb;
				qubit {} a;
				qubit {$\ket{j}$} ja;
				
				["north:$n$" 
				{font=\protect\footnotesize, inner sep=0pt}]
				slash ja,jb;
				["north:$b$" 
				{font=\protect\footnotesize, inner sep=0pt}]
				slash b;
				["north:$a$" 
				{font=\protect\footnotesize, inner sep=0pt}]
				slash a;
				hspace {5pt} -;
				cnot jb | ja;
				hspace {5pt} -;
				box {$U_B$} (b,jb);
				box {$U_A$} (a,ja);
				hspace {5pt} -;
				cnot jb | ja;
				hspace {5pt} -;
				
				output {$\ket{i}$} ja;
			\end{yquant}
		\end{tikzpicture}
		\caption{Quantum circuit for the Hadamard product of two $n$-qubit matrices.}
		\label{circuit: Hadamard product of two matrices}
	\end{figure}
\end{lemma}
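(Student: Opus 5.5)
We verify the statement by computing the top-left block of the circuit of Fig.~\ref{circuit: Hadamard product of two matrices} entry by entry, first for exact block encodings ($\epsilon=\delta=0$), and then adding a perturbation bound for the error term. The registers are an $a$-qubit ancilla for $U_A$, a $b$-qubit ancilla for $U_B$, the $n$-qubit data register (the $\ket{j}$ wire), and an extra $n$-qubit register (the wire between them), which is also treated as an ancilla initialized to $\ket{0}^{\otimes n}$. This gives $a+b+n$ ancillas in total. The circuit is $W=C\,(U_A\otimes U_B)\,C$. Here $C$ is the transversal CNOT from the data register onto the extra register, acting as $\ket{j}\ket{k}\mapsto\ket{j}\ket{k\oplus j}$.

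For the exact case, the plan is to compute $\bra{0}^{a}\bra{0}^{b}\bra{0}^{n}\bra{i}\,W\,\ket{0}^{a}\ket{0}^{b}\ket{0}^{n}\ket{j}$.
\begin{itemize}
\item The first $C$ maps $\ket{0}^n\ket{j}$ to $\ket{j}\ket{j}$.
\item Applying $U_A$ on (ancilla $a$, data) and $U_B$ on (ancilla $b$, extra register), we expand in the computational basis, $\ket{0}^a\ket{j}\mapsto\sum_{i'} (a_{i'j}/\alpha)\ket{0}^a\ket{i'} + \ket{\perp_A}$, and similarly for $B$. The garbage parts are orthogonal to $\ket{0}^a$ and $\ket{0}^b$ respectively.
\item The final $C$ maps $\ket{i'}\ket{k'}$ to $\ket{i'}\ket{k'\oplus i'}$, and it does not touch the $a,b$ ancillas. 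Projecting those ancillas onto $\ket{0}$ therefore kills the garbage terms.
\item Projecting the extra register onto $\ket{0}^n$ forces $k'=i'$.
\item Finally, projecting the data register onto $\bra{i}$ forces $i'=i$.
\end{itemize}
The result is $a_{ij}b_{ij}/(\alpha\beta)$, so the block is exactly $(A\circ B)/(\alpha\beta)$.

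For the approximate case, we let $\tilde A=\alpha(\bra{0}^a\otimes I)U_A(\ket{0}^a\otimes I)$ and similarly $\tilde B$. The same computation, which uses only the top-left blocks, shows that the circuit exactly block-encodes $\tilde A\circ\tilde B$ with factor $\alpha\beta$. We then write
\begin{equation*}
A\circ B-\tilde A\circ\tilde B=(A-\tilde A)\circ B+\tilde A\circ(B-\tilde B).
\end{equation*}
The key tool is the spectral-norm submultiplicativity of the Hadamard product, $\|X\circ Y\|\le\|X\|\,\|Y\|$ (Schur). This holds because $X\circ Y$ is a principal submatrix of $X\otimes Y$, obtained by compressing with the isometry $\ket{j}\mapsto\ket{j}\ket{j}$, and that is exactly what the CNOT construction realizes.

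Applying it gives $\|A\circ B-\tilde A\circ\tilde B\|\le \epsilon\|B\|+\|\tilde A\|\delta$. We use $\|\tilde A\|\le\alpha$, which holds because $\tilde A/\alpha$ is a block of a unitary. For the first term we use $\|B\|\le\beta$, which is assumed in the block-encoding definition ($\beta\ge\|B\|$). This yields the bound $\beta\epsilon+\alpha\delta$.

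The main obstacle is keeping the error bound exactly symmetric. This requires bounding $\|B\|$ by $\beta$, which relies on the normalization convention stated in the definition, rather than by $\beta+\delta$. Splitting the difference in the other order changes which factor is exact, so we choose the split so that each factor is bounded by its normalization constant.
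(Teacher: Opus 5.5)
Your proof is correct and follows the paper's route. The paper cites this lemma without proof, but it proves the $m$-matrix generalization (Corollary~\ref{corollary: QHP of m matrices}, Appendix~\ref{sec: proof of corollary QHP of m matrices}) by the same two steps: compute the entry $a_{ij}b_{ij}/(\alpha\beta)$ through the CNOT (GHZ) fan-out and projection, then telescope $A\circ B-\tilde A\circ\tilde B=(A-\tilde A)\circ B+\tilde A\circ(B-\tilde B)$ using $\|A\|\le\alpha$ and $\|\tilde A\|\le\alpha$. Your explicit derivation of $\|X\circ Y\|\le\|X\|\,\|Y\|$ via the compression $V^\dagger(X\otimes Y)V$ supplies a step the paper only uses implicitly, and it also gives $\|A\circ B\|\le\alpha\beta$, which the block-encoding definition requires.
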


By iterating this construction, we obtain the Hadamard product of multiple matrices. Using a multi‑register GHZ‑type operator (denoted $GHZ_m^n$) of depth $\mathcal{O}(\log m)$ and size $\mathcal{O}\left(nm\right)$~\cite{watts2019exponential},  the following corollary holds; proof provided in Appendix~\ref{sec: proof of corollary QHP of m matrices}.
\begin{corollary}[QHP of Multiple Matrices]\label{corollary: QHP of m matrices}
	Assume access to $\left(\tilde{\alpha}_k,\tilde{a}_k,\tilde{\epsilon}_k\right)$-block-encoding $U_{A_k}$ of $n$-qubit matrix $A_k=\left(a^{(k)}_{ij}\right)\in\mathbb{C}^{2^n\times 2^n}$ for $k\in\{0,1,\cdots,m-1\}$. Then the unitary represented by the circuit shown in Fig.~\ref{circuit: QHP of d matrices} is a $\left(\alpha,a,\epsilon\right)$-block-encoding of $\bigcirc_{k=0}^{m-1} A_{k}$, where $\alpha=\prod_{k=0}^{m-1}\tilde{\alpha}_k$, $a=(m-1)n+\sum_{k=0}^{m-1}\tilde{a}_k$, and $\epsilon=\left(\sum_{k=0}^{m-1} \frac{\tilde{\epsilon}_k}{\tilde{\alpha}_k}\right) \prod_{k=0}^{m-1}\tilde{\alpha}_{k}$.
\end{corollary}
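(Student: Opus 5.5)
We compute the block-encoded matrix of the circuit in Fig.~\ref{circuit: QHP of d matrices} directly and then propagate errors. The circuit is the $m$-register generalization of Fig.~\ref{circuit: Hadamard product of two matrices}. There are $m$ data registers of $n$ qubits each: register $0$ carries the input/output $\ket{j}$ and $\ket{i}$, while registers $1,\dots,m-1$ are $n$-qubit ancillas initialized to $\ket{0}^{\otimes n}$. In addition, each $U_{A_k}$ has its own $\tilde a_k$-qubit ancilla register. The circuit first applies $GHZ_m^n$, which copies the computational basis state $\ket{j}$ of register $0$ into the other registers, $\ket{j}\ket{0}^{\otimes (m-1)n}\mapsto\ket{j}^{\otimes m}$. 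It then applies $\bigotimes_k U_{A_k}$ in parallel and finally $(GHZ_m^n)^\dagger$. The ancilla count is therefore $(m-1)n+\sum_k\tilde a_k$, as claimed.

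\emph{Exact case.} First assume $\tilde\epsilon_k=0$ and write $B_k=(\bra{0}^{\tilde a_k}\otimes I)U_{A_k}(\ket{0}^{\tilde a_k}\otimes I)=A_k/\tilde\alpha_k$. The $(i,j)$ entry of the block is
\[
\bra{i}\bra{0}^{\otimes N}U\ket{0}^{\otimes N}\ket{j}=\Big(\bra{i}^{\otimes m}\Big)\Big(\bigotimes_k B_k\Big)\Big(\ket{j}^{\otimes m}\Big)=\prod_{k}\frac{a^{(k)}_{ij}}{\tilde\alpha_k},
\]
where $N$ is the total ancilla count. This uses $\bra{0}^{\otimes (m-1)n}\bra{i}(GHZ_m^n)^\dagger=\bra{i}^{\otimes m}$ on the relevant subspace. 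The only care needed is that $GHZ_m^n$ restricted to inputs with ancillas in $\ket{0}$ acts as the copy map. It is built from CNOT fan-outs arranged in a tree of depth $\mathcal{O}(\log m)$, so on basis states it is exactly this isometry, and its adjoint composed with the projector gives $\bra{i}^{\otimes m}$. Equivalently, one can argue by induction on $m$ using the two-matrix lemma: the $(m-1)$-fold product followed by one more pairwise QHP gives the same block, since the CNOT-copy structure is identical. Either way, the block equals $\bigcirc_k A_k/\prod_k\tilde\alpha_k$, which gives $\alpha=\prod_k\tilde\alpha_k$.

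\emph{Errors.} Now let $B_k=(A_k+E_k)/\tilde\alpha_k$ with $\|E_k\|\le\tilde\epsilon_k$ and $\|B_k\|\le1$. Writing $V$ for the copy isometry $\ket{j}\mapsto\ket{j}^{\otimes m}$, the implemented block is $V^\dagger(\bigotimes_k B_k)V$, while the ideal block is $V^\dagger(\bigotimes_k A_k/\tilde\alpha_k)V$. We telescope the tensor product:
\[
\bigotimes_k B_k-\bigotimes_k \tfrac{A_k}{\tilde\alpha_k}=\sum_{k}\Big(\bigotimes_{l<k}\tfrac{A_l}{\tilde\alpha_l}\Big)\otimes\big(B_k-\tfrac{A_k}{\tilde\alpha_k}\big)\otimes\Big(\bigotimes_{l>k}B_l\Big).
\]
Each summand has spectral norm at most $\tilde\epsilon_k/\tilde\alpha_k$, because $\|A_l\|\le\tilde\alpha_l$ by the block-encoding definition and $\|B_l\|\le1$. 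Since $V$ is an isometry, $\|V^\dagger XV\|\le\|X\|$. Multiplying by $\alpha=\prod\tilde\alpha_k$ then yields $\epsilon=(\sum_k\tilde\epsilon_k/\tilde\alpha_k)\prod_k\tilde\alpha_k$. The step most likely to need care is the claim $\|A_l\|\le\tilde\alpha_l$ for the unprimed factors. If the definition is taken only up to $\epsilon$, one instead orders the telescope so that every bounded factor is a $B_l$ (norm $\le 1$) or an exact $A_l/\tilde\alpha_l$ with $\alpha\ge\|A\|$ assumed, as the definition states.

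Finally, the depth $\mathcal{O}(\log m)$ and size $\mathcal{O}(nm)$ of the non-oracle part follow from the cited GHZ construction applied twice.
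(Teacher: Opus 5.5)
Your proposal is correct and follows the paper's proof: use $GHZ_m^n$ to copy $\ket{j}$ and $\ket{i}$ into all registers, read off $\prod_k a^{(k)}_{ij}/\tilde\alpha_k$, and telescope the difference using $\|A_k\|\le\tilde\alpha_k$ and $\|\bar A_k\|\le\tilde\alpha_k$. The only difference is that you telescope the tensor product and then compress with the copy isometry $V$, which makes explicit the bound $\|X\circ Y\|\le\|X\|\,\|Y\|$ that the paper's final inequality uses without stating it.
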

\begin{figure}[htbp]
    \centering
    \includegraphics[width=0.75\linewidth]{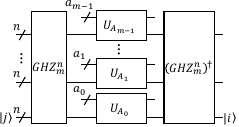}
    \caption{Quantum circuit for the Hadamard product of $d$ $n$-qubit matrices.}
    \label{circuit: QHP of d matrices}
\end{figure}

All block encodings in Corollary~\ref{corollary: QHP of m matrices} are applied in parallel. If the entire circuit is controlled by additional qubits, this parallelism will be lost and the depth will become additive.

\subsubsection{Block Encoding of All-Ones Matrix}\label{subsec: block encoding of all-ones matrix}
The constant term of a Hadamard matrix function involves the all-ones matrix $J$. The following elementary LCU construction realizes its block encoding with the optimal complexity.

\begin{lemma}[Constant-Depth Block Encoding of All-Ones Matrix]\label{lem: block encoding of all-ones matrix}
    Let $J\in\mathbb{C}^{2^n\times2^n}$ be an all-ones matrix. The circuit in Fig.~\ref{circuit: block encoding of all-ones matrix} implements a $\left(2^n,n,0\right)$-block-encoding of $J$. 
\end{lemma}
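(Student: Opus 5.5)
The circuit in Fig.~\ref{circuit: block encoding of all-ones matrix} is not reproduced here. I will assume it is the standard LCU-type construction: on the data register and $n$ fresh ancillas, apply $H^{\otimes n}$ to the data register, then a layer of $n$ transversal SWAP gates (or CNOTs) exchanging data and ancilla, then $H^{\otimes n}$ on the data register again. This has constant depth and $\mathcal{O}(n)$ size. The plan is to compute the top-left block of this unitary directly and check that it equals $J/2^n$.

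The key identity is $J = 2^n\ket{+^n}\bra{+^n}$, with $\ket{+^n}=H^{\otimes n}\ket{0^n}$, so $J/2^n=H^{\otimes n}\ket{0^n}\bra{0^n}H^{\otimes n}$. It therefore suffices to show that the middle operation $W$, acting on ancilla and data, has top-left block $(\bra{0^n}\otimes I)W(\ket{0^n}\otimes I)=\ket{0^n}\bra{0^n}$.

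For the SWAP version this is immediate. $W(\ket{0^n}_a\ket{j})=\ket{j}_a\ket{0^n}$, and projecting the ancilla onto $\bra{0^n}$ gives $\delta_{j,0}\ket{0^n}$. This is exactly the projector $\ket{0^n}\bra{0^n}$ on the data register.

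Then I will sandwich with the Hadamards, which act only on the data register and so commute with the ancilla projection. This gives $(\bra{0^n}\otimes I)U(\ket{0^n}\otimes I)=H^{\otimes n}\ket{0^n}\bra{0^n}H^{\otimes n}=J/2^n$ exactly. That is a $(2^n,n,0)$-block-encoding, since $\alpha=2^n=\|J\|$ meets the requirement $\alpha\ge\|A\|$ and the error is zero.

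If the figure instead uses CNOTs from data to ancilla, the same computation applies. The map is $\ket{0^n}_a\ket{j}\mapsto\ket{j}_a\ket{j}$, and projecting onto $\bra{0^n}_a$ again leaves $\delta_{j,0}\ket{0}$. A cost count then confirms constant depth: $3$ layers of $n$ parallel gates.

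The main obstacle is only matching the exact gate ordering and ancilla placement of the figure, including little-endian conventions. The algebra itself is a one-line projector identity.
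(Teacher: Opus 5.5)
Your argument is correct and is essentially the paper's proof. Both compute the top-left block directly: the paper does it entry-wise as $\bigl(\bra{0}^{\otimes n}\bra{i}\bigr)U\bigl(\ket{0}^{\otimes n}\ket{j}\bigr)=1/2^n$, while you do it in operator form via $J/2^n=H^{\otimes n}\ket{0^n}\bra{0^n}H^{\otimes n}$. The figure actually puts the Hadamards on the ancilla register, not the data register, but $(H^{\otimes n}\otimes I)\operatorname{SWAP}(H^{\otimes n}\otimes I)=(H^{\otimes n}\otimes H^{\otimes n})\operatorname{SWAP}=(I\otimes H^{\otimes n})\operatorname{SWAP}(I\otimes H^{\otimes n})$, so it is the same unitary and your computation applies verbatim.
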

\begin{figure}[htbp]
    \centering
    \begin{tikzpicture}
        \begin{yquant}
            qubit {} anc;
            qubit {$\ket{j}$} j;
            
            ["north:$n$" 
            {font=\protect\footnotesize, inner sep=0pt}]
            slash anc,j;

            hspace {5pt} -;
            box {$H^{\otimes n}$} anc;
            hspace {5pt} -;
            swap (anc,j);
            hspace {5pt} -;
            box {$H^{\otimes n}$} anc;
            hspace {5pt} -;
            
            output {$\ket{i}$} j;
        \end{yquant}
    \end{tikzpicture}
    \caption{Quantum circuit for block encoding of all-ones matrix.}
    \label{circuit: block encoding of all-ones matrix}
\end{figure}
\begin{proof}
    The encoded matrix can be computed as follows:
    \begin{equation*}
        \begin{aligned}
            &\left(\bra{0}^{\otimes n}\bra{i}\right) \left(H^{\otimes n}\otimes I_{2^n}\right) \operatorname{SWAP} \\
            &\cdot \left(H^{\otimes n}\otimes I_{2^n}\right) \left(\ket{0}^{\otimes n}\ket{j}\right) \\
            =& \frac{1}{2^n} \left(\sum_{l=0}^{2^n-1}\bra{l}\bra{i}\right) \left(\sum_{l=0}^{2^n-1}\ket{j}\ket{l}\right) \\
            =& \frac{1}{2^n}.
        \end{aligned}  
    \end{equation*}
\end{proof}

The block encoding of Lemma~\ref{lem: block encoding of all-ones matrix} is optimal in the following rigorous sense:
\begin{itemize}
    \item Normalization factor: 
    $\alpha = 2^n$ is optimal since $\|J\|_2 = 2^n$.
    
    \item Encoding error: 
    The construction is exact ($\epsilon=0$), which is clearly optimal.
    
    \item Circuit depth: 
    The construction uses $\Omega(1)$ depth, which is constant.
    
    \item Circuit size: 
    The circuit uses $\Theta(n)$ elementary gates. 
    To see that this is asymptotically optimal, observe that every data qubit must be acted upon by at least one gate.
    If some data qubit $q$ is never touched, then the encoded matrix element $\left(\bra{0}^{\otimes a}\bra{i}\right) U_J \left(\ket{0}^{\otimes a}\ket{j}\right)$ necessarily contains a factor $\delta_{i_q,j_q}$.
    This contradicts the requirement that every entry of $J/2^n$ equals $1/2^n$, independent of both $i_q$ and $j_q$.
    In the standard gate set $\{U(2),\mathrm{CNOT}\}$, each gate acts on at most two data qubits.
    Consequently at least $\Omega(n)$ gates are required.
    The present construction matches this lower bound.
\end{itemize}
Therefore, Lemma~\ref{lem: block encoding of all-ones matrix} achieves the optimal trade-off among normalization, depth, size, and ancilla count within the stated resource model.

\section{Quantum Hadamard Matrix Function}\label{sec: QHMF}
Our core algorithmic contributions in this work exploit the fundamental theorem of algebra to achieve substantially improved resource efficiency. We begin with the most depth-efficient construction, factorization via linear factors, which achieves logarithmic circuit depth at the cost of exponential ancilla and a large normalization factor. To mitigate this overhead, we introduce a tunable framework that decomposes the polynomial into low-degree factors and then develop a binary-tree-based subroutine that serves as the building block for implementing the factors, offering a favorable middle ground in the depth–ancilla trade-off.

\subsection{Logarithmic-depth Factorization Construction}\label{sec: factorization-based QHMF}
We begin by exploring the fundamental limit of circuit depth for implementing a Hadamard matrix function. The key observation is that by the fundamental theorem of algebra, any polynomial can be decomposed into linear factors over the complex numbers. That is, a degree-$(2^d-1)$ polynomial $P(x)=\sum_{k=0}^{2^d-1} c_k x^k$ with $c_{2^d-1}\neq 0$ can be written as
\begin{equation*}
    P(x) = c_{2^d-1} \prod_{k=0}^{2^d-2} (x - r_k),
\end{equation*}
where $\{r_k\}_{k=0}^{2^d-2}$ are the roots of $P$. Since the Hadamard product distributes over addition and scalar multiplication, the corresponding matrix polynomial factorizes as
\begin{equation*}
    P(A) = c_{2^d-1} \bigcirc_{k=0}^{2^d-2} (A - r_k J).
\end{equation*}
This factorization reduces QHMF to the implementation of a sequence of linear factors $A - r_k J$, each of which is a simple linear combination of $A$ and $J$. We assume the roots are provided classically~\cite{kerner1966ein, aberth1973iteration, jenkins1970three, horn1999matrix}; for high-degree polynomials, their computation may itself be non-trivial but is one-time.

Since all linear factors are independent up to the Hadamard product, we can implement them in parallel. The $2^d-1$ factors are encoded on $2^d-1$ separate register pairs, each consisting of one index qubit, the $\hat a_0=\max\{a_0,n\}$ ancilla and the $n$ data qubits for $U_A,U_J$. Finally, we apply the QHP to combine them. The resulting overall circuit is depicted in Fig.~\ref{circuit: factorization-based QHMF}.
\begin{figure*}[htbp]
    \centering
    \includegraphics[width=0.8\linewidth]{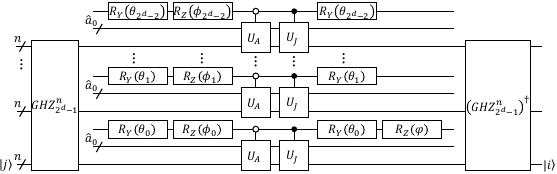}
    \caption{Quantum circuit for the factorization-based QHMF, where $\hat{a}_0=\max\{a_0,n\}$.}
    \label{circuit: factorization-based QHMF}
\end{figure*}

We formalize this construction in the following theorem.
\begin{theorem}[Factorization-based QHMF]\label{thm: factorization-based QHMF}
    Let $P(x)=\sum_{k=0}^{2^d-1}c_kx^k$ be a $\left(2^d-1\right)$-degree polynomial with roots $\left\{r_k\right\}_{k=0}^{2^d-2}$. In the single-oracle model, assume access to an $(\alpha_0,a_0,\epsilon_0)$-block-encoding $U_A$ of $A\in\mathbb{C}^{2^n\times2^n}$ together with its controlled version. No block encodings of higher Hadamard powers $A^{\circ 2^l}$ for $l\ge1$ are assumed. Then there exist angle sequences
    \begin{gather*}
        \left\{\theta_k = 2\arctan\sqrt{\frac{2^n\left|r_k\right|}{\alpha_0}}\right\}_{k=0}^{2^d-2}, \\
        \left\{\phi_k = \operatorname{Arg}\left(r_k\right)\right\}_{k=0}^{2^d-2}, \\
        \varphi = -\sum_{k=0}^{2^d-2}\operatorname{Arg}\left(r_k\right) - 2\operatorname{Arg}\left(c_{2^d-1}\right)
    \end{gather*}
    such that the circuit in Fig.~\ref{circuit: factorization-based QHMF} implements the Hadamard matrix function $P(A)$ with a normalization factor
    \begin{equation*}
        \alpha_{\rm Fac} =\left|c_{2^d-1}\right|\prod_{k=0}^{2^d-2}\left(\alpha_0+2^n\left|r_k\right|\right).
    \end{equation*}
\end{theorem}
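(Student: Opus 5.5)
The proof proceeds in three steps. First, each linear factor $A-r_kJ$ is block-encoded by a two-term linear combination of unitaries on one index qubit. Second, the $2^d-1$ resulting encodings are combined with Corollary~\ref{corollary: QHP of m matrices}. Third, the leading coefficient and the accumulated phases are absorbed into a global phase and the normalization.

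\textbf{Step 1: a single factor.} We write $A-r_kJ=\alpha_0\,(A/\alpha_0)+2^n|r_k|\,e^{i(\phi_k+\pi)}\,(J/2^n)$, with $\phi_k=\operatorname{Arg}(r_k)$. Here $U_A$ encodes $A/\alpha_0$, and Lemma~\ref{lem: block encoding of all-ones matrix} encodes $J/2^n$ with $n$ ancillas. Padding both to $\hat a_0=\max\{a_0,n\}$ ancillas lets them share one ancilla register.

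The index qubit is prepared with $R_y(\theta_k)$. Then $U_A$ is applied controlled on $\ket{0}$, and a phase $e^{i\phi_k}$ together with $U_J$ is applied controlled on $\ket{1}$; the sign needed for the minus is supplied by a $Z$ or an extra $\pi$ phase. Finally $R_y(\theta_k)^\dagger$ is applied. The top-left block of this circuit is
\[
\cos^2(\theta_k/2)\,\frac{A}{\alpha_0}-\sin^2(\theta_k/2)\,e^{i\phi_k}\frac{J}{2^n}.
\]
Choosing $\tan^2(\theta_k/2)=2^n|r_k|/\alpha_0$ makes this equal to $(A-r_kJ)/(\alpha_0+2^n|r_k|)$, possibly up to a unit phase. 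This phase depends on the exact convention used in the figure. The theorem's $\varphi$ carries a $-\sum_k\operatorname{Arg}(r_k)$ term, which suggests that in the figure's convention each factor is encoded as $e^{-i\phi_k}(A-r_kJ)$ or similar. I would fix the convention by computing $\bra{0}\bra{0}^{\hat a_0}\bra{i}\,(\cdot)\,\ket{0}\ket{0}^{\hat a_0}\ket{j}$ directly.

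The outcome of Step 1 is a $(\alpha_0+2^n|r_k|,\,\hat a_0+1,\,\cdot)$-block-encoding of $e^{i\psi_k}(A-r_kJ)$ for some explicit phase $\psi_k$.

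\textbf{Step 2: combining the factors.} Corollary~\ref{corollary: QHP of m matrices} applied with $m=2^d-1$ gives a block-encoding of $\bigcirc_k e^{i\psi_k}(A-r_kJ)=e^{i\sum_k\psi_k}\bigcirc_k(A-r_kJ)$. Its normalization is $\prod_k(\alpha_0+2^n|r_k|)$, since the normalizations multiply.

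The Hadamard product distributes over linear combinations. Together with the factorization $P(A)=c_{2^d-1}\bigcirc_k(A-r_kJ)$, this gives
\[
P(A)=|c_{2^d-1}|\,e^{i\operatorname{Arg}c_{2^d-1}}\,e^{-i\sum_k\psi_k}\,(\text{encoded block})\cdot\prod_k(\alpha_0+2^n|r_k|).
\]

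\textbf{Step 3: fixing the phase.} A final phase gate $\varphi$ on an ancilla, applied in the figure's convention, cancels $\sum_k\psi_k$ and inserts $\operatorname{Arg}c_{2^d-1}$. The factor $2$ in front of $\operatorname{Arg}c_{2^d-1}$ most likely arises because the phase is implemented as a $Z$-rotation $R_z(\varphi)$, which contributes $e^{-i\varphi/2}$ on $\ket{0}$. With $\varphi=-\sum_k\phi_k-2\operatorname{Arg}c_{2^d-1}$ and that convention, the phases match. The encoded matrix is then exactly $P(A)/\alpha_{\rm Fac}$ with $\alpha_{\rm Fac}=|c_{2^d-1}|\prod_k(\alpha_0+2^n|r_k|)$.

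\textbf{Main obstacle.} The hard part is this phase bookkeeping: reconciling how the controlled phases, the sign of $-r_k$, and the final rotation in Fig.~\ref{circuit: factorization-based QHMF} produce exactly the stated $\varphi$. I would settle it by an explicit matrix-element computation for a single factor, then multiply the resulting phases across all factors.
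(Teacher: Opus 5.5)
Your plan is the paper's proof in modular form: the paper also encodes each factor as a one-qubit LCU of controlled $U_A$ (on $\ket{0}$) and controlled $U_J$ (on $\ket{1}$) with $\tan^2(\theta_k/2)=2^n|r_k|/\alpha_0$, gets the minus sign from an unpreparation that leaves $\cos\frac{\theta_k}{2}\bra{0}-\sin\frac{\theta_k}{2}\bra{1}$ (equivalent to your extra $Z$), and merges the factors through the GHZ copying of Corollary~\ref{corollary: QHP of m matrices}, written out as one inner-product computation. The phase bookkeeping you deferred needs one correction: $R_Z(\phi_k)$ splits the phase as $e^{\mp i\phi_k/2}$ between the two branches, so each factor comes out as $e^{-i\phi_k/2}(A-r_kJ)/(\alpha_0+2^n|r_k|)$ rather than $e^{-i\phi_k}(\cdots)$, and with the final rotation contributing $e^{-i\varphi/2}$ the encoded entry is $e^{-i(\varphi+\sum_k\phi_k)/2-i\operatorname{Arg}(c_{2^d-1})}P(a_{ij})/\alpha_{\rm Fac}$, whose phase vanishes exactly for the stated $\varphi$.
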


The proof of Theorem~\ref{thm: factorization-based QHMF} follows from a direct computation of the block-encoded matrix for each factor and multiplying the results via the Hadamard product. A detailed derivation is provided in Appendix~\ref{sec: proof of factorization-based QHMF}. 

We note that the normalization factor of factorization-based QHMF is large. Suppose that $\alpha_0 = \mathcal{O}\left(2^n\right)$, we have $\alpha_{\rm Fac} = \mathcal{O}\left(2^{n2^d}\right)$. This is an inherent cost of the factorization approach. For larger $n$ and $d$, this is unacceptable since it affects the success probability of post-selection. In the next section, we provide the trade-off construction that can reduce it.

We now analyze the resource costs of this construction. The circuit consists of three types of operations:
\begin{enumerate}
    \item GHZ-type operators $\operatorname{GHZ}_{2^d-1}^n$. Each consists of $n$ generators of $(2^d-1)$-qubit GHZ states, where each generator contains $2^d-2$ CNOT gates. These can be implemented with depth $\Omega(d)$ and size $\Theta(n 2^d)$ using standard binary tree constructions~\cite{watts2019exponential}.
    \item Single-qubit rotations $R_Y(\theta_k)$, $R_Z(\phi_k)$, and a final $R_Z^\dagger(\varphi)$. They contribute $\mathcal{O}(1)$ to the depth and $\mathcal{O}(2^d)$ to the size.
    \item Controlled block-encodings $U_A$ and $U_J$ for each factor. Since they are both called once per factor, the total number of queries to $U_A$ is $2^d-1$, all in parallel. The additional depth and size are dominated by those of a single controlled $U_J$, which are both $\Theta(n)$.
\end{enumerate}

Summing up, we obtain the following complexity bounds.
\begin{corollary}[Complexity of Factorization-based QHMF]\label{cor: complexity of factorization-based QHMF}
	The factorization-based QHMF in Theorem~\ref{thm: factorization-based QHMF} can be implemented with:
    \begin{itemize}
        \item Query complexity: $Q_1^0=\mathcal{O}\left(2^d\right)$ in parallel.
        \item Additional gates: $D=\mathcal{O}\left(n+d\right)$,  $S=\mathcal{O}\left(n2^d\right)$.
        \item Ancilla count: $N=\mathcal{O}\left(2^d(n+a_0)\right)$.
    \end{itemize}
\end{corollary}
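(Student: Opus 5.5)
The plan is to tally the resources of the circuit in Fig.~\ref{circuit: factorization-based QHMF} layer by layer, using the three-part decomposition listed just before the statement. I would take the construction of Theorem~\ref{thm: factorization-based QHMF} as given and count each layer separately. Depths of sequential layers add, depths of parallel blocks take the maximum, and sizes always add.

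\textbf{Queries and ancillas.} Each of the $2^d-1$ linear factors $A-r_kJ$ is realized by a single-qubit LCU. It consists of $R_Y(\theta_k)$ and $R_Z(\phi_k)$ on an index qubit, a controlled $U_A$ and an anti-controlled $U_J$ on a register of $\hat a_0=\max\{a_0,n\}$ ancillas plus $n$ data qubits, and then the inverse preparation. So $U_A$ is queried exactly once per factor. Because the factors sit on disjoint register pairs, all $2^d-1$ queries act simultaneously, giving $Q_1^0=2^d-1=\mathcal{O}(2^d)$ in parallel.

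The ancilla count follows by summing over the factors. There are $2^d-1$ register pairs, each with one index qubit, $\hat a_0\le a_0+n$ block-encoding ancillas, and $n$ data qubits. All but one data register are ancillas from the viewpoint of the final QHP (Corollary~\ref{corollary: QHP of m matrices}, which contributes $(m-1)n$ extra qubits). This totals $\mathcal{O}(2^d(1+a_0+2n))=\mathcal{O}(2^d(n+a_0))$.

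\textbf{Depth and size.} There are three contributions:
\begin{itemize}
\item The two $\operatorname{GHZ}_{2^d-1}^n$ layers are built as $n$ parallel fan-out trees on $2^d-1$ qubits. They give depth $\mathcal{O}(\log 2^d)=\mathcal{O}(d)$ and size $\mathcal{O}(n2^d)$, as in \cite{watts2019exponential}.
\item The rotations and the final phase $R_Z^\dagger(\varphi)$ give depth $\mathcal{O}(1)$ and size $\mathcal{O}(2^d)$.
\item The controlled $U_J$ of Lemma~\ref{lem: block encoding of all-ones matrix} consists of $H^{\otimes n}$, a SWAP of two $n$-qubit registers, and $H^{\otimes n}$. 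Controlling it by one index qubit turns the $n$ swaps into $n$ Fredkin gates sharing one control. Done naively these are sequential, giving depth $\Theta(n)$ and size $\Theta(n)$ per factor. All factors run in parallel, so this layer costs depth $\mathcal{O}(n)$ and size $\mathcal{O}(n2^d)$.
\end{itemize}
Since the oracle-internal cost of controlled $U_A$ is excluded, summing the three layers sequentially gives $D=\mathcal{O}(d)+\mathcal{O}(1)+\mathcal{O}(n)=\mathcal{O}(n+d)$ and $S=\mathcal{O}(n2^d)$.

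\textbf{Main obstacle.} The main thing to check is the depth of the controlled $U_J$, because the remark after Corollary~\ref{corollary: QHP of m matrices} warns that adding control serializes a parallel circuit. I would first confirm that the Hadamard layers need not be controlled. The controlled part can be reduced to the controlled SWAP alone, since $H^{\otimes n}$ on a $\ket{0}$ register conjugated around an identity branch returns to $\ket{0}$. The Fredkin gates share a single control qubit, so each must be decomposed into $\mathcal{O}(1)$ elementary gates and applied one after another. That gives depth $\mathcal{O}(n)$, which already suffices for the claimed $\mathcal{O}(n+d)$ bound, so no fan-out of the control is needed.
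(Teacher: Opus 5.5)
Your proposal is correct and follows the paper's own argument. You use the same three-way tally over the circuit of Fig.~\ref{circuit: factorization-based QHMF}: two GHZ fan-out layers of depth $\mathcal{O}(d)$ and size $\mathcal{O}(n2^d)$, $\mathcal{O}(1)$-depth rotations, and a $\Theta(n)$-depth controlled $U_J$ per factor, with all $2^d-1$ factors and their queries running in parallel and the $U_J$ layer correctly counted as $\mathcal{O}(n2^d)$ in total size. Your extra reduction to a controlled SWAP alone is valid, but the reason is the operator identity $H^{\otimes n}H^{\otimes n}=I$ on the inactive branch, not that the register is in $\ket{0}$; it need not be, since that register is shared with $U_A$.
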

\begin{proof}
    The result is obviously obtained from the circuit in Fig.~\ref{circuit: factorization-based QHMF}.
\end{proof}

We now present the lower bound on query complexity for implementing the Hadamard matrix function provided only in the single-oracle model.
\begin{proposition}[Query Lower Bound in Single-Oracle Model]\label{pro: query lower bound}
    Let $P(x)=\sum_{k=0}^{2^d-1} c_k x^k$ be a $(2^d-1)$-degree polynomial with $c_{2^d-1}\neq0$, and $A$ be an $n$-qubit matrix. In the single-oracle model, assume that the only $A$-dependent operation available is a block encoding $U_A$ of $A$, together with its controlled version. Then any quantum circuit that implements the Hadamard matrix function $P(A)$ must make at least $\Omega(2^d)$ queries to $U_A$.
\end{proposition}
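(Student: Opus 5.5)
We plan to use a polynomial-degree argument, the standard way to lower-bound queries: matrix entries produced by a circuit are polynomials in the oracle's entries, with degree at most the number of queries. Suppose a circuit $C$ makes $T$ queries to $U_A$ (possibly controlled, and possibly to $U_A^\dagger$). Write $C = V_T\,\tilde U_T\,V_{T-1}\cdots V_1\,\tilde U_1\,V_0$, where each $V_t$ is independent of $A$ and each $\tilde U_t$ is $U_A$, $U_A^\dagger$, or a controlled version, acting on some subset of registers. Every entry of $C$, and therefore every entry of the block $\overline{P}(A)$ it encodes, is a sum of products that take at most one entry of $\tilde U_t$ from each $t$. So every entry of $\overline{P}(A)$ is a polynomial of total degree at most $T$ in the entries of $U_A$ and their complex conjugates.

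Next we restrict to a family of inputs on which the entries of $U_A$ depend \emph{linearly} on a scalar parameter. This lets us pass from degree in the entries of $U_A$ to degree in the entries $a_{ij}$. Concretely, we take $A = xE_{00}$, or $A = xJ/2^n$ scaled so that $\|A\|\le\alpha_0$, with $x$ real in a small interval. We then fix a specific block encoding, for instance the standard one-ancilla dilation
\[
U_A=\begin{pmatrix} A/\alpha_0 & \sqrt{I-AA^\dagger/\alpha_0^2}\\ \sqrt{I-A^\dagger A/\alpha_0^2} & -A^\dagger/\alpha_0\end{pmatrix}.
\]
The square-root blocks are not polynomial in $x$. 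To avoid this, we use the trigonometric parametrization $x=\alpha_0\cos\theta$, so that every entry of $U_A$ is a trigonometric polynomial of degree $1$ in $\theta$. Then every entry of $\overline{P}(A)$ is a trigonometric polynomial in $\theta$ of degree at most $T$.

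On the other hand, a correct implementation gives $\alpha\,\overline{P}(A)=P(A)$. Its $(0,0)$ entry is $P(\alpha_0\cos\theta)$ (up to the $J$-type terms, which are constant in $\theta$ for the $E_{00}$ choice plus $c_0$ elsewhere). Since $c_{2^d-1}\ne0$, the Chebyshev expansion shows this function has trigonometric degree exactly $2^d-1$: its leading term is $c_{2^d-1}\alpha_0^{2^d-1}2^{2-2^d}\cos((2^d-1)\theta)$. Hence $T\ge 2^d-1=\Omega(2^d)$. If the statement allows an $\epsilon$-approximate implementation, we would add a robustness step: an exact polynomial identity on an interval becomes an approximate one, and we would then invoke a Bernstein/Markov-type bound or just require $\epsilon$ smaller than the leading coefficient scale. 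We expect the statement intends exact implementation, so we keep this remark brief.

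The main obstacle is making the argument valid for \emph{every} admissible $U_A$, or at least justifying that exhibiting one valid block-encoding family suffices. The algorithm must work for any block encoding it is handed, so a single hard family is enough, and we will state this explicitly. We also need the controlled versions and $U_A^\dagger$ not to break the degree count. Controlled-$U_A$ is $\ket0\bra0\otimes I+\ket1\bra1\otimes U_A$, whose entries are still of degree at most $1$ in $\cos\theta,\sin\theta$, and $U_A^\dagger$ is handled by conjugation, which preserves the real trigonometric degree. With both points checked, the counting argument closes the proof.
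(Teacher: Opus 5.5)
Your proof is correct and uses the same polynomial method as the paper: restrict to a one-parameter family of inputs, show that each query raises the degree of the output entries by at most one, and compare with the degree-$(2^d-1)$ dependence of a target entry. The difference is in how you justify the per-query degree bound. The paper takes $A(t)=A_0+tE_{ij}$ and asserts that each query ``inserts the encoded matrix $A(t)$, which is linear in $t$''. That describes only the top-left block of $U_{A(t)}$. The remaining blocks of a unitary dilation (e.g.\ $\sqrt{I-AA^\dagger/\alpha_0^2}$) are in general not polynomial in $t$, and a circuit may route amplitude through them between queries, so the paper's induction is incomplete as written. You fix an explicit dilation and substitute $x=\alpha_0\cos\theta$. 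This makes every entry of $U_A$, of its controlled version, and of $U_A^\dagger$ a trigonometric polynomial of degree at most $1$, so the degree bound holds for the whole circuit. The nonzero Chebyshev leading coefficient $c_{2^d-1}\alpha_0^{2^d-1}2^{2-2^d}$ then gives $T\ge 2^d-1$, since two trigonometric polynomials that agree on a $\theta$-interval agree identically. The paper's version is shorter and phrased for a general base point $A_0$, but that generality buys nothing for a lower bound. Yours is the one that actually covers algorithms using the full oracle. Both arguments share one assumption that you should state explicitly: the normalization $\alpha$ must not depend on the input parameter $\theta$, which holds for every construction in the paper.
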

\begin{proof}
    Fix $i,j$ and consider the matrix family $A(t)=A_0+tE_{ij}$. 
    After $q$ queries to $U_{A(t)}$, the final amplitude is a polynomial in $t$ of degree at most $q$.
    This follows by induction on $q$: each query inserts the encoded matrix $A(t)$, which is linear in $t$, into the amplitude, and all other operations are $t$-independent. 
    Post-selecting the ancilla on $\ket{0}$ yields the $(i,j)$ matrix element, which is therefore a polynomial in $t$ of degree at most $q$.
    On the other hand, the target Hadamard matrix function satisfies $\bra{i}P(A(t))\ket{j} = P(t)$, which is a polynomial in $t$ of degree $2^d-1$ with nonzero leading coefficient. 
    Hence, we must have $q\ge 2^d-1$.
\end{proof}

Since the factorization-based construction uses exactly $2^d-1$ queries to $U_A$, it saturates the query lower bound of Proposition~\ref{pro: query lower bound} in the single-oracle model.
Moreover, its additional depth $\mathcal{O}(n+d)$ is independent of the system size $n$ and grows only logarithmically with the polynomial degree.

However, this extreme depth performance comes at a steep price: the ancilla count and the normalization factor grow exponentially with $d$. For near-term quantum devices with limited qubit counts, this exponential overhead is prohibitive. This motivates us to seek a tunable balance between depth and width, which we develop in the next section by decomposing the polynomial into a product of lower-degree factors and implementing each factor with a more efficient subroutine.

\subsection{Depth-Ancilla-Tunable Construction}\label{sec: trade-off}
The extreme parallelism of the factorization method implies that circuit depth can be reduced to a logarithmic depth in $d$ at the cost of an exponential number of ancilla and a large normalization factor. This extreme is often impractical for near-term devices. Conversely, if we wish to minimize them, we must reduce parallelism and reuse qubits, inevitably increasing the depth and size. This motivates the following question: \emph{can depth and size be traded against width and normalization factor in a controllable way?}

We answer this affirmatively by decomposing the target polynomial into a Hadamard product of multiple lower-degree factors. Let $P(x)$ be a polynomial of degree $2^d-1$. For any integer $m$ with $1 \leq m \leq 2^d-1$, we factor $P$ as
\begin{equation*}
    P(x) = \prod_{s=0}^{m-1} P^{(s)}(x),
\end{equation*}
where each $P^{(s)}$ is a polynomial with $1\leq\deg P^{(s)}\leq K = \left\lceil \frac{2^d-1}{m} \right\rceil$. Such a factorization always exists by grouping the roots of $P$ into $m$ disjoint subsets $\{\mathcal R_s\}_{s=0}^{m-1}$ and defining $P^{(s)}(x) = c^{(s)} \prod_{r_{k_s} \in \mathcal R_s} (x - r_{k_s})$, with $\prod_s c^{(s)} = c_{2^d-1}$. The grouping is arbitrary; for resource optimization, we may choose balanced degrees to minimize the maximum degree.

Leveraging the distributivity of the Hadamard product, $P(A)$ factorizes accordingly:
\begin{equation*}
    P(A) = \bigcirc_{s=0}^{m-1} P^{(s)}(A).
\end{equation*}
Thus, the overall implementation reduces to (i) realizing each low-degree Hadamard matrix function $P^{(s)}(A)$ independently with low ancilla overhead, and (ii) combining them via the QHP. This decomposition provides a flexible knob: by adjusting $m$, we can interpolate between the two extremes:
\begin{itemize}
    \item $m = 2^d-1$: each $P^{(k)}$ is linear. This recovers the factorization method of Sec.~\ref{sec: factorization-based QHMF} with low depth and high ancilla.
    \item $m = 1$: a single high-degree polynomial. This reduces to the binary-tree method we will present in Sec.~\ref{sec: binary-tree-based QHMF} with high depth and low ancilla.
    \item Intermediate $m$: a tunable trade-off, allowing the user to choose a point that best fits the available qubit count and coherence time.
\end{itemize}

The overall circuit architecture is shown in Fig.~\ref{circuit: trade-off}. The $m$ factors are implemented in parallel on separate registers; each factor $P^{(k)}(A)$ is implemented by a subroutine, and the results are combined via the QHP. 

\begin{figure}[htbp]
	\centering
	\includegraphics[width=0.75\linewidth]{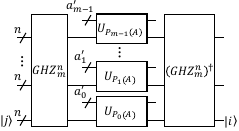}
	\caption{General trade-off circuit for implementing $P(A) = \bigcirc_{s=0}^{m-1} P^{(s)}(A)$. Each $P^{(s)}(A)$ is implemented by a low-degree Hadamard-polynomial circuit. The parallel Hadamard product combines the results.}
	\label{circuit: trade-off}
\end{figure}

We now derive the resource costs of this generic framework as functions of $m$, assuming that each $P^{(s)}(A)$ is implemented by a subroutine with known complexity. Let $\mathcal{Q}_q^l(K)$, $\mathcal{D}(K)$, $\mathcal{S}(K)$, and $\mathcal{N}(K)$ denote the query to $q$-qubit controlled $U_{A^{\circ2^l}}$, additional depth, additional size, and ancilla count of a subroutine that implements a Hadamard matrix function of degree $K$, where $l\in\{0,1,\cdots,\left\lceil\log K\right\rceil-1\}$. Let $\alpha(P(A))$ be the normalization factor of a Hadamard matrix function $P(A)$. Then the overall circuit in Fig.~\ref{circuit: trade-off} has:
\begin{itemize}
    \item Query complexity: $Q_q^l(m) = m \mathcal{Q}_q^l(K)$ for $q\geq0$.
    \item Additional depth: $D(m) = \mathcal{D}(K) + \mathcal{O}(\log m)$.
    \item Additional size: $S(m) = m \mathcal{S}(K) + \mathcal{O}(n m)$.
    \item Ancilla count: $N(m) = m \mathcal{N}(K) + \mathcal{O}(n m)$.
    \item Normalization factor: $\alpha_{\rm TO}(m) = \prod_{s=0}^{m-1} \alpha\left(P^{(s)}(A)\right)$.
\end{itemize}
 
This framework is summarized in the following theorem.
\begin{theorem}[Depth-Ancilla Trade-Off QHMF]\label{thm: trade-off QHMF}
    Let $m$ be an integer with $1\leq m \leq 2^d-1$. Set $K = \left\lceil \frac{2^d-1}{m} \right\rceil$. Let $P(x)=\prod_{s=0}^{m-1}P^{(s)}(x)$ be a $\left(2^d-1\right)$-degree polynomial, where each $P^{(s)}$ has degree no more than $K$. In the power-oracle model, assume access to $\left(\alpha_l, a_l, \epsilon_l\right)$-block-encodings of $A^{\circ 2^l}$ for $l\in\left\{0,1,\dots,\left\lceil\log K\right\rceil-1\right\}$ together with their controlled versions. Then the circuit in Fig~\ref{circuit: trade-off} implements the Hadamard matrix function $P(A)$ with:
    \begin{align*}
        Q_q^l\left(m\right) &= m \mathcal{Q}_q^l\left(\left\lceil\frac{2^d-1}{m}\right\rceil\right) \mbox{ for } q\geq0, \\
        D\left(m\right) &= \mathcal{D}\left(\left\lceil\frac{2^d-1}{m}\right\rceil\right) + \mathcal{O}(\log m), \\
        S\left(m\right) &= m \mathcal{S}\left(\left\lceil\frac{2^d-1}{m}\right\rceil\right) + \mathcal{O}(n m), \\
        N\left(m\right) &= m \mathcal{N}\left(\left\lceil\frac{2^d-1}{m}\right\rceil\right) + \mathcal{O}(n m), \\
        \alpha_{\rm TO}(m) &= \prod_{s=0}^{m-1} \alpha\left(P^{(s)}(A)\right).
    \end{align*}
\end{theorem}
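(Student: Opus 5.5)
The proof has two parts. First we show that the circuit is correct, and then we account for its resources term by term. For correctness, we begin from the factorization $P(x)=\prod_{s=0}^{m-1}P^{(s)}(x)$ obtained by grouping roots. Since the Hadamard product is entry-wise, $\big(\bigcirc_s P^{(s)}(A)\big)_{ij}=\prod_s P^{(s)}(a_{ij})=P(a_{ij})$, and so $P(A)=\bigcirc_{s=0}^{m-1}P^{(s)}(A)$. Suppose each subroutine provides an $(\alpha(P^{(s)}(A)),\mathcal{N}(K),\epsilon^{(s)})$-block-encoding $U_{P^{(s)}(A)}$ on its own register. Then Corollary~\ref{corollary: QHP of m matrices} applies directly with $\tilde\alpha_s=\alpha(P^{(s)}(A))$ and $\tilde a_s=\mathcal{N}(K)$. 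It yields a block encoding of $\bigcirc_s P^{(s)}(A)=P(A)$ with normalization $\prod_s\alpha(P^{(s)}(A))=\alpha_{\rm TO}(m)$ and ancilla count $(m-1)n+m\mathcal{N}(K)$, which is the claimed $N(m)=m\mathcal{N}(K)+\mathcal{O}(nm)$. The error term of the corollary carries through in the same way, but the theorem does not quantify it, so we only remark on it.

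The counts for queries, depth and size come from the structure of Fig.~\ref{circuit: trade-off}. The $m$ subroutines act on disjoint registers and each one queries the power oracles $\mathcal{Q}_q^l(K)$ times. Summing over $s$ gives $Q_q^l(m)=m\mathcal{Q}_q^l(K)$, and the level of control is unchanged because the subroutines are not themselves controlled. Their sizes add to $m\mathcal{S}(K)$. The QHP wrapper consists of a $\operatorname{GHZ}_m^n$ operator and its inverse. These contribute size $\mathcal{O}(nm)$ and depth $\mathcal{O}(\log m)$ by~\cite{watts2019exponential}. Because the subroutines run in parallel, the depth is the maximum of their depths plus that of the wrapper. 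This gives $D(m)=\mathcal{D}(K)+\mathcal{O}(\log m)$, provided $\mathcal{D}$ is monotone in the degree, since some factors may have degree smaller than $K$.

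The main subtlety concerns uniform degree bounds and oracle availability. With a balanced grouping of the $2^d-1$ roots into $m$ classes, every class has size at most $K=\lceil(2^d-1)/m\rceil$. Each factor of degree at most $K$ needs only the Hadamard powers $A^{\circ2^l}$ with $2^l\le K$, that is, $l\le\lceil\log K\rceil-1$. These are exactly the oracles assumed in the power-oracle model. I would state explicitly that $\mathcal{Q},\mathcal{D},\mathcal{S},\mathcal{N}$ are taken as nondecreasing in the degree, or alternatively that every factor is padded to degree $K$, so that bounding each factor by its degree-$K$ cost is legitimate. A second point needs care: parallel composition preserves the $m$-fold additivity of queries and ancillas only if the subroutines share no ancillas. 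This holds by construction, because each factor is placed on its own register pair.
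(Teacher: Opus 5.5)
Your route is the paper's route. The paper's proof only substitutes $K=\lceil(2^d-1)/m\rceil$ into per-factor costs read off from Fig.~\ref{circuit: trade-off}: independent factors on disjoint registers, combined by the parallel QHP of Corollary~\ref{corollary: QHP of m matrices} with its $\operatorname{GHZ}_m^n$ wrapper. Your correctness argument through Corollary~\ref{corollary: QHP of m matrices}, and your monotonicity/padding caveat for factors of degree below $K$, make explicit what the paper leaves implicit.

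\textbf{The gap is in your oracle-availability check.} The condition $2^l\le K$ is equivalent to $l\le\lfloor\log K\rfloor=\lceil\log(K+1)\rceil-1$. This coincides with $l\le\lceil\log K\rceil-1$ only when $K$ is not a power of two. When $K=2^t$, a factor of degree exactly $K$ needs $A^{\circ 2^t}$, which lies outside the assumed range $\{0,\dots,t-1\}$. Two concrete cases:
\begin{itemize}
\item For $d=3$, $m=2$ you get $K=4$. The degrees sum to $7$ with each at most $4$, so one factor has degree exactly $4$, yet only $U_A$ and $U_{A^{\circ 2}}$ are assumed.
\item At the factorization end $m=2^d-1$ you get $K=1$. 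The assumed range $\{0,\dots,-1\}$ is empty, although every linear factor queries $U_A$.
\end{itemize}
The binary-tree subroutine on a degree-$K$ factor has $\lceil\log(K+1)\rceil$ levels. It therefore queries $U_{A^{\circ 2^l}}$ for $l\le\lceil\log(K+1)\rceil-1$, which matches the paper's own index $Q^l_{\lceil\log(K+1)\rceil-l}$ in Theorem~\ref{thm: complexity of trade-off QHMF}.

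So do not assert that the needed powers are ``exactly the oracles assumed''. State instead that the hypothesis must be read with $l\in\{0,\dots,\lceil\log(K+1)\rceil-1\}$. The alternative is to synthesize $A^{\circ K}$ from $U_{A^{\circ K/2}}$ by an extra QHP, which changes $\mathcal{Q}$ and $\mathcal{N}$ and is impossible for $K=1$. The paper's one-line proof never checks this, so the defect lies in the statement's hypothesis rather than in your method. The rest of your accounting for queries, depth, size, ancillas and normalization goes through.
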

\begin{proof}
    By substitude $K = \left\lceil \frac{2^d-1}{m} \right\rceil$ into $Q_q^l\left(m\right)$, $D\left(m\right)$, $S\left(m\right)$, and $N\left(m\right)$, the result holds.
\end{proof}

The trade-off framework is the central architectural contribution of this work. It decouples the complexity of the polynomial degree from the hardware constraints, allowing the user to select $m$ based on the available resources.

\subsection{Binary-tree Construction for Low-Degree Factors}\label{sec: binary-tree-based QHMF}
The trade-off framework reduces the problem to implementing a collection of low-degree Hadamard matrix functions. For these factors, we require a subroutine that strikes a balance between depth and ancilla, avoiding the exponential ancilla overhead of the full factorization while still offering better performance in terms of depth. We now present a Horner-inspired binary-tree construction that achieves exactly this balance.

\begin{figure*}[htbp]
	\centering
	\includegraphics[width=0.85\textwidth]{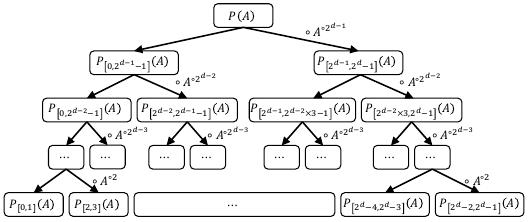}
	\caption{Binary-tree division of the Hadamard matrix function $P\left(A\right)$.}
	\label{fig: binary-tree division}
\end{figure*}

The key idea is to evaluate the polynomial in a hierarchical manner, reusing intermediate Hadamard powers, and thereby reducing queries to higher-order powers. This is analogous to Horner's rule for scalar polynomials, but adapted to the Hadamard-product setting.

For a $\left(2^d-1\right)$-degree Hadamard matrix function $P\left(A\right)=\sum_{k=0}^{2^d-1} c_k A^{\circ k}$ and $i,j \in \{0,1,\cdots,2^d-1\}$ with $i<j$, where $c_k\in\mathbb{C}$, we denote $P_{[i,j]}\left(A\right) = \sum_{k=i}^j c_k A^{\circ (k-i)}$.

Firstly, we divide $P\left(A\right)$ into two parts,
\begin{equation*}
	\begin{aligned}
		P\left(A\right) 
		= P_{\left[0, 2^{d-1}-1\right]}\left(A\right) + P_{\left[2^{d-1}, 2^{d}-1\right]}\left(A\right)\circ A^{\circ 2^{d-1}},
	\end{aligned}
\end{equation*}
where $P_{\left[0, 2^{d-1}-1\right]}\left(A\right)$ and $P_{\left[2^{d-1}, 2^{d}-1\right]}\left(A\right)$ have degrees no more than $2^{d-1}-1$. Then, dividing $P_{\left[0, 2^{d-1}-1\right]}\left(A\right)$ and $P_{\left[2^{d-1}, 2^{d}-1\right]}\left(A\right)$ into two parts, respectively, yields that
\begin{equation*}
	\begin{aligned}
	    P_{\left[0, 2^{d-1}-1\right]}\left(A\right)
		=& P_{\left[0, 2^{d-2}-1\right]}\left(A\right) \\
        &+ P_{\left[2^{d-2}, 2^{d-1}-1\right]}\left(A\right) \circ A^{\circ 2^{d-2}}, 
	\end{aligned}
\end{equation*}
and
\begin{equation*}
	\begin{aligned}
		P_{\left[2^{d-1}, 2^{d}-1\right]}\left(A\right) 
		=& P_{\left[2^{d-1}, 2^{d-2}\times3-1\right]}\left(A\right) \\
        &+ P_{\left[2^{d-2}\times3, 2^{d}-1\right]}\left(A\right) \circ A^{\circ 2^{d-2}},
	\end{aligned}
\end{equation*}
where the resulting $P_{\left[0, 2^{d-2}-1\right]}\left(A\right)$, $P_{\left[2^{d-2}, 2^{d-1}-1\right]}\left(A\right)$, $P_{\left[2^{d-1}, 2^{d-2}\times3-1\right]}\left(A\right)$, $P_{\left[2^{d-2}\times3, 2^{d}-1\right]}\left(A\right)$ have degrees no more than $2^{d-2}-1$. The division is recursively applied until the degrees of the obtained Hadamard matrix functions are equal to $1$. 
This process can be described by a binary tree shown in Fig.~\ref{fig: binary-tree division}, where the Hadamard matrix functions of the leaf nodes have the form of
\begin{equation}\label{eq: rule of binary tree for leaf node}
	P_{\left[2k_{d-1},2k_{d-1}+1\right]}\left(A\right) = c_{2k_{d-1}} J + c_{2k_{d-1}+1} A,
\end{equation}
for $k_{d-1}\in\{0,1,\cdots,2^{d-1}-1\}$.

The key advantage of this hierarchical splitting is that it requires only block-encodings of the powers $A^{\circ 2^0}, A^{\circ 2^1}, \dots, A^{\circ 2^{d-1}}$, i.e., only $d$ distinct matrices. The binary-tree construction thus reduces the ancilla overhead, as intermediate results are reused.

For $l\in\left\{0,1,\cdots,d-2\right\}$ and $k_l\in\left\{0,1,\cdots,2^l-1\right\}$, define the interval normalization factors
\begin{equation}\label{eq: interval normalization factor}
    \alpha_{l, k_l} = \sum_{k=2^{d-l}k_l}^{2^{d-l}(k_l+1)-1} 2^{n\overline{\operatorname{bit}_0(k)}}\alpha_0^{\operatorname{bit}_0(k)} \left|c_k\right|\prod_{j=1}^{d-l-1}\alpha_j^{\operatorname{bit}_j(k)},
\end{equation}
where $\operatorname{bit}_j(k)$ denotes the $j$-th binary digit of $k$ and the overline is the NOT operation. 

\begin{figure*}[htbp]
	\centering
	\begin{minipage}[b]{\textwidth}
		\centering
		\resizebox{\textwidth}{!}{
			\begin{tikzpicture}
				\begin{yquant}
					qubit {} add;
					qubit {} uaanc;
					qubit {} uaj;
					qubit {} anc;
					qubit {$\ket{j}$} j;
					
					["north:$a_{d-l-1}$" 
					{font=\protect\footnotesize, inner sep=0pt}]
					slash uaanc;
					["north:$n$" 
					{font=\protect\footnotesize, inner sep=0pt}]
					slash uaj, j;
					["north:$b_{l}$" 
					{font=\protect\footnotesize, inner sep=0pt}]
					slash anc;
					
					box {$U_{P_{\left[2^{d-l}k_l,2^{d-l}(k_l+1)-1\right]}\left(A\right)}$} (add,uaanc,uaj,anc,j);
					
					text {$=$} (-);
					
					box {$R_Y\left(\gamma_{l,k_l}\right)$} add;
					box {$U_{P_{\left[2^{d-l}k_l,2^{d-l-1}(2k_l+1)-1\right]}\left(A\right)}$} (anc,j) ~ add;
					cnot uaj | j, add;
					box {$U_{P_{\left[2^{d-l-1}(2k_l+1),2^{d-l}(k_l+1)-1\right]}\left(A\right)}$} (anc,j) | add;
					box {$U_{A^{\circ 2^{d-l-1}}}$} (uaanc,uaj) | add;
					cnot uaj | j, add;
					box {$R_Y\left(-\gamma_{l,k_l}\right)$} add;
					
					output {$\ket{i}$} j;
				\end{yquant}
			\end{tikzpicture}
		}
		\subcaption{Recursive quantum circuit framework for the binary-tree-based QHMF, where $U_{P_{\left[2^{d-l}k_l,2^{d-l-1}(2k_l+1)-1\right]}\left(A\right)}$ and $U_{P_{\left[2^{d-l-1}(2k_l+1),2^{d-l}(k_l+1)-1\right]}\left(A\right)}$ have the same circuit structure as $U_{P_{\left[2^{d-l}k_l,2^{d-l}(k_l+1)-1\right]}\left(A\right)}$, $k_l\in\{0,1,\cdots,2^l-1\}$, $l\in\{0,1,\cdots,d-2\}$, $b_l=(d-l-1)(n+1)-n+\sum_{k=0}^{d-2-l}a_k$.}
		\label{circuit: recursive quantum circuit}
	\end{minipage}
	
	\vspace{0.5cm}
	
	\begin{minipage}[b]{\textwidth}
		\centering
		\resizebox{\textwidth}{!}{
			\begin{tikzpicture}
				\begin{yquant}
					qubit {} add;
					qubit {} anc;
					qubit {$\ket{j}$} j;
					
					["north:$a_0$" 
					{font=\protect\footnotesize, inner sep=0pt}]
					slash anc;
					["north:$n$" 
					{font=\protect\footnotesize, inner sep=0pt}]
					slash j;
					
					hspace {5pt} -;
					box {$U_{P_{\left[2k_{d-1},2k_{d-1}+1\right]}\left(A\right)}$} (add,anc,j);
					hspace {5pt} -;
					
					text {$=$} (-);
					
					hspace {5pt} -;
					box {$R_Z\left(\phi_{k_{d-1}}\right)$} add;
					hspace {5pt} -;
					box {$R_Y\left(\theta_{k_{d-1}}\right)$} add;
					hspace {5pt} -;
					box {$U_{J}$} (anc,j) ~ add;
					hspace {5pt} -;
					box {$U_A$} (anc,j) | add;
					hspace {5pt} -;
					box {$R_Z\left(\varphi_{k_{d-1}}\right)$} add;
					hspace {5pt} -;
					box {$R_Y\left(-\theta_{k_{d-1}}\right)$} add;
					hspace {5pt} -;
					
					output {$\ket{i}$} j;
				\end{yquant}
			\end{tikzpicture}
		}
		\subcaption{Quantum circuit for the $1$-degree QHMF $P_{\left[2k_{d-1},2k_{d-1}+1\right]}\left(A\right)$ of the leaf layer, where $k_{d-1}\in\{0,1,\cdots,2^{d-1}-1\}$.}
		\label{circuit: 1-degree QHMF}
	\end{minipage}
	\caption{Quantum circuit for the binary-tree-based QHMF.}
	\label{circuit: binary-tree-based QHMF}
\end{figure*}

The following theorem states the main result for this subroutine.
\begin{theorem}[Binary-Tree-based QHMF]\label{theorem: binary-tree-based QHMF}
    Let $P(x)=\sum_{k=0}^{2^d-1}c_kx^k$ be a $\left(2^d-1\right)$-degree polynomial. In the power-oracle model, assume access to $(\alpha_l,a_l,\epsilon_l)$-block-encodings of $A^{\circ 2^l}\in\mathbb{C}^{2^n\times2^n}$ for $l\in\{0,1,\cdots,d-1\}$ together with their controlled versions. Then there exist four angle sequences 
	\begin{equation*}
		\begin{gathered}
			\left\{\gamma_{l,k_l} = 2\arccos\sqrt{\frac{\alpha_{l+1, 2k_l}}{\alpha_{l, k_l}}} \right\}_{k_l,l=0}^{2^{l}-1,d-2}, \\
			\left\{\theta_{k_{d-1}} = 2\arccos\sqrt{\frac{2^n\left|c_{2k_{d-1}}\right|}{\alpha_{d-1,k_{d-1}} }}\right\}_{k_{d-1}=0}^{2^{d-1}-1},\\
			\left\{\phi_{k_{d-1}} = -\operatorname{Arg}\left(c_{2k_{d-1}}\right) - \operatorname{Arg}\left(c_{2k_{d-1}+1}\right)\right\}_{k_{d-1}=0}^{2^{d-1}-1}, \\
			\left\{\varphi_{k_{d-1}} = -\operatorname{Arg}\left(c_{2k_{d-1}}\right) + \operatorname{Arg}\left(c_{2k_{d-1}+1}\right)\right\}_{k_{d-1}=0}^{2^{d-1}-1}, 
		\end{gathered}
	\end{equation*}
	such that the circuit shown in Fig.~\ref{circuit: binary-tree-based QHMF} implements the Hadamard matrix function $P\left(A\right)$ with a normalization factor 
	\begin{equation*}
        \alpha_{\rm BiT} = \sum_{k=0}^{2^d-1} 2^{n\overline{\operatorname{bit}_0(k)}}\alpha_0^{\operatorname{bit}_0(k)} |c_k|\prod_{j=1}^{d-1}\alpha_j^{\operatorname{bit}_j(k)}.
	\end{equation*}
\end{theorem}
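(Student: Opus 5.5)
We argue by induction over the binary tree of Fig.~\ref{fig: binary-tree division}, going from the leaves up to the root. The inductive claim is that, for every level $l\in\{0,\dots,d-1\}$ and every $k_l$, the subcircuit $U_{P_{[2^{d-l}k_l,2^{d-l}(k_l+1)-1]}(A)}$ of Fig.~\ref{circuit: binary-tree-based QHMF} block-encodes $P_{[2^{d-l}k_l,2^{d-l}(k_l+1)-1]}(A)$ with normalization factor exactly $\alpha_{l,k_l}$. Here $\alpha_{l,k_l}$ is the interval factor of Eq.~\eqref{eq: interval normalization factor}, extended verbatim to $l=d-1$, where it equals $2^n|c_{2k_{d-1}}|+\alpha_0|c_{2k_{d-1}+1}|$. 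The ancilla qubits are all initialised in $\ket{0}$. Once the claim holds at $l=0$, $k_0=0$, we have $\alpha_{0,0}=\alpha_{\rm BiT}$, which is the theorem. Throughout, the encodings are taken to be exact ($\epsilon_l=0$); the errors are treated in Sec.~\ref{sec: error}.

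\emph{Base case (leaves).} Take the circuit in Fig.~\ref{circuit: 1-degree QHMF}. We project the control qubit onto $\ket{0}$ at both ends and write out the $2\times2$ action. The $\ket{0}$ branch applies $U_J$ and the $\ket{1}$ branch applies $U_A$. The $R_Y(\theta)$, $R_Y(-\theta)$ sandwich contributes weights $\cos^2(\theta/2)$ and $\sin^2(\theta/2)$ to the two branches, each multiplied by the phase picked up from $R_Z(\phi)$ and $R_Z(\varphi)$. By Lemma~\ref{lem: block encoding of all-ones matrix} we may substitute $J/2^n$ for the $U_J$ block and $A/\alpha_0$ for the $U_A$ block. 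This gives
\begin{equation*}
\cos^2\tfrac{\theta}{2}\,e^{i\lambda_0}\tfrac{J}{2^n}+\sin^2\tfrac{\theta}{2}\,e^{i\lambda_1}\tfrac{A}{\alpha_0}.
\end{equation*}
With $\cos^2(\theta/2)=2^n|c_{2k}|/\alpha_{d-1,k}$ we get $\sin^2(\theta/2)=\alpha_0|c_{2k+1}|/\alpha_{d-1,k}$, so the block is $(c_{2k}J+c_{2k+1}A)/\alpha_{d-1,k}$, provided the phases come out as $e^{i\lambda_0}=e^{i\operatorname{Arg}c_{2k}}$ and $e^{i\lambda_1}=e^{i\operatorname{Arg}c_{2k+1}}$. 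Whether they do depends on the $R_Z$ convention: we use $R_Z(\phi)=\mathrm{diag}(e^{-i\phi/2},e^{i\phi/2})$ and check that $\phi$ and $\varphi$ as stated combine to give these phases, at most up to a global phase common to both branches. This phase bookkeeping is the one place where the convention has to be pinned down carefully. If it leaves a residual global phase, we absorb it into the definitions or note that it is irrelevant.

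\emph{Inductive step.} Take the circuit in Fig.~\ref{circuit: recursive quantum circuit} and project the top control qubit onto $\ket{0}$ at both ends. The $R_Y(\pm\gamma)$ pair gives weight $\cos^2(\gamma/2)=\alpha_{l+1,2k_l}/\alpha_{l,k_l}$ to the $\ket{0}$ branch and $\sin^2(\gamma/2)$ to the $\ket{1}$ branch. On the $\ket{0}$ branch only the left-child subcircuit acts. By induction it contributes $P_{\rm left}(A)/\alpha_{l+1,2k_l}$; the CNOTs are inactive there and the idle $U_{A^{\circ2^{d-l-1}}}$ registers stay in $\ket{0}$. On the $\ket{1}$ branch the CNOT sandwich together with the right-child encoding and $U_{A^{\circ 2^{d-l-1}}}$ is exactly the two-matrix QHP circuit (Lemma on QHP of two matrices). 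It therefore contributes $(P_{\rm right}(A)\circ A^{\circ2^{d-l-1}})/(\alpha_{l+1,2k_l+1}\alpha_{d-l-1})$.

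Next we verify the normalization identity
\begin{equation*}
\alpha_{l,k_l}=\alpha_{l+1,2k_l}+\alpha_{d-l-1}\,\alpha_{l+1,2k_l+1}.
\end{equation*}
It follows from Eq.~\eqref{eq: interval normalization factor}. The indices $k$ in the upper half of the interval are exactly those with $\operatorname{bit}_{d-l-1}(k)=1$, and the remaining bits reproduce the child's sum after re-indexing by $k-2^{d-l-1}$. This identity gives $\sin^2(\gamma/2)=\alpha_{d-l-1}\alpha_{l+1,2k_l+1}/\alpha_{l,k_l}$. Adding the two branches and applying the splitting rule $P_{[i,j]}=P_{\rm left}+P_{\rm right}\circ A^{\circ 2^{d-l-1}}$ then yields $P_{[2^{d-l}k_l,2^{d-l}(k_l+1)-1]}(A)/\alpha_{l,k_l}$.

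Two further checks remain. First, the ancilla registers that the two branches share, of sizes $b_l$ and $a_{d-l-1}$, must be returned to $\ket{0}$ on the branch where they are idle, so that no cross terms appear. Second, the child circuits' output ancillas must align. We expect the main obstacle to be this branch-isolation argument combined with the leaf phase convention. The normalization algebra, by contrast, is a routine bit-counting identity.
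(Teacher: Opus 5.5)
Your proposal is correct and follows the paper's own proof. Like the paper, you compute the leaf LCU block directly, treat the recursive step as an LCU of the two children in which the right child is first multiplied by $A^{\circ 2^{d-l-1}}$ through the two-matrix QHP, use the identity $\alpha_{l,k_l}=\alpha_{l+1,2k_l}+\alpha_{d-l-1}\alpha_{l+1,2k_l+1}$ (which you justify by bit-counting where the paper only asserts it), and induct up to $\alpha_{0,0}=\alpha_{\rm BiT}$. The leaf phase check you left open closes exactly with your $R_Z$ convention: the $J$ branch acquires $e^{-i(\phi+\varphi)/2}=e^{i\operatorname{Arg}(c_{2k})}$ and the $A$ branch acquires $e^{-i(\phi-\varphi)/2}=e^{i\operatorname{Arg}(c_{2k+1})}$, so no fallback is needed — which matters, because a $k$-dependent residual leaf phase would not be irrelevant but would become a relative phase inside the parent's LCU.
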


The proof follows by induction on the recursion depth, using the LCU for combining two block-encodings and the QHP for multiplying with $A^{\circ 2^{l-1}}$. The details are given in Appendix~\ref{sec: proof for binary-tree-based QHMF}.

The circuit of binary-tree construction of binary-tree construction contains:
\begin{itemize}
    \item At tree level $l$ (with the root at $l=0$), there are $2^{d-l-1}$ nodes, each combining two children and requiring two controlled rotations ($4$ for $l=d-1$) and $2n$ multi-controlled $X$ gates (to implement the Hadamard product with $A^{\circ 2^{d-l-1}}$).
    \item The controlled rotations at level $l$ are controlled by $l$ qubits (the selection register) and can be implemented with depth $\mathcal{O}\left(\left(\log l\right)^3\right)$ and size $\mathcal{O}\left(l \left(\log l\right)^4\right)$ using the technique in~\cite{claudon2024poly}.
    \item The multi-controlled CNOT gates at level $l$ have $l+2$ control qubits, with similar depth $\mathcal{O}\left(\left(\log l\right)^3\right)$ and size $\mathcal{O}\left(l \left(\log l\right)^4\right)$.
    \item Queries to the block-encodings $U_{A^{\circ 2^{d-l-1}}}$ occur at level $l$; each encoding is called once per node with $l+1$ control qubits. The total number of queries to controlled $U_{A^{\circ 2^l}}$ for each $l$ is $2^{d-l-1}$.
\end{itemize}

Summing over all levels, we obtain the following complexity bounds; proof in Appendix~\ref{sec: proof of complexity for binary-tree QHMF}.
\begin{corollary}[Complexity of Binary-Tree-based QHMF]\label{cor: complexity of binary-tree-based QHMF}
	The binary-tree-based QHMF in Theorem~\ref{theorem: binary-tree-based QHMF} can be implemented with:
    \begin{itemize}
        \item Query complexity:  
        \begin{equation*}
            Q_{d-l}^l = \mathcal{O}\left(2^{d-l}\right)
        \end{equation*}
        for $l\in\{0,1,\cdots,d-1\}$.
        \item Additional gates: 
        \begin{align*}
            D = \mathcal{O}\left(n2^d\left(\log d\right)^3\right), \quad 
            S = \mathcal{O}\left(n2^dd\left(\log d\right)^4\right).
        \end{align*}
        \item Ancilla count: 
        \begin{equation*}
            N=\mathcal{O}\left(nd + \sum_{l=0}^{d-1}a_l\right).
        \end{equation*}
    \end{itemize}
\end{corollary}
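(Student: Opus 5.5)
The plan is to read the resources off the recursive circuit of Fig.~\ref{circuit: binary-tree-based QHMF} after unrolling it completely. Unrolling turns the whole construction into a sequence of leaf circuits (Fig.~\ref{circuit: 1-degree QHMF}) and internal-node gadgets (Fig.~\ref{circuit: recursive quantum circuit}), all acting on one shared register stack. Each gate in the recursive picture becomes, after unrolling, a gate controlled by the selection qubits that lie above it in the tree. A node at level $l$ therefore inherits $l$ selection controls. I will then sum the per-level counts listed before the corollary. No new correctness argument is needed, since Theorem~\ref{theorem: binary-tree-based QHMF} already supplies correctness.

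\emph{Queries.} At level $l$ of the tree (root at $l=0$) there are $2^{l}$ internal nodes. Each of them makes exactly one query to $U_{A^{\circ 2^{d-l-1}}}$. That query is controlled by its own selection qubit together with the $l$ ancestor selection qubits, giving $l+1$ controls. I will reindex by $l' = d-l-1$. Then $U_{A^{\circ 2^{l'}}}$ is called $2^{d-l'-1}$ times with $d-l'$ controls, and the leaf layer accounts for the $l'=0$ calls to $U_A$. This yields $Q_{d-l}^l=\mathcal{O}(2^{d-l})$.

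\emph{Ancillas.} The count follows from the register widths in Fig.~\ref{circuit: recursive quantum circuit}. The key point is that the two children of a node act on the same ancilla register $b_{l+1}$ one after the other, so their ancillas are reused rather than duplicated. This gives the recursion $b_l = b_{l+1} + 1 + n + a_{d-l-1}$. Unrolling it gives the stated closed form $b_l=(d-l-1)(n+1)-n+\sum_{k\le d-2-l}a_k$. The total ancilla count is $b_0$ plus the top selection qubit plus the $n$-qubit block encoding of $J$, which is $\mathcal{O}(nd+\sum_l a_l)$.

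\emph{Gates.} Each node at level $l$ contributes the following:
\begin{itemize}
\item two $R_Y$ rotations, or four rotations at a leaf, each carrying $l$ controls;
\item $2n$ Toffoli-type CNOTs with $l+2$ controls;
\item one multi-controlled copy of $U_J$ per leaf, which costs $\mathcal{O}(n)$ gates, each carrying $d$ controls.
\end{itemize}
Using~\cite{claudon2024poly}, an $l$-controlled single-qubit gate costs depth $\mathcal{O}((\log l)^3)$ and size $\mathcal{O}(l(\log l)^4)$. I then sum over all $\sum_l 2^l=\mathcal{O}(2^d)$ nodes. Because every node is executed sequentially (controls destroy parallelism, as noted after Corollary~\ref{corollary: QHP of m matrices}), the depth is $\mathcal{O}(2^d\cdot n(\log d)^3)$. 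The size is $\mathcal{O}(2^d\cdot n\cdot d(\log d)^4)$, since each node has $\mathcal{O}(n)$ gates each of size $\mathcal{O}(d(\log d)^4)$.

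The main obstacle is bookkeeping the controls correctly in the unrolled circuit. In particular, the $n$ CNOTs per node and the controlled $U_J$ must each be counted as $\mathcal{O}(n)$ multi-controlled gates, not as a single gate. Otherwise the factor $n$ in $D$ and $S$ is lost. I must also confirm that the number of controls never exceeds $\mathcal{O}(d)$, which is what allows the uniform bound $\log l\le \log d$.
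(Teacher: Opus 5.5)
Your proposal is correct and follows essentially the same route as the paper's proof. Both count, level by level, the multi-controlled rotations, the $2n$ $(l+2)$-controlled $X$ gates per internal node, the $d$-controlled $U_J$ at the leaves, and the $2^l$ queries at level $l$, decompose each multi-controlled gate with the polylogarithmic-depth construction, and sum everything sequentially; your cruder $\mathcal{O}(nd(\log d)^4)$ size for a controlled $U_J$, against the paper's $\mathcal{O}(n+d(\log d)^4)$, does not change the totals.

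The only slips are in the ancilla bookkeeping, and they leave the bound $\mathcal{O}(nd+\sum_l a_l)$ intact:
\begin{itemize}
\item The two children share the parent's $b_l$-qubit register, not $b_{l+1}$, so the recursion is $b_l=b_{l+1}+1+n+a_{d-l-2}$. Only this index reproduces the closed form you quote.
\item Your final tally omits the root's own $a_{d-1}$-qubit oracle ancilla and its $n$-qubit copy register. The correct count is $d+(d-1)n+\sum_{l=0}^{d-1}a_l$, plus $n$ for $U_J$.
\end{itemize}
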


These bounds show that the binary-tree subroutine is exponentially better in ancilla than the full factorization, but its depth is exponential in $d$. Since the degree is small, the depth overhead is acceptable.
In the trade-off framework, we can choose $m$ large enough to keep $K$ moderate, so that the binary-tree subroutine becomes efficient.

\subsection{Resource Analysis and Comparison}\label{subsec: comparison}
We now consolidate the results of the preceding subsections into a unified resource characterization of the entire trade-off framework. Recall that the target is a Hadamard matrix function $P(A)$ of degree $2^d-1$. The user chooses an integer $m$ with $1 \le m \le 2^d-1$, and then decomposes $P$ as $P(x) = \prod_{s=0}^{m-1} P^{(s)}(x)$, where each $\deg P^{(s)} \leq K = \left\lceil \frac{2^d-1}{m} \right\rceil$, and implements each $P^{(s)}(A)$ using the binary-tree subroutine. The $m$ factors are then combined via the QHP. The overall circuit architecture is illustrated in Fig.~\ref{circuit: trade-off}.

The following theorem summarizes the resource costs of the complete construction as explicit functions of $m$, $n$, and $d$.
\begin{theorem}[Complexity of Trade-Off QHMF]\label{thm: complexity of trade-off QHMF}
    Let $m$ be an integer with $1\leq m \leq 2^d-1$. Set $K = \left\lceil \frac{2^d-1}{m} \right\rceil$. Let $P(x)=\prod_{s=0}^{m-1}P^{(s)}(x)$ be a $\left(2^d-1\right)$-degree polynomial, where $P^{(s)}(x)=\sum_{k=0}^{K}c^{(s)}_kx^k$. Then the trade-off QHMF in Theorem~\ref{thm: trade-off QHMF} can be implemented with:
    \begin{itemize}
        \item Query complexity: \\
        \begin{equation*}
            Q_{\left\lceil\log (K+1)\right\rceil-l}^l = \mathcal{O}\left(m2^{d-\log m-l}\right)
        \end{equation*} 
        for $l\in\left\{0,1,\cdots,\left\lceil\log K\right\rceil-1\right\}$.
        \item Additional gates:\\
        \begin{align*}
            D(m) &= \mathcal{O}\left(\frac{n2^d}{m} \left(\log\log\frac{2^d}{m}\right)^3 + \log m \right), \\
            S(m) &= \mathcal{O}\left(n2^d \log\frac{2^d}{m}\left(\log\log\frac{2^d}{m}\right)^4 + nm \right).
        \end{align*}
        \item Ancilla count:\\
        \begin{equation*}
            N(m) = \mathcal{O}\left(m\left(n \log\frac{2^d}{m} + \sum\limits_{k=0}^{\left\lceil\log K\right\rceil - 1} a_k\right)\right).
        \end{equation*}
        \item Normalization factor:\\
        \begin{equation*}
            \begin{aligned}
                & \alpha_{\rm TO}(m) \\
                =& \prod\limits_{s=0}^{m-1} \left(\sum\limits_{k=0}^{K} 2^{n\overline{\operatorname{bit}_0(k)}}\alpha_0^{\operatorname{bit}_0(k)} \left|c^{(s)}_k\right|\prod\limits_{j=1}^{\left\lceil\log K\right\rceil-1}\alpha_{j}^{\operatorname{bit}_j(k)}\right).
            \end{aligned}
        \end{equation*}
    \end{itemize}
\end{theorem}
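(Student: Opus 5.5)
This is a substitution argument. I would plug the binary-tree complexities of Corollary~\ref{cor: complexity of binary-tree-based QHMF} and the normalization factor of Theorem~\ref{theorem: binary-tree-based QHMF} into the generic trade-off bounds of Theorem~\ref{thm: trade-off QHMF}, taking the subroutine degree to be $K=\lceil (2^d-1)/m\rceil$. To apply the binary-tree results, each factor $P^{(s)}$ of degree at most $K$ is regarded as a polynomial of degree $2^{d'}-1$ with $d'=\lceil\log(K+1)\rceil$, padding the missing top coefficients with zeros. Zero coefficients contribute nothing to $\alpha_{l,k_l}$, so the normalization formula still holds. The padding only inflates $K$ by at most a factor of two, which is harmless in $\mathcal{O}$-notation. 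Throughout I would use $2^{d'}=\Theta(K)=\Theta(2^d/m)$ and $d'=\Theta(\log(2^d/m))$.

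\textbf{Queries.} Corollary~\ref{cor: complexity of binary-tree-based QHMF} gives $\mathcal{Q}^l_{d'-l}(K)=\mathcal{O}(2^{d'-l})=\mathcal{O}(2^{d-\log m-l})$ for $l\le d'-1=\lceil\log K\rceil-1$. Multiplying by $m$ as in Theorem~\ref{thm: trade-off QHMF} yields $Q^l_{\lceil\log(K+1)\rceil-l}=\mathcal{O}(m2^{d-\log m-l})$.

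\textbf{Depth and size.} Corollary~\ref{cor: complexity of binary-tree-based QHMF} gives $\mathcal{D}(K)=\mathcal{O}(n2^{d'}(\log d')^3)=\mathcal{O}\!\left(\frac{n2^d}{m}(\log\log\frac{2^d}{m})^3\right)$. Adding the $\mathcal{O}(\log m)$ cost of the parallel QHP ($GHZ_m^n$) gives $D(m)$. For the size, $m\,\mathcal{S}(K)=m\cdot\mathcal{O}(n2^{d'}d'(\log d')^4)=\mathcal{O}\!\left(n2^d\log\frac{2^d}{m}(\log\log\frac{2^d}{m})^4\right)$, and adding $\mathcal{O}(nm)$ gives $S(m)$.

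\textbf{Ancillas and normalization.} The ancilla count is $m\,\mathcal{N}(K)+\mathcal{O}(nm)=m\,\mathcal{O}(nd'+\sum_{k<d'}a_k)+\mathcal{O}(nm)$, which matches the stated $N(m)$ since the $\mathcal{O}(nm)$ term is absorbed. For the normalization, $\alpha_{\rm TO}(m)=\prod_s\alpha(P^{(s)}(A))$, and each factor is $\alpha_{\rm BiT}$ of Theorem~\ref{theorem: binary-tree-based QHMF} with $d$ replaced by $d'$ and coefficients $c^{(s)}_k$. This gives the stated product, with the inner product over $j$ running up to $\lceil\log K\rceil-1$ after noting that the padded coefficients vanish.

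The main obstacle is the degenerate regime of small $K$. When $K=1$, so that $m\approx 2^d$, $\log\log(2^d/m)$ is undefined or zero. I would handle this by reading $\log d'$ as $\max\{1,\log d'\}$ (the multi-controlled gates then have constant cost), so the bounds remain valid as $\mathcal{O}(n+\log m)$ depth, consistent with Corollary~\ref{cor: complexity of factorization-based QHMF}. A secondary concern is the exact indexing of the top power in the normalization product, which I would verify against the padded bit-length $d'$.
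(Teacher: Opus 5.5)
Your route is the paper's route. Its proof is the one-line remark that the result follows from Theorem~\ref{thm: trade-off QHMF} and Corollary~\ref{cor: complexity of binary-tree-based QHMF}, and your treatment of the degenerate $\log\log$ regime is more careful than anything the paper says.

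However, one step fails as written: the identification $d'-1=\lceil\log K\rceil-1$ for $d'=\lceil\log(K+1)\rceil$. Since $\lceil\log(K+1)\rceil=\lfloor\log K\rfloor+1$, this holds only when $K$ is not a power of two. Suppose $K=2^p$. Balanced grouping then always produces a factor of degree exactly $K$ (e.g.\ $d=4$, $m=4$ gives $K=4$). That factor needs a binary tree with $d'=p+1$ levels. Its leading coefficient $c^{(s)}_K$ is not padding: it sits at index $k=2^p$ with $\operatorname{bit}_p(k)=1$. Substitution therefore has three consequences:
\begin{itemize}
\item it requires queries to $U_{A^{\circ 2^p}}$, which lies outside the stated range $l\le p-1$ and outside the oracles assumed in Theorem~\ref{thm: trade-off QHMF};
\item it contributes $a_p$ to the ancilla sum;
\item for $p\ge 1$, it makes the top term $2^n|c^{(s)}_K|\alpha_p$, whereas the stated product over $j\le p-1$ drops $\alpha_p$.
\end{itemize}
For $K=1$ (i.e.\ $m=2^d-1$) the stated range $\{0,\dots,-1\}$ is even empty, although every linear factor queries $U_A$.

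So ``the padded coefficients vanish'' does not justify truncating at $\lceil\log K\rceil-1$. What the substitution actually proves is the statement with $\lceil\log K\rceil-1$ replaced by $\lfloor\log K\rfloor=\lceil\log(K+1)\rceil-1$ in every index range. This is also the only reading consistent with the subscript $\lceil\log(K+1)\rceil-l$ that the statement already puts on $Q$. The paper's one-line proof has the same blind spot. You should therefore state and prove the corrected ranges, or restrict to $K$ not a power of two, instead of asserting the stated ones.

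A smaller point: the $\max\{1,\cdot\}$ convention you apply to $\log d'$ is also needed for $\log(2^d/m)$ itself. For $m>2^{d-1}$ one has $\log(2^d/m)<1$. Then neither $d'=\Theta(\log(2^d/m))$ nor the absorption of $\mathcal{O}(nm)$ into $m\,n\log(2^d/m)$ is valid. At $m=2^d-1$ one has $m\log(2^d/m)\le 1/\ln 2$, so the $n$-dependent part of the stated $N(m)$ is only $\mathcal{O}(n)$, while the QHP alone already uses $(m-1)n$ ancillas. Keeping $d'$ (which is at least $1$) wherever you currently pass to $\log(2^d/m)$, or reading $\log(2^d/m)$ as $\max\{1,\log(2^d/m)\}$, removes this problem.
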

\begin{proof}
    The result is directly obtained from Theorem~\ref{thm: trade-off QHMF} and Corollary~\ref{cor: complexity of binary-tree-based QHMF}.
\end{proof}

\begin{table*}[htbp]
	\centering
	\caption{Resource comparison for implementing a $(2^d-1)$-degree Hadamard matrix function $P(A)=\sum_{k=0}^{2^d-1}c_kA^{\circ k}$. Here, $U_{A^{\circ 2^l}}$ denotes an $(\alpha_l,a_l,\epsilon_l)$-block-encoding of $A^{\circ 2^l}$ for $l\in\{0,1,\cdots,d-1\}$, and $\left\{r_k\right\}_{k=0}^{2^d-2}$ are roots of $P(x)$. The trade-off parameter $m$ determines the factor degree $K=\left\lceil\frac{2^d-1}{m}\right\rceil$, so that $P(A)=\bigcirc_{s=0}^{m-1}P^{(s)}(A)$ with $P^{(s)}(A)=\sum_{k=0}^{K}c^{(s)}_kA^{\circ k}$. 
    }  
	\label{tab: comparison of QHMF}
    \resizebox{\linewidth}{!}{
    \begin{tabular}{|c|c|c|c|c|c|}
        \hline
        \multicolumn{2}{|c|}{\textbf{QHMF}} & \makecell{\textbf{Factorization} \\ (Theorem~\ref{thm: factorization-based QHMF}, Corollary~\ref{cor: complexity of factorization-based QHMF})} & \makecell{\textbf{Trade-off} \\ (Theorem~\ref{thm: trade-off QHMF} and~\ref{thm: complexity of trade-off QHMF}) } & \makecell{\textbf{Binary tree} \\ (Theorem~\ref{theorem: binary-tree-based QHMF}, Corollary~\ref{cor: complexity of binary-tree-based QHMF})} & \makecell{\textbf{LCU} \\ (\cite{guo2025quantum}, Lemma~\ref{lem: complexity of LCU-based QHMF})} \\
        \hline
        \multicolumn{2}{|c|}{\makecell{\textbf{Query} \\ \textbf{Model}}} & \makecell{single-oracle \\/ power-oracle} & \multicolumn{3}{c|}{power-oracle} \\
        \hline
        \multicolumn{2}{|c|}{\makecell{\textbf{Query to} \\ \textbf{Oracle}}} & \makecell{$Q_1^0=\Omega\left(2^d\right)$ \\ (in parallel)} & \makecell{$Q_{\left\lceil\log (K+1)\right\rceil-l}^l = \mathcal{O}\left(m2^{d-\log m-l}\right)$ \\ for $l\in\{0,1,\cdots,\left\lceil\log K\right\rceil-1\}$} & \makecell{$Q_{d-l}^l = \mathcal{O}\left(2^{d-l}\right)$ \\ for $l\in\left\{0,1,\cdots,d-1\right\}$} & \makecell{$Q_d^l = \mathcal{O}\left(2^d\right)$ \\ for $l\in\left\{0,1,\cdots,d-1\right\}$} \\
        \hline
        \multirow{2}{*}{\makecell{\textbf{Additional} \\ \textbf{Gate}}} & \textbf{Size} & \makecell{$\mathcal{O}\left(n2^d\right)$} & \makecell{$\mathcal{O}\left(n2^d \log\frac{2^d}{m}\left(\log\log\frac{2^d}{m}\right)^4 + nm \right)$} & \makecell{$\mathcal{O}\left(n2^dd\left(\log d\right)^4\right)$} & \makecell{$\mathcal{O}\left(n2^dd^2\left(\log d\right)^4\right)$} \\
        \cline{2-6}
        & \textbf{Depth} & \makecell{$\mathcal{O}\left(n+d\right)$} & $\mathcal{O}\left(\frac{n2^d}{m} \left(\log\log\frac{2^d}{m}\right)^3 + \log m \right)$ & $\mathcal{O}\left(n2^d\left(\log d\right)^3\right)$ & \makecell{$\mathcal{O}\left(n2^dd\left(\log d\right)^3\right)$} \\
        \hline
        \multicolumn{2}{|c|}{\makecell{\textbf{Ancilla} \\ \textbf{Count}}} & \makecell{$\mathcal{O}\left(2^d(n+a_0)\right)$} & $\mathcal{O}\left(m\left(n \log\frac{2^d}{m} + \sum\limits_{k=0}^{\left\lceil\log K\right\rceil - 1} a_k\right)\right)$ & \makecell{$\mathcal{O}\left(nd + \sum_{l=0}^{d-1}a_l\right)$} & \makecell{$\mathcal{O}\left(nd + \sum_{l=0}^{d-1}a_l\right)$} \\
        \hline
        \multicolumn{2}{|c|}{\makecell{\textbf{Normalization} \\ \textbf{Factor}}} & $\alpha_{\rm Fac}$ & $\alpha_{\rm TO}(m)$ & $\alpha_{\rm BiT}$ & $\alpha_{\rm LCU}$\\
        \hline
    \end{tabular}
    }
    \begin{align*}
        \alpha_{\rm Fac} =& \left|c_{2^d-1}\right|\prod\limits_{k=0}^{2^d-2}\left(\alpha_0+2^n\left|r_k\right|\right)
        & \alpha_{\rm TO}(m) =& \prod\limits_{s=0}^{m-1} \left(\sum\limits_{k=0}^{K} 2^{n\overline{\operatorname{bit}_0(k)}}\alpha_0^{\operatorname{bit}_0(k)} \left|c^{(s)}_k\right|\prod\limits_{j=1}^{\left\lceil\log K\right\rceil-1}\alpha_{j}^{\operatorname{bit}_{j}(k)}\right) \\
        \alpha_{\rm BiT} =& \sum_{k=0}^{2^d-1} 2^{n\overline{\operatorname{bit}_0(k)}}\alpha_0^{\operatorname{bit}_0(k)} |c_k|\prod_{j=1}^{d-1}\alpha_j^{\operatorname{bit}_j(k)}
        & \alpha_{\rm LCU} =& 2^n\left|c_0\right| + \sum\limits_{k=1}^{2^d-1}\left|c_k\right|\prod\limits_{j=0}^{d-1} \alpha_j^{\operatorname{bit}_j(k)}
    \end{align*}
\end{table*}

The formulas in Theorem~\ref{thm: complexity of trade-off QHMF} reveal the explicit depth–ancilla trade-off. Increasing $m$ decreases the additional depth while increasing the query complexity, additional size, and ancilla count.

Before comparing the resource counts, we emphasize that Table~\ref{tab: comparison of QHMF} reports resources within the oracle model indicated in its ``Query Model'' row. 
In the single-oracle model, only an $(\alpha_0,a_0,\epsilon_0)$-block-encoding $U_A$ of $A$ and its controlled version are available. 
Higher Hadamard powers $A^{\circ 2^l}$ for $l\ge1$ are not primitive oracles; if a construction uses them, they must be synthesized from $U_A$, and this cost must be included. 
The factorization construction queries only the single-qubit controlled $U_A$. 
In the power-oracle model, the block encodings $U_{A^{\circ 2^l}}$ are treated as primitive oracles and their implementation cost is not included in the table. 
The trade-off, binary-tree, and LCU constructions query $(l+1)$- or $d$-qubit controlled versions of these oracles. 
Consequently, the query complexity is only comparable within the same oracle model. 
Resource counts across different oracle models, including depth, size, ancilla, and normalization factor, are not directly comparable unless the cost of synthesizing the power oracles is explicitly included. 
The all-ones matrix $J$ is not treated as an oracle; its exact block encoding is included in the non-oracle gate and ancilla counts.

Table~\ref{tab: comparison of QHMF} summarizes the resource costs within each oracle model. 
In the single-oracle model, the factorization method achieves the logarithmic circuit depth $\mathcal{O}(n+d)$ and saturates the query lower bound, but at the cost of an exponential number of ancilla and a large normalization factor.
In the power-oracle model, the trade-off framework decomposes the polynomial into $m$ lower-degree factors and combines them via the Hadamard product. 
This yields a discrete depth-ancilla trade-off: $m=2^d-1$ recovers the factorization-based QHMF, while $m=1$ reduces to the binary-tree method applied to the full polynomial. 
The binary-tree method serves as the subroutine for low-degree factors.
Compared with the LCU method in the same power-oracle model, it improves the depth, size, and the query complexity to higher-order powers while keeping the ancilla count comparable. 
Because the factorization method operates in the single-oracle model, its query complexity should not be compared directly with those of the power-oracle constructions; such a comparison requires including the cost of synthesizing $U_{A^{\circ 2^l}}$ from $U_A$.

\section{Error Analysis}\label{sec: error}
In Sec.~\ref{subsec: resource metrics}, we defined the approximation error $\epsilon_{\rm appro}$, the absolute and relative implementation errors $\epsilon_{\rm abs}$ and $\epsilon_{\rm rel}$, and the post-selection success probability $p_{\rm succ}$. 
Here we analyze how these quantities propagate through the three QHMF constructions. 
We first discuss the polynomial approximation error in Sec.~\ref{subsec: approximation error}, and then derive explicit implementation-error bounds for the factorization, binary-tree, and trade-off constructions in Sec.~\ref{subsec: implementation error}.

\subsection{Approximation Error of Target Function}\label{subsec: approximation error}
As defined in Sec.~\ref{subsec: resource metrics}, the approximation error is
\begin{equation*}
    \epsilon_{\rm appro}=\max_{z\in\Omega}|f(z)-P(z)|.
\end{equation*}
Because $P$ acts entry-wise on $A$, this scalar error propagates to each matrix element without amplification, so that for any matrix $A$ with all entries in $\Omega$,
\begin{equation*}
    \|f(A)-P(A)\|_{\max}\le \epsilon_{\rm appro}.
\end{equation*}

The choice of $P$ determines both the approximation error and the required polynomial degree, which in turn fixes the parameters $d$ (and hence the circuit resources). We consider three standard strategies.

\subsubsection{Chebyshev Approximation}
For a continuous $f$ on a compact real interval $[a,b]$, the Chebyshev expansion of degree $K$,
\begin{equation*}
    P^{\text{Cheb}}_K(x) = \sum_{k=0}^{K} c_k T_k\left(\frac{2x-a-b}{b-a}\right),
\end{equation*}
provides a near-optimal uniform approximation, where $T_k$ is the Chebyshev polynomial of the first kind. When $f$ is analytic on a neighborhood of $[a,b]$, the Chebyshev coefficients decay geometrically, yielding
\begin{equation}\label{eq: cheb-bound}
    E_\infty\left(P^{\text{Cheb}}_K, f, [a,b]\right) \leq C \rho^{-K},
\end{equation}
for some $\rho>1$ and a constant $C = C\left(f[a,b], \rho\right)$~\cite{trefethen2013approximation, rivlin1981, mason2002chebyshev}. If $f$ has singularities on or near $[a,b]$, the decay degrades to algebraic, which explains the slow convergence observed for functions~\cite{wang2018convergence}, such as the function $x^{0.4}$ at the origin (Sec.~\ref{sec: image processing}).

\subsubsection{Minimax Approximation}
The optimal polynomial in the uniform norm is the minimax polynomial $P^*_K$, defined by
\begin{equation*}
    E_\infty\left(P^*_K, f, \Omega\right) = \min_{\deg Q\le K} \max_{x\in\Omega} \left| f(x) - Q(x) \right|.
\end{equation*}
For real intervals, $P^*_K$ is uniquely characterized by the equioscillation of the error at $K+2$ extremal points (Chebyshev alternation theorem) and can be computed by the Remez algorithm~\cite{remez1934determination, remez1934procede, veidinger1960numerical, powell1981approximation}. In the complex plane, an analogous complex Remez algorithm applies when $\Omega$ is sufficiently regular and $f$ is analytic; the error then attains its maximum modulus on the boundary of $\Omega$ and satisfies a complex equioscillation condition~\cite{hubner2025computing, tang1988fast, lebailly1999computing}.

\subsubsection{Choice of Degree}
Given a target accuracy $\varepsilon$, one selects the minimal degree $K$ such that $E_\infty(P_K, f, \Omega)\le\varepsilon$. The degree $K$ then determines the exponent $d = \left\lceil \log(K+1) \right\rceil$ in the Hadamard-polynomial framework. This directly couples the approximation accuracy to the circuit resources of Sec.~\ref{sec: QHMF}: larger $K$ requires larger $d$, hence more ancilla, depth, and queries. In particular, when $f$ is analytic and Eq.~\eqref{eq: cheb-bound} holds, the degree grows only logarithmically in $1/\epsilon_{\rm appro}$, so that
\begin{equation*}
    d = \mathcal{O}\left(\log \log \frac{1}{\epsilon_{\rm appro}}\right),
\end{equation*}
a mild dependence that makes the framework attractive even for high accuracy targets.

\subsection{Implementation Error of Quantum Circuit}\label{subsec: implementation error}
Recall from Sec.~\ref{subsec: resource metrics} that a QHMF circuit block-encodes $P(A)$ with normalization factor $\alpha$, absolute implementation error $\epsilon_{\rm abs}$, relative implementation error $\epsilon_{\rm rel}$, and post-selection success probability $p_{\rm succ}$. 
The relative error determines the observable accuracy after post-selection, while $p_{\rm succ}$ determines the sampling cost. 
We now derive explicit bounds on these quantities for the three constructions. 

Each Hadamard power $A^{\circ 2^l}$ is accessed through an $\left(\alpha_l,a_l,\epsilon_l\right)$-block-encoding, where $\epsilon_l$ bounds the operator-norm error of the encoding. By Corollary~\ref{corollary: QHP of m matrices}, if $\widetilde A_k$ approximates $A_k$ with error $\epsilon_k$ and normalization $\alpha_k$, then the error of QHP is given by
\begin{equation}\label{eq: QHP error}
    \left\| \bigcirc_{k=0}^{m-1}A_k - \bigcirc_{k=0}^{m-1}\widetilde A_k \right\| \leq \left( \sum_{k=0}^{m-1}\frac{\epsilon_k}{\alpha_k} \right) \prod_{k=0}^{m-1}\alpha_k.
\end{equation}
For a linear combination, the standard LCU error propagation gives
\begin{equation}\label{eq: lcu error}
    \left\|
    \sum_{k=0}^{m-1}c_kA_k
    -
    \sum_{k=0}^{m-1}c_k\widetilde A_k
    \right\|
    \le
    \sum_{k=0}^{m-1}|c_k|\epsilon_k,
\end{equation}
provided the state-preparation unitaries are exact~\cite{gilyen2019quantum}. 

In the following, we analyze $\epsilon_{\rm abs}$, $\epsilon_{\rm rel}$, and $p_{\rm succ}$ for the factorization, binary-tree, and trade-off constructions.

\subsubsection{Error in Factorization Construction}
For the factorization-based QHMF, each linear factor $A-r_kJ$ is
block-encoded by an LCU using $U_A$ and the exact $U_J$.
Its normalization is $\alpha_0+2^n|r_k|$, and its block-encoding error is $\epsilon_0$. 

Applying the QHP error bound of Corollary~\ref{corollary: QHP of m matrices} to the $2^d-1$ factors and multiplying by the scalar $c_{2^d-1}$, the absolute implementation error is
\begin{equation*}\label{eq: error QHMF factorization}
    \epsilon_{\rm abs}^{\rm Fac} = \left|c_{2^d-1}\right| \epsilon_0
    \sum_{k=0}^{2^d-2}\prod_{l\ne k}\left(\alpha_0+2^n\left|r_l\right|\right).
\end{equation*}
The corresponding relative implementation error is 
\begin{equation*}\label{eq: relative error QHMF factorization}
    \epsilon_{\rm rel}^{\rm Fac} 
    = \frac{\epsilon_{\rm abs}^{\rm Fac}}{\alpha_{\rm Fac}}
    = \epsilon_0 \sum_{k=0}^{2^d-2} \frac{1}{\alpha_0+2^n|r_k|}.
\end{equation*}
The post-selection success probability is
\begin{equation*}
    \begin{aligned}
        p_{\rm succ}^{\rm Fac} 
        = \frac{\|P(A)\|_F^2}{2^n|c_{2^d-1}|^2 \prod_{k=0}^{2^d-2}(\alpha_0+2^n|r_k|)^2}.
    \end{aligned}
\end{equation*}

The magnitude of this error is controlled by the normalization factors. If $\alpha_0=\mathcal{O}(2^n)$ and $|r_k|=\mathcal{O}(1)$, then $\alpha_0+2^n|r_k| = \mathcal{O}(2^n)$, and $\alpha_{\rm Fac} = \mathcal{O}\left(2^{n(2^d-1)}\right)$. The absolute implementation error scales as
\begin{equation*}
    \epsilon_{\rm abs}^{\rm Fac} = \mathcal{O}\left(2^{n(2^d-2)+d}\epsilon_0\right),
\end{equation*}
which grows exponentially in $n$ and double-exponentially in $d$. 
For the relative error, the same assumption gives
\begin{equation*}
    \epsilon_{\rm rel}^{\rm Fac} = \mathcal{O}\left(\frac{2^d\epsilon_0}{2^n}\right).
\end{equation*}
Thus, the absolute implementation error is large, but the relative error remains small whenever $2^d\epsilon_0 \ll 2^n$. 
This is a consequence of the large normalization factor. 
The post-selection success probability scales as
\begin{equation*}
    p_{\rm succ}^{\rm Fac} = \mathcal{O}\left(\frac{\|P(A)\|_F^2}{2^{n(2^{d+1}-1)}}\right)
\end{equation*}
The dominant practical limitation of the factorization construction is, therefore, not the relative accuracy, but this exponentially small post-selection success probability. 
This illustrates why the absolute error alone is insufficient to characterize the performance of the construction.

\subsubsection{Error in Binary Tree Construction}
The implementation error of the binary-tree QHMF can be obtained from the recursive circuit in Fig.~\ref{circuit: binary-tree-based QHMF}. 
For the leaf layer $l=d-1$, the polynomial is
\begin{equation*}
    P_{[2k_{d-1},2k_{d-1}+1]}(A)=c_{2k_{d-1}}J+c_{2k_{d-1}+1}A,
\end{equation*}
with normalization factor $\alpha_{d-1,k_{d-1}}$. The all-ones matrix $J$ is block-encoded exactly, while $U_A$ carries an error $\epsilon_0$.
Then the error of the leaf block-encoding is
\begin{equation}\label{eq: leaf error}
    \epsilon_{d-1,k_{d-1}} = \left|c_{2k_{d-1}+1}\right| \epsilon_0.
\end{equation}

For a parent node $l\in\{0,1,\cdots,d-2\}$ with index $k_l\in\{0,1,\cdots,2^l-1\}$, the two children are $2k_l$ and $2k_l+1$. The parent polynomial is obtained by the LCU 
\begin{equation*}
    \begin{aligned}
        & P_{[2^{d-l}k_l,2^{d-l}(k_l+1)-1]}(A) \\
        =& P_{[2^{d-l}k_l,2^{d-l-1}(2k_l+1)-1]}(A) \\
        & + P_{[2^{d-l-1}(2k_l+1),2^{d-l}(k_l+1)-1]}(A)
        \circ A^{\circ 2^{d-l-1}} .
    \end{aligned}
\end{equation*}
where the normalization factors obey Eq.~\eqref{eq: interval normalization factor}. Using the LCU error propagation and the QHP error bound, the error for implementing $P_{[2^{d-l}k_l, 2^{d-l}(k_l+1)-1]}(A)$ satisfies
\begin{equation}\label{eq: BT error recursion}
    \epsilon_{l,k_l} = \epsilon_{l+1,2k_l} + \epsilon_{l+1,2k_l+1}\alpha_{d-l-1} + \epsilon_{d-l-1} \alpha_{l+1,2k_l+1}.
\end{equation}

Equations~\eqref{eq: leaf error} and~\eqref{eq: BT error recursion} form a linear recursion. To solve it, define the propagation coefficient $C_{l,k_l}$ from a node $(l,k_l)$ to the root $(0,0)$. By construction, we have $C_{0,0}=1$, and for $l=0,1,\dots,d-2$, we have $C_{l+1,2k_l}=C_{l,k_l}$ and $C_{l+1,2k_l+1}=\alpha_{d-l-1}C_{l,k_l}$.
Equivalently, $C_{l,k_l}=\prod_{j=0}^{l-1}\alpha_{d-j-1}^{\operatorname{bit}_j(k_l)}$. The absolute implementation error of block encoding the root polynomial is, therefore,
\begin{equation}\label{eq: error QHMF binary tree}
    \begin{aligned}
        \epsilon_{\rm abs}^{\rm Bit} = \epsilon_{0,0}
        =& \epsilon_0 \sum_{k=0}^{2^{d-1}-1} \left|c_{2k+1}\right| \prod_{j=0}^{d-2}\alpha_{d-j-1}^{\operatorname{bit}_j(k)}
        \\
        & + \sum_{l=0}^{d-2} \epsilon_{d-l-1} \sum_{k_l=0}^{2^l-1} \alpha_{l+1,2k_l+1} \prod_{j=0}^{l-1}\alpha_{d-j-1}^{\operatorname{bit}_j(k_l)}.
    \end{aligned}
\end{equation}
This expression is a positive linear combination of the block-encoding errors $\epsilon_l$ ($l\in\left\{0,1,\dots,d-1\right\}$). The weight of each $\epsilon_l$ is determined by the interval normalization factors and the power-oracle normalization factors.

The corresponding relative implementation error is obtained by dividing by the normalization factor $\alpha_{\rm BiT}$, that is, $\epsilon_{\rm rel}^{\rm Bit} = \frac{\epsilon_{\rm abs}^{\rm Bit}}{\alpha_{\rm BiT}}$.
The post-selection success probability is $p_{\rm succ}^{\rm Bit} = \frac{\|P(A)\|_F^2}{\alpha_{\rm BiT}^22^n}$.

The scaling of $\epsilon_{\rm abs}^{\rm Bit}$ is controlled jointly by the number of data qubits $n$ and the degree parameter $d$. Inspecting Eq.~\eqref{eq: error QHMF binary tree}, each product over $j$ contains at most $d-1$ normalization factors. Under the assumption $\alpha_l=\mathcal{O}(2^n)$ for all $l$, each such factor is $\mathcal{O}(2^n)$, so each product is $\mathcal{O}(2^{n(d-1)})$. The first sum has $2^{d-1}$ terms, giving a contribution $\mathcal{O}\left(2^{(n+1)(d-1)}\epsilon_0\right)$.
For the second sum, there are $d-1$ outer terms. In the $l$-th outer term, the inner sum has $2^l$ terms. Each inner term contains $l+1$ normalization factors: the factor $\alpha_{l+1,2k_l+1}$ together with the $l$ factors in the product $\prod_{j=0}^{l-1}\alpha_{d-j-1}^{\operatorname{bit}_j(k_l)}$. Therefore, each inner term is bounded by $\mathcal{O}\left(2^{n(l+1)}\epsilon_0\right)$, where we assume $\epsilon_l=\mathcal{O}\left(\epsilon_0\right)$ for all $l$. Summing over the $2^l$ inner terms gives $\mathcal{O}\left(2^{n+l(n+1)}\epsilon_0\right)$.
The outer sum over $l\in\left\{0,\dots,d-2\right\}$ is dominated by the term $l=d-2$, yielding $\mathcal{O}\left(2^{d(n+1)-2}\epsilon_0\right)$.
Absorbing the constant factor into the asymptotic notation, the total implementation error scales as
\begin{equation}\label{eq: BT error scaling}
    \epsilon_{\rm abs}^{\rm Bit} = \mathcal{O}\left(2^{(n+1)d}\epsilon_0\right),
\end{equation}
which exhibits exponential growth in both $n$ and $d$. 

For the relative error, the same assumption $\alpha_l=\mathcal{O}(2^n)$ gives $\alpha_{\rm BiT}=\mathcal{O}\left(2^{n(d+1)}\right)$. 
Combining this with Eq.~\eqref{eq: BT error scaling}, we obtain
\begin{equation*}
    \epsilon_{\rm rel}^{\rm Bit} = \mathcal{O}\left(\frac{2^d\epsilon_0}{2^n}\right).
\end{equation*}
Thus, the relative error grows exponentially in $d$ but decreases exponentially in $n$. 
As in the factorization construction, the absolute error is large but the relative error is much more favorable. 
The post-selection success probability scales as
\begin{equation*}
    p_{\rm succ}^{\rm Bit} = \Theta\left(\frac{\|P(A)\|_F^2}{2^{n(2d+3)}}\right),
\end{equation*}
which decreases exponentially with $n$ and $d$, but is significantly larger than the factorization success probability $p_{\rm succ}^{\rm Fac}$. 
This quantifies the advantage of the binary-tree construction: it trades a larger circuit depth for a much better post-selection success probability, while keeping the relative error comparable to that of the factorization method.

\subsubsection{Error in Trade-Off Construction}
Let $P(x)=\prod_{s=0}^{m-1}P^{(s)}(x)$ with $\deg P^{(s)}\le K=\left\lceil\frac{2^d-1}{m}\right\rceil$, and let the binary-tree subroutine implement each $P^{(s)}(A)$ as an
$(\alpha^{(s)},a^{(s)},\epsilon^{(s)})$-block-encoding. The normalization factor $\alpha^{(s)}$ is obtained from the binary-tree normalization formula applied to the factor $P^{(s)}$, namely,
\begin{equation*}
    \alpha^{(s)} = \sum_{k=0}^{K} 2^{n\overline{\operatorname{bit}_0(k)}} \alpha_0^{\operatorname{bit}_0(k)} \left|c^{(s)}_k\right| \prod_{j=1}^{\left\lceil\log K\right\rceil-1} \alpha_j^{\operatorname{bit}_j(k)},
\end{equation*}
while the implementation error $\epsilon^{(s)}$ is given by Eq.~\eqref{eq: error QHMF binary tree}. Explicitly, with $d_s=\left\lceil\log(K+1)\right\rceil$,
\begin{equation*}
    \begin{aligned}
        \epsilon^{(s)} =& \epsilon_0 \sum_{k=0}^{2^{d_s-1}-1} \left|c^{(s)}_{2k+1}\right| \prod_{j=0}^{d_s-2} \alpha_{d_s-j-1}^{\operatorname{bit}_j(k)} \\
        & + \sum_{l=0}^{d_s-2} \epsilon_{d_s-l-1} \sum_{k_l=0}^{2^l-1} \alpha_{l+1,2k_l+1} \prod_{j=0}^{l-1} \alpha_{d_s-j-1}^{\operatorname{bit}_j(k_l)}.
    \end{aligned}
\end{equation*}

By the QHP error bound in Corollary~\ref{corollary: QHP of m matrices}, the total absolute implementation error is
\begin{equation}\label{eq: trade impl error}
    \epsilon_{\rm abs}^{\rm TO}(m) = \sum_{s=0}^{m-1} \epsilon^{(s)} \prod_{t\ne s}\alpha^{(t)} = \alpha_{\rm TO} \sum_{s=0}^{m-1} \frac{\epsilon^{(s)}}{\alpha^{(s)}} .
\end{equation}
The corresponding relative implementation error is 
\begin{equation*}
    \epsilon_{\rm rel}^{\rm TO}(m) 
    = \frac{\epsilon_{\rm abs}^{\rm TO}(m)}{\alpha_{\rm TO}(m)}
    = \sum_{s=0}^{m-1} \frac{\epsilon^{(s)}}{\alpha^{(s)}},
\end{equation*}
and the post-selection success probability is $p_{\rm succ}^{\rm TO}(m) = \frac{\|P(A)\|_F^2}{\alpha_{\rm TO}(m)^22^n}$.

We now analyze the scaling of these three quantities under the assumption $\alpha_l=\mathcal{O}(2^n)$ and $\epsilon_l=\mathcal{O}\left(\epsilon_0\right)$ for all $l$. 
Each factor $P^{(s)}$ has binary-tree depth $d_s=\Theta(d-\log m)$, and its normalization factor and implementation error satisfy $\alpha^{(s)} = \mathcal{O}\left(2^{n(d-\log m + 1)}\right)$ and $\epsilon^{(s)} = \mathcal{O}\left(2^{(n+1)(d-\log m)}\epsilon_0\right)$. Consequently, $\alpha_{\rm TO}=\prod_{s=0}^{m-1}\alpha^{(s)}=\mathcal{O}\left(2^{mn(d-\log m + 1)}\right)$.
Substituting these estimates into Eq.~\eqref{eq: trade impl error}, we find that the absolute implementation error scales as
\begin{equation*}
    \epsilon_{\rm abs}^{\rm TO}(m) = \mathcal{O}\left(2^{mn(d-\log m+1)+d-n}\epsilon_0\right),
\end{equation*}
while the relative implementation error scales as
\begin{equation*}
    \epsilon_{\rm rel}^{\rm TO}(m) = \mathcal{O}\left(\frac{2^{d}\epsilon_0}{2^{n}}\right).
\end{equation*}
The post-selection success probability scales as
\begin{equation}\label{eq: trade success scaling}
    p_{\rm succ}^{\rm TO}(m)
    = \Theta\left(\frac{\|P(A)\|_F^2}{2^{2m n (d-\log m+1)+n}}\right).
\end{equation}

The trade-off parameter $m$ thus controls the error through two competing effects. 
Increasing $m$ reduces the degree $K$ of each factor, and hence the circuit depth of each binary-tree subroutine, but it also increases the number of factors and the total normalization factor $\alpha_{\rm TO}(m)$. 
As a result, the relative error remains essentially constant, whereas the absolute error and the sampling cost grow exponentially with $m$. 
This shows that the dominant limitation on $m$ is not the relative accuracy but the post-selection success probability. 

For a target accuracy $\varepsilon$, the admissible values of $m$ must satisfy
\begin{equation*}
    \epsilon_{\rm appro} + \epsilon_{\rm abs}^{\rm TO}(m) \leq \varepsilon.
\end{equation*}
In addition, if the sampling cost is constrained, one requires $p_{\rm succ}^{\rm TO}(m) \ge p_{\min}$.
Using Eq.~\eqref{eq: trade success scaling}, the success-probability constraint is equivalent to the implicit inequality
\begin{equation*}
    m\left(d-\log m+1\right) \leq \frac{1}{2n}\log\left(\frac{\|P(A)\|_F^2}{p_{\min}}\right) - n.
\end{equation*}
Because the left-hand side contains both $m$ and $\log m$, this inequality has no elementary closed-form solution for $m$. 
We therefore define $m_{\max}$ as the largest integer $m\in\{1,2,\dots,2^d-1\}$ satisfying this inequality, which can be found by a simple numerical search or by a bisection on the monotonic branch of $m(d-\log m+1)$ for $m\leq 2^{d-1}$.
In summary, the trade-off framework interpolates between the two error regimes as $m$ varies over $\{1,\dots,2^d-1\}$, and the choice of $m$ must balance depth, width, and sampling cost simultaneously.

\section{Applications}\label{sec: applications}
To demonstrate the practical utility of our quantum Hadamard matrix function framework, we apply it to two representative tasks: approximating nonlinear activation functions for quantum neural networks, and performing pointwise image transformations. Both tasks require applying the same polynomial function to every entry of an input matrix.

\subsection{Activation Functions in Quantum Neural Networks}\label{subsec: activation function}
Nonlinear activation functions are indispensable in classical neural networks. In the quantum setting, the same requirement appears when one wishes to realize a quantum analog of a feed-forward layer: after a linear transformation, the amplitudes must be passed through a nonlinear map. Because any continuous activation can be approximated uniformly by a polynomial on a compact interval, the Hadamard-polynomial framework developed in Sec.~\ref{sec: QHMF} supplies a systematic route to this task.

We consider the two most common bounded activations, the sigmoid function
\begin{equation*}
    \sigma(x) = \frac{1}{1+e^{-x}},
\end{equation*}
and the hyperbolic tangent function 
\begin{equation*}
    \operatorname{tanh}(x) = \frac{e^x - e^{-x}}{e^x + e^{-x}}.
\end{equation*}
For both functions, the effective input interval is typically restricted to $[-4,4]$, outside which the functions saturate. We compute the maximum absolute error (MaxAE) and the root mean square error (RMSE) of polynomials of varying degrees with power base coefficients obtained by Chebyshev approximations in Table~\ref{tab: errors of sigmoid tanh}.
\begin{table}[htbp]
    \centering
    \caption{MaxAE and RMSE for Chebyshev approximations of the sigmoid and tanh functions on $[-4,4]$. Errors are evaluated on $20001$ uniformly spaced points. The bold entries are the lowest degrees satisfying $\epsilon_{\rm appro}\leq10^{-2}$.}
    \label{tab: errors of sigmoid tanh}
    \resizebox{\linewidth}{!}{
    \begin{tabular}{|c|c|c|c|c|}
        \hline
        \textbf{Function} & \textbf{Degree} & $\mathbf{d}$ & \textbf{MaxAE} & \textbf{RMSE} \\
        \hline
        \multirow{4}{*}{Sigmoid} & $1$ & $1$ & $1.462\times10^{-1}$ & $6.629\times10^{-2}$ \\
        & $3$ & $2$ & $3.548\times10^{-2}$ & $1.976\times10^{-2}$ \\
        & $\mathbf{5}$ & $\mathbf{3}$ & $\mathbf{8.374\times10^{-3}}$ & $\mathbf{4.793\times10^{-3}}$ \\
        & $7$ & $3$ & $1.976\times10^{-3}$ & $1.137\times10^{-3}$ \\
        \hline
        \multirow{8}{*}{Tanh} & $1$ & $1$ & $4.144\times10^{-1}$ & $2.741\times10^{-1}$ \\
        & $3$ & $2$ & $2.322\times10^{-1}$ & $1.423\times10^{-1}$ \\
        & $5$ & $3$ & $1.239\times10^{-1}$ & $7.133\times10^{-2}$ \\
        & $7$ & $3$ & $5.928\times10^{-2}$ & $3.403\times10^{-2}$ \\
        & $9$ & $4$ & $2.686\times10^{-2}$ & $1.593\times10^{-2}$ \\
        & $11$ & $4$ & $1.245\times10^{-2}$ & $7.420\times10^{-3}$ \\
        & $\mathbf{13}$ & $\mathbf{4}$ & $\mathbf{5.924\times10^{-3}}$ & $\mathbf{3.450\times10^{-3}}$ \\
        & $15$ & $4$ & $2.765\times10^{-3}$ & $1.603\times10^{-3}$ \\
        \hline
    \end{tabular}
	}
\end{table}

\begin{figure}[htbp]
    \centering
    \begin{minipage}{\linewidth}
        \centering
        \includegraphics[width=\linewidth]{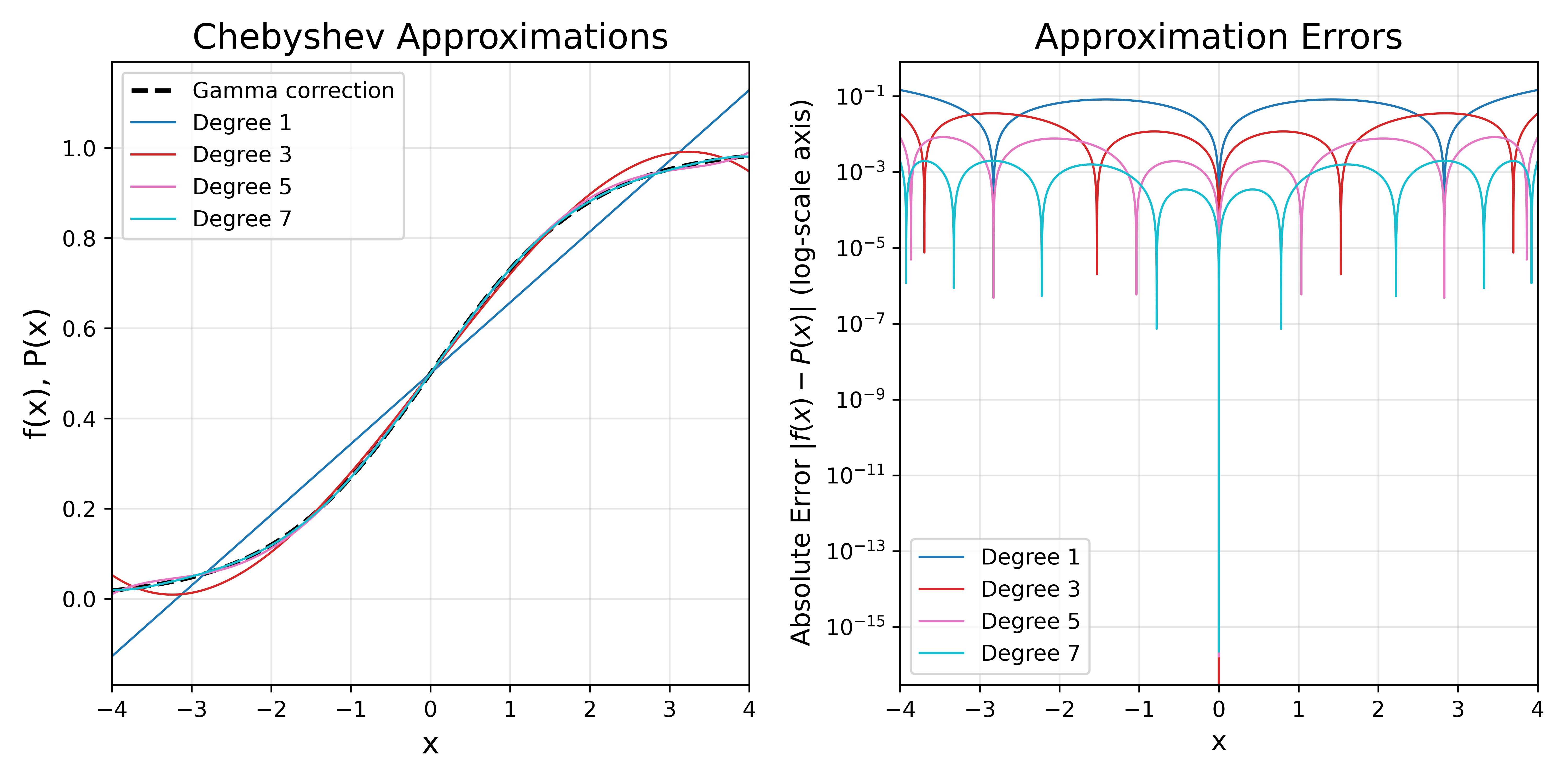}
        \subcaption{Sigmoid.}
    \end{minipage}
    \begin{minipage}{\linewidth}
        \centering
        \includegraphics[width=\linewidth]{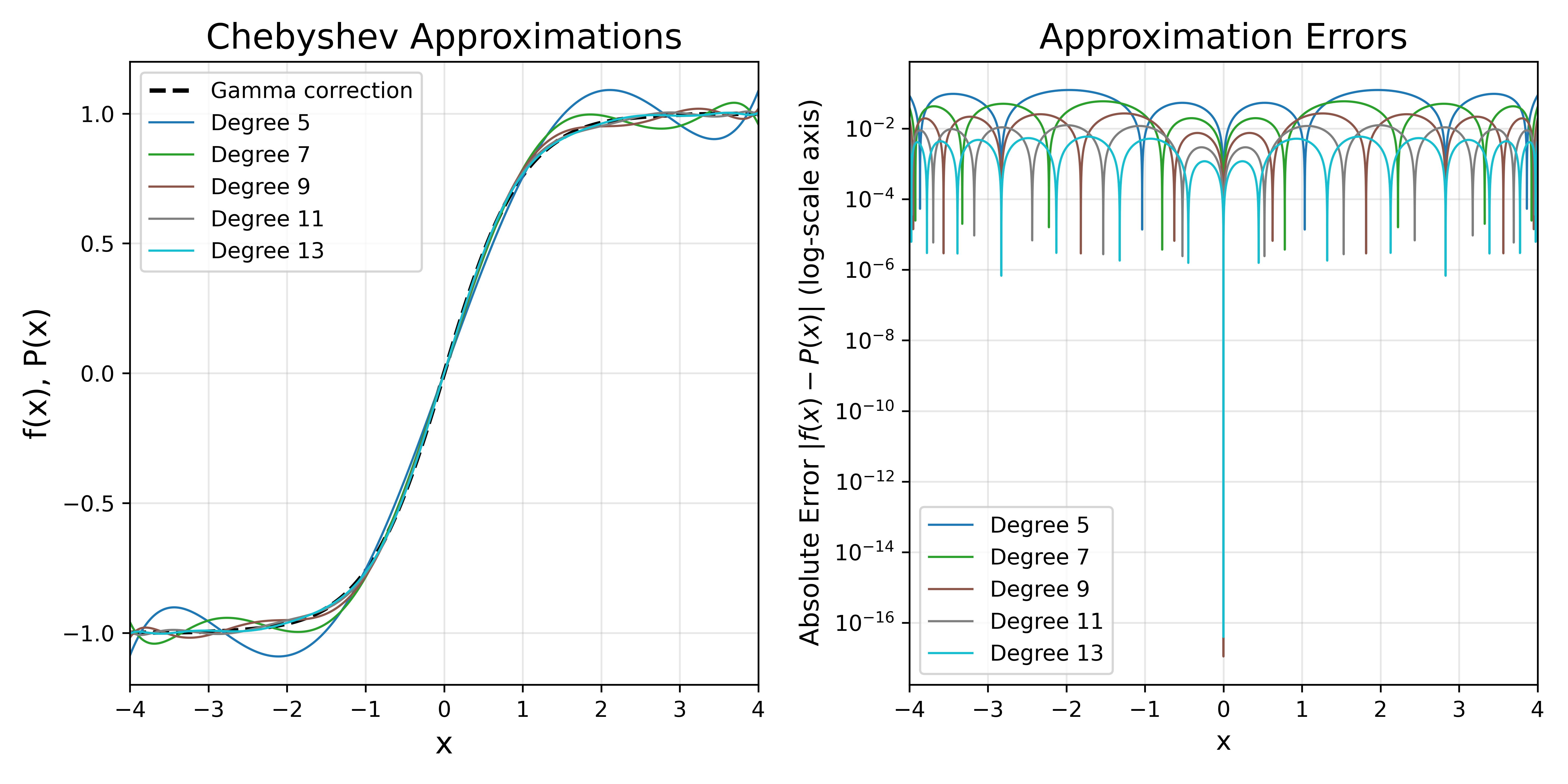}
        \subcaption{Tanh.}
    \end{minipage}
    \caption{Chebyshev approximations of the sigmoid function and the tanh function.}
    \label{fig: chebyshev approximation sigmoid tanh}
\end{figure}

As shown in Table~\ref{tab: errors of sigmoid tanh} and Fig~\ref{fig: chebyshev approximation sigmoid tanh}, targeting $\epsilon_{\rm appro}\leq10^{-2}$, a $5$-degree polynomial suffices for the sigmoid and a $13$-degree polynomial for the tanh.
\begin{gather*}
    \begin{aligned}
        P^{\text{sigmoid}}_5 (x) \approx &~ 0.5 + 2.44647\times10^{-1}x - 1.4269\times10^{-2}x^3 \\
        & + 4.14863\times10^{-4}x^5,
    \end{aligned} \\
    \begin{aligned}
        P^{\text{tanh}}_{13} (x) \approx &~ 9.92421\times10^{-1}x - 2.83839\times10^{-1}x^3  \\
        &+ 6.68517\times10^{-2}x^5 - 9.61612\times10^{-3}x^7 \\
        &+ 7.85724\times10^{-4}x^9 -3.33749\times10^{-5}x^{11} \\
        &+ 5.70796\times10^{-7}x^{13}.
    \end{aligned}
\end{gather*}

In the numerical simulation, we adopt the state-preparation model of~\cite{li2025binary} as a stand-in for the block encoding of $A$ with $\alpha_0=\left\|A\right\|_F$, where the depth and size of implementing are both $\mathcal{O}\left(2^n\right)$. 
This is a common simplification for benchmarking. Since $\left\|A\right\|_F\geq\left\|A\right\|$, this choice satisfies the block-encoding requirement $\alpha_0\geq\left\|A\right\|$.
For the power-oracle constructions, the higher Hadamard powers $A^{\circ 2^l}$ for $l\ge1$ are obtained by classical preprocessing: the matrices $A^{\circ 2^l}$ are computed explicitly and then encoded using the same state-preparation model as $U_A$, with normalization factors $\alpha_l=\|A^{\circ 2^l}\|_F$. 
This is consistent with the power-oracle model defined in Sec.~\ref{subsec: oracle models}, where the block encodings $U_{A^{\circ 2^l}}$ are treated as primitive oracles.
The classical cost of computing $A^{\circ 2^l}$ is not included in the quantum resource counts. 
For the numerical simulation, we randomly sample input states while keeping the sparsity fixed. Since the circuit complexity mainly depends on the system size and the sparsity, this randomness does not affect the comparison of complexities and relative relation of normalization factors across different constructions. 

By implementing these polynomials using the factorization-based, binary-tree-based, and LCU-based QHMF, respectively, we compare the total required circuit depth, size, qubit count, and normalization factor in Fig.~\ref{fig: complexity comparison}. It indicates that: (1) Factorization-based method achieves the lowest depth and size by a substantial margin at the cost of a larger number of ancilla and a higher normalization factor, which verifies previous analysis; (2) Binary‑tree method outperforms the LCU method in both depth and size while maintaining comparable normalization and ancilla overhead.

\begin{figure}[htbp]
    \centering
    \begin{minipage}{\linewidth}
        \centering
        \includegraphics[width=\linewidth]{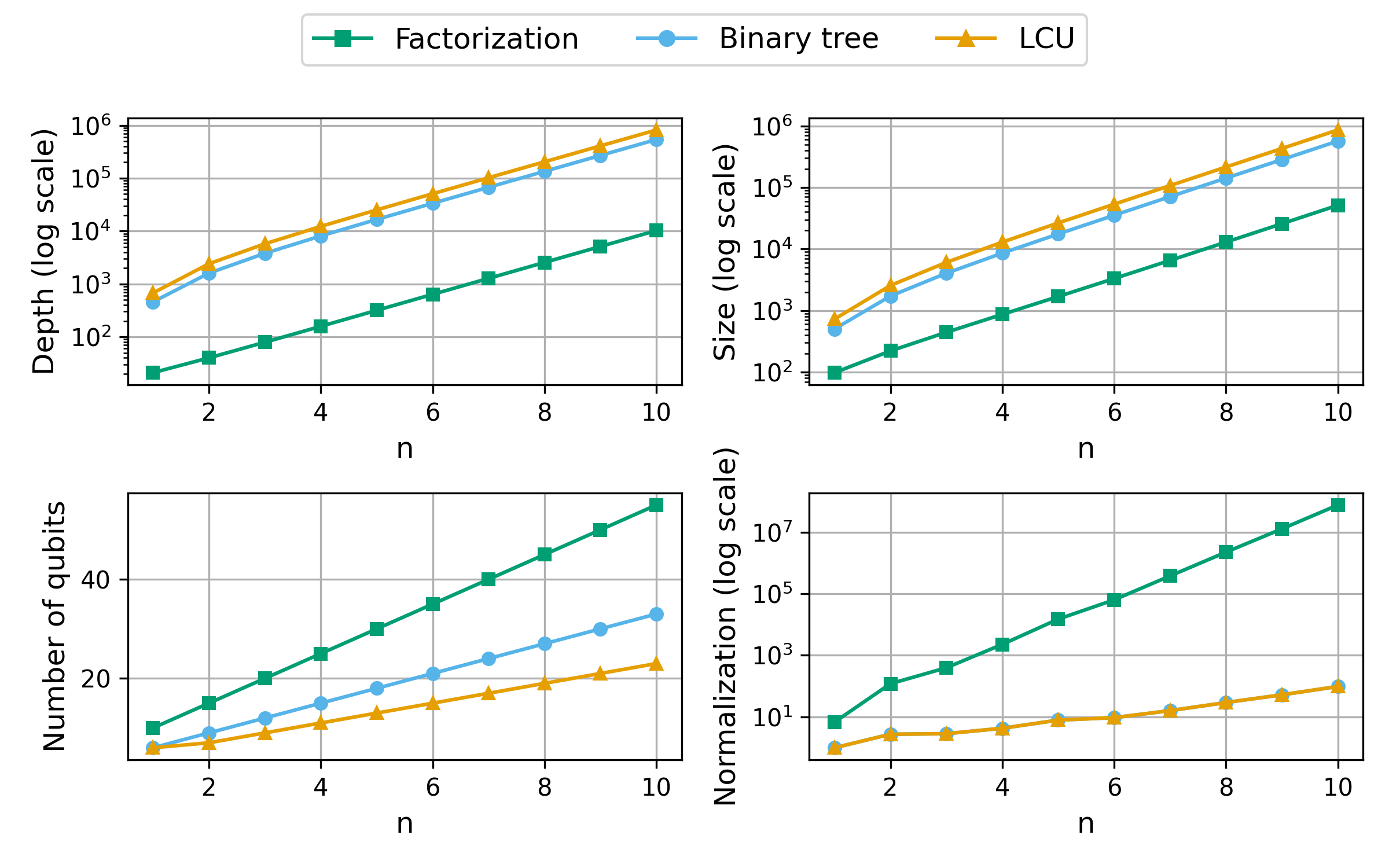}
        \subcaption{$P^{\text{sigmoid}}_5(A)$.}
    \end{minipage}
    \begin{minipage}{\linewidth}
        \centering
        \includegraphics[width=\linewidth]{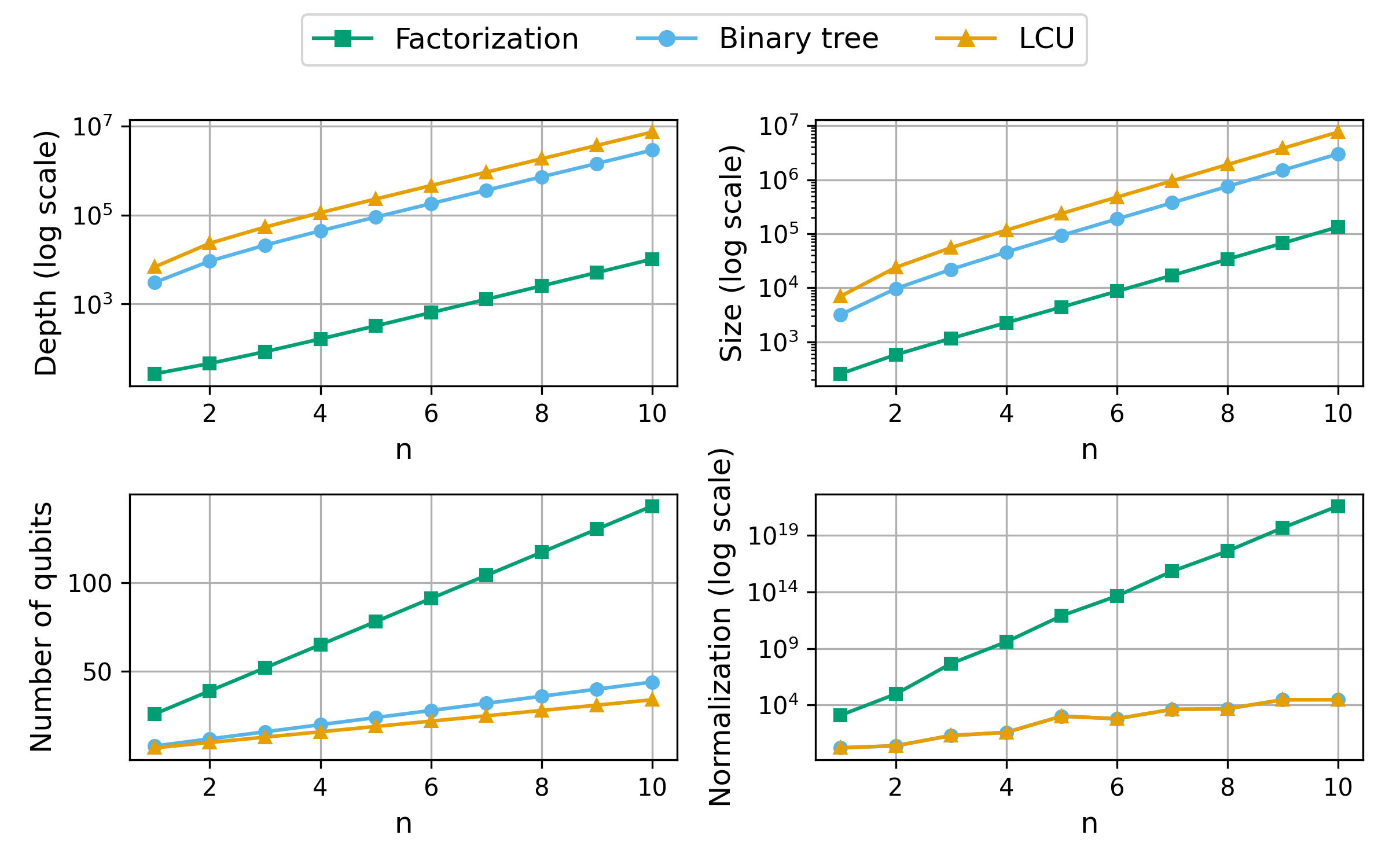}
        \subcaption{$P^{\text{tanh}}_{13}(A)$.}
    \end{minipage}
    \caption{Comparison of circuit complexities for implementing $P^{\text{sigmoid}}_5$ and $P^{\text{tanh}}_{13}$ using the factorization-based, binary-tree-based, and LCU-based~\cite{guo2025quantum} QHMF.}
    \label{fig: complexity comparison}
\end{figure}
\begin{figure}[htbp]
    \centering
    \includegraphics[width=\linewidth]{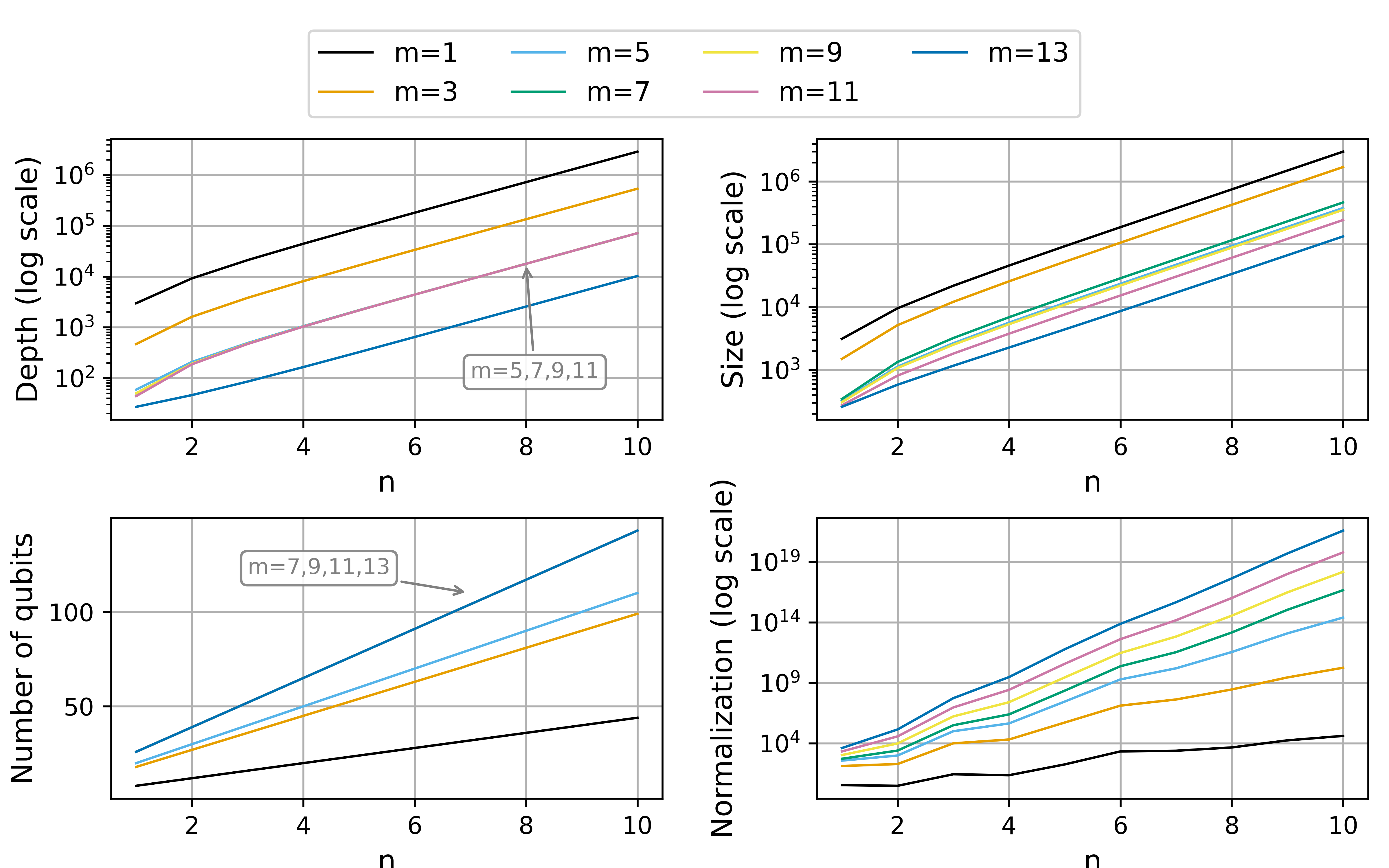}
    \caption{Trade-off complexity for implementing $P^{\text{tanh}}_{13}$ using the trade-off framework.}
    \label{fig: complexity trade-off}
\end{figure}

Moreover, we apply the trade-off QHMF framework to implement the $13$-degree polynomial approximation of the tanh function. The result in Fig.~\ref{fig: complexity trade-off} demonstrates the trade-off between depth and size with the number of qubits and normalization factor that we discuss in Sec.~\ref{sec: trade-off}.

\subsection{Intensity Transformations in Quantum Image Processing}\label{sec: image processing}
In classical image processing, many common operations apply the same mathematical transformation independently to every pixel. Typical examples are the logarithmic transformation and gamma correction, both of which are pointwise functions of the pixel intensity $r\in[0,1]$ and therefore match the entry-wise action of a Hadamard matrix function.

The logarithmic transformation
\begin{equation*}
    s=\log(1+r),
\end{equation*}
expands the dynamic range of low‑intensity regions while compressing high‑intensity regions, thereby revealing fine details in dark areas. Gamma correction is defined by 
\begin{equation*}
    s = r^\gamma,
\end{equation*}
where $\gamma$ is a control parameter. For $\gamma<1$, the mapping stretches dark pixel values, brightening the overall image and enhancing shadow details; for $\gamma>1$, it compresses dark values and expands bright ones, resulting in a darker image with enhanced highlight contrast.

\begin{table}[htbp]
    \centering
    \captionsetup[subtable]{labelformat=parens, labelsep=quad}
    \caption{MaxAE and RMSE for polynomial approximations of the logarithmic transformation and the Gamma correction on $[0,1]$. Errors are evaluated on $20001$ uniformly spaced points. }
    \label{tab: errors of log gamma}
    \begin{subtable}{\linewidth}
        \centering
        \resizebox{\linewidth}{!}{
        \begin{tabular}{|c|c|c|c|c|}
            \hline
            \textbf{Function} & \textbf{Degree} & $\mathbf{d}$ & \textbf{MaxAE} & \textbf{RMSE} \\
            \hline
            \multirow{6}{*}{Log} & $1$ & $1$ & $5.362\times10^{-2}$ & $2.872\times10^{-2}$ \\
            & $2$ & $2$ & $6.308\times10^{-3}$ & $3.373\times10^{-3}$ \\
            & $3$ & $2$ & $8.255\times10^{-4}$ & $4.376\times10^{-4}$ \\
            & $4$ & $3$ & $1.146\times10^{-4}$ & $6.028\times10^{-5}$ \\
            & $\mathbf{5}$ & $\mathbf{3}$ & $\mathbf{1.651\times10^{-5}}$ & $\mathbf{8.635\times10^{-6}}$ \\
            & $6$ & $3$ & $2.443\times10^{-6}$ & $1.271\times10^{-6}$ \\
            \hline
            \multirow{7}{*}{\makecell{Gamma \\ ($\gamma=2.2$)}} & $1$ & $1$ & $1.510\times10^{-1}$ & $9.708\times10^{-2}$ \\
            & $2$ & $2$ & $7.385\times10^{-3}$ & $3.720\times10^{-3}$ \\
            & $3$ & $2$ & $1.268\times10^{-3}$ & $4.774\times10^{-4}$ \\
            & $4$ & $3$ & $3.960\times10^{-4}$ & $1.189\times10^{-4}$ \\
            & $5$ & $3$ & $1.622\times10^{-4}$ & $4.053\times10^{-5}$ \\
            & $\mathbf{6}$ & $\mathbf{3}$ & $\mathbf{7.814\times10^{-5}}$ & $\mathbf{1.672\times10^{-5}}$ \\
            & $7$ & $3$ & $4.201\times10^{-5}$ & $7.860\times10^{-6}$ \\
            \hline
        \end{tabular}
    	}
        \caption{Chebyshev approximations of the logarithmic transformation and the Gamma correction with $\gamma=2.2$. The bold entries are the lowest degrees satisfying $\epsilon_{\rm appro}\leq10^{-4}$.}
    \end{subtable}
    \begin{subtable}{\linewidth}
        \centering
        \resizebox{\linewidth}{!}{
        \begin{tabular}{|c|c|c|c|c|}
            \hline
            \textbf{Method} & \textbf{Degree} & $\mathbf{d}$ & \textbf{MaxAE} & \textbf{RMSE} \\
            \hline
            \multirow{8}{*}{Chebyshev} & $1$ & $1$ & $3.654\times10^{-1}$ & $5.623\times10^{-2}$ \\
            & $2$ & $2$ & $2.562\times10^{-1}$ & $2.643\times10^{-2}$ \\
            & $3$ & $2$ & $2.013\times10^{-1}$ & $1.561\times10^{-2}$ \\
            & $4$ & $3$ & $1.676\times10^{-1}$ & $1.041\times10^{-2}$ \\
            & $\mathbf{5}$ & $\mathbf{3}$ & $\mathbf{1.445\times10^{-1}}$ & $\mathbf{7.492\times10^{-3}}$ \\
            & $6$ & $3$ & $1.275\times10^{-1}$ & $5.679\times10^{-3}$ \\
            & $7$ & $3$ & $1.145\times10^{-1}$ & $4.471\times10^{-3}$ \\
            & $8$ & $4$ & $1.041\times10^{-1}$ & $3.624\times10^{-3}$ \\
            \hline
            \multirow{8}{*}{Minimax} & $1$ & $1$ & $1.629\times10^{-1}$ & $1.108\times10^{-1}$ \\
            & $2$ & $2$ & $9.922\times10^{-2}$ & $6.932\times10^{-2}$ \\
            & $3$ & $2$ & $7.273\times10^{-2}$ & $5.113\times10^{-2}$ \\
            & $4$ & $3$ & $5.807\times10^{-2}$ & $4.093\times10^{-2}$ \\
            & $5$ & $3$ & $4.870\times10^{-2}$ & $3.436\times10^{-2}$ \\
            & $6$ & $3$ & $4.214\times10^{-2}$ & $2.975\times10^{-2}$ \\
            & $7$ & $3$ & $3.728\times10^{-2}$ & $2.633\times10^{-2}$ \\
            & $8$ & $4$ & $3.352\times10^{-2}$ & $2.369\times10^{-2}$ \\
            \hline
        \end{tabular}
		}
        \caption{Chebyshev approximations and minimax approximations of the Gamma correction with $\gamma=0.4$. The bold entry corresponds to the $5$-degree Chebyshev polynomial, which already achieves a PSNR above $48$ dB (computed on the actual pixel distribution) in the image-processing task.}
    \end{subtable}
\end{table}

\begin{figure}[htbp]
    \centering
    \begin{minipage}{\linewidth}
        \centering
        \includegraphics[width=\linewidth]{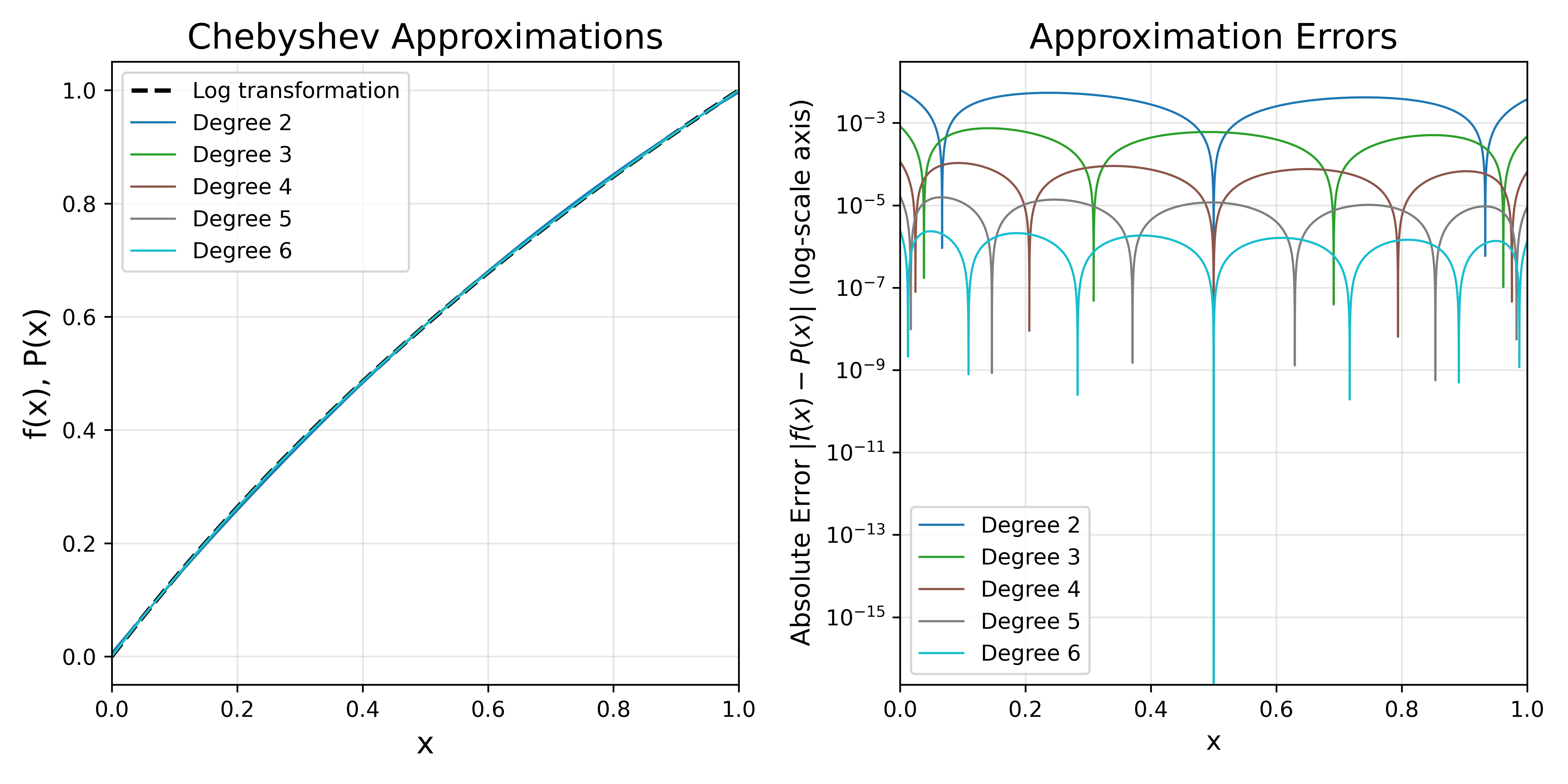}
        \subcaption{Chebyshev approximations of Logarithmic transformation.}
    \end{minipage}
    \begin{minipage}{\linewidth}
        \centering
        \includegraphics[width=\linewidth]{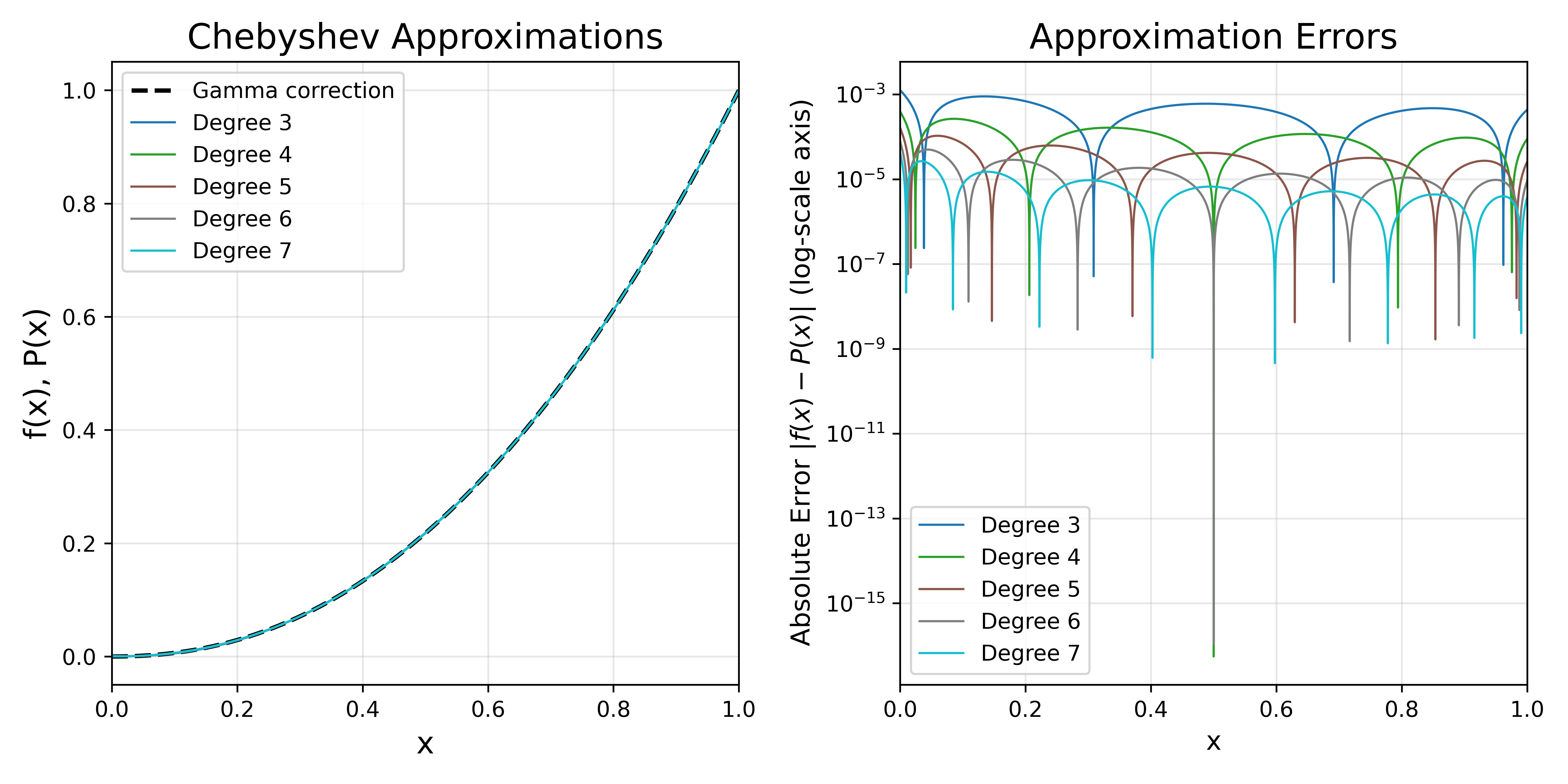}
        \subcaption{Chebyshev approximations of Gamma correction ($\gamma=2.2$).}
    \end{minipage}
    \begin{minipage}{\linewidth}
        \centering
        \includegraphics[width=\linewidth]{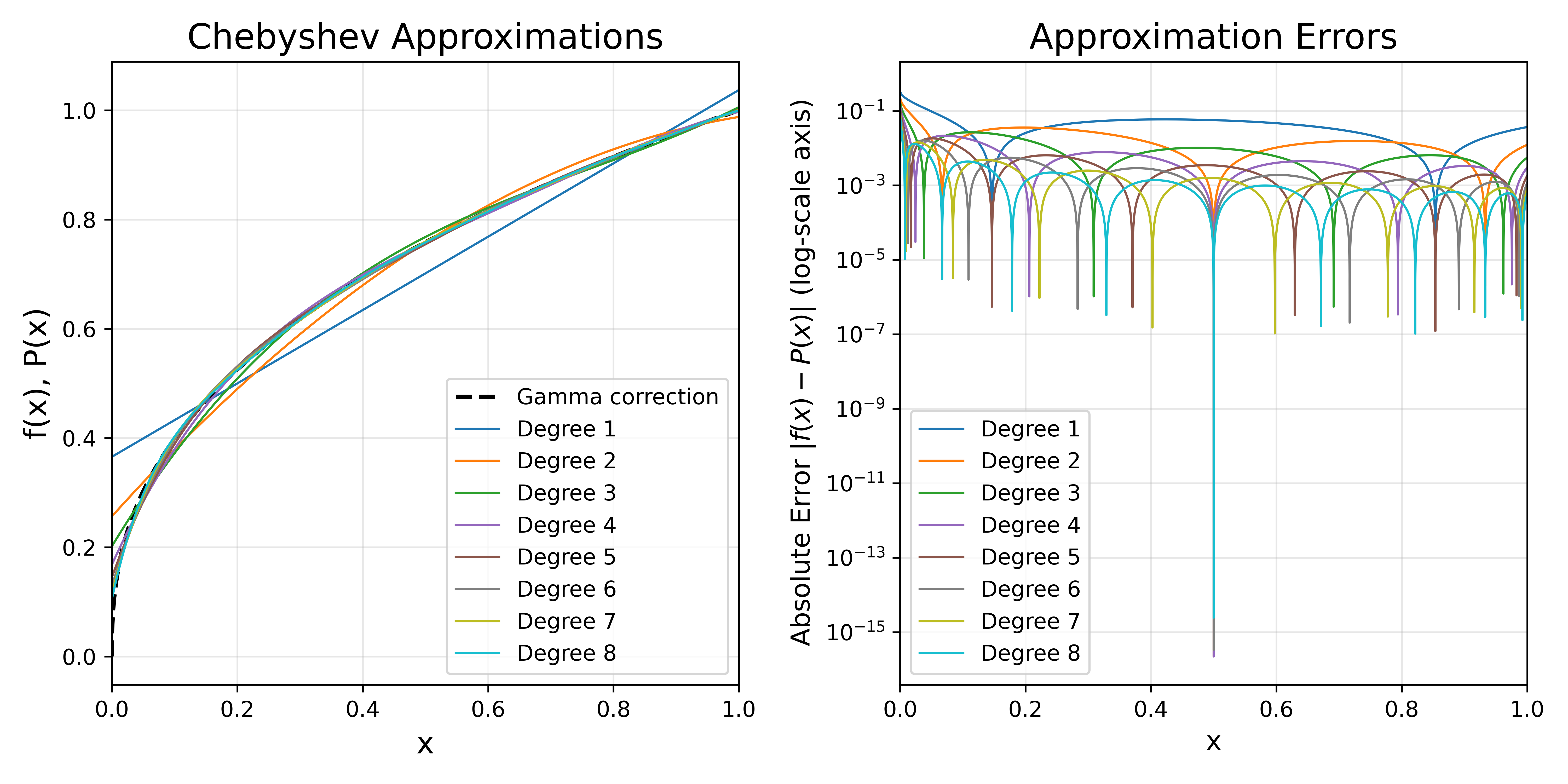}
        \subcaption{Chebyshev approximations of Gamma correction ($\gamma=0.4$).}
        \label{fig: approximation gamma chebyshev}
    \end{minipage}
    \begin{minipage}{\linewidth}
        \centering
        \includegraphics[width=\linewidth]{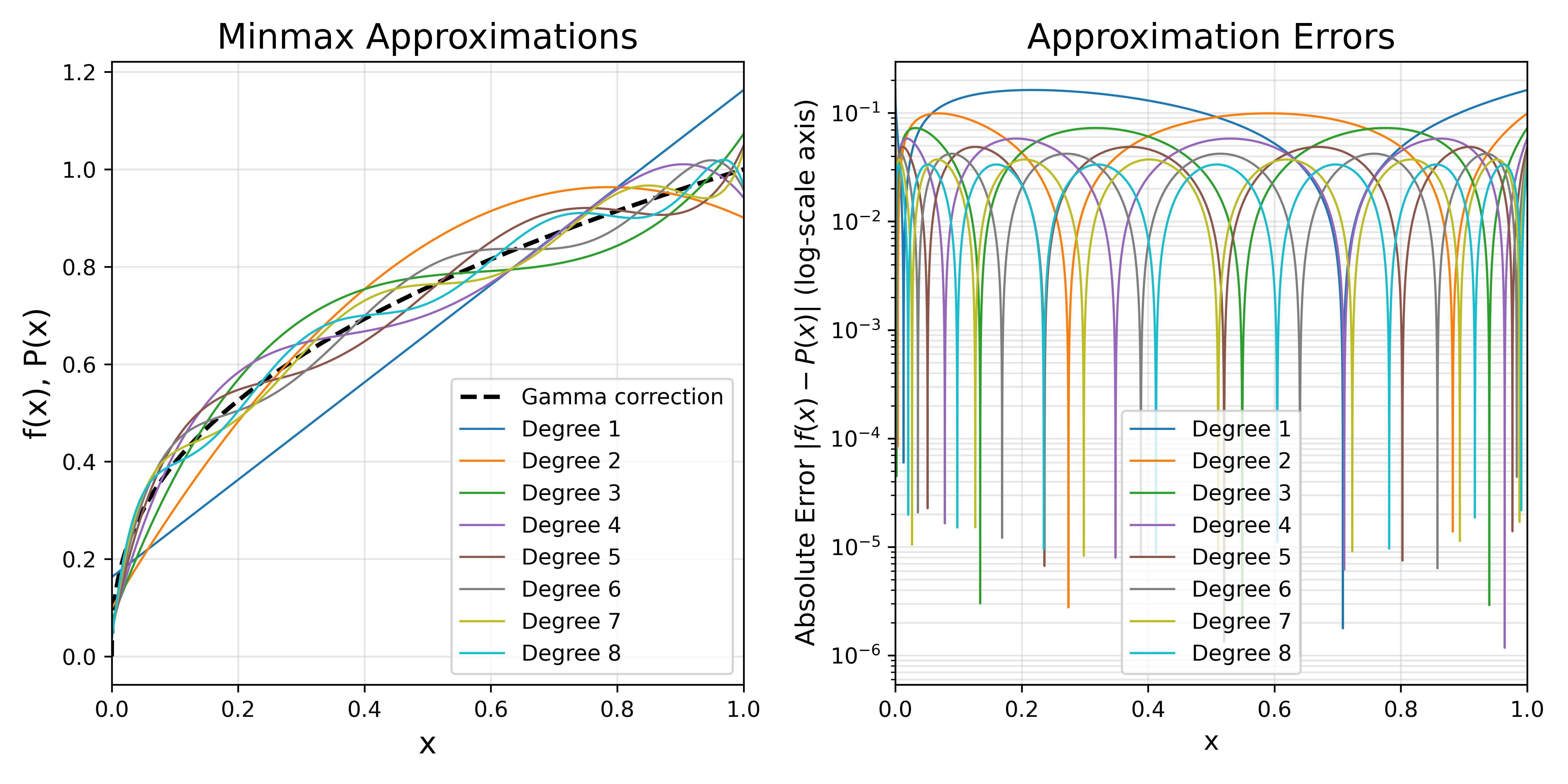}
        \subcaption{Minimax approximations of Gamma correction ($\gamma=0.4$).}
        \label{fig: approximation gamma minimax}
    \end{minipage}
    \caption{Polynomial approximations of the logarithmic transformation and the Gamma correction.}
    \label{fig: approximation log gamma}
\end{figure}

We compute the MaxAE and RMSE of polynomials of varying degrees with power base coefficients obtained by Chebyshev approximations in Table~\ref{tab: errors of log gamma}, and Fig.~\ref{fig: approximation log gamma} presents the visualization of the approximation error. For the logarithmic transformation and the Gamma correction with $\gamma=2.2$, targeting $\epsilon_{\rm appro}\leq 10^{-4}$, a $5$-degree polynomial suffices for the former and a $6$-degree polynomial for the latter. For the Gamma correction with $\gamma=0.4$, since the function is not analytic near the origin, the MaxAE converges slowly, even when using the minimax approximation with the Remez algorithm. Although the minimax polynomial achieves a smaller MaxAE, Fig.~\ref{fig: approximation gamma chebyshev} and~\ref{fig: approximation gamma minimax} show that the error of the Chebyshev approximation is smaller than the error of the minimax approximation with the same degree outside the vicinity of the origin. Through experiments, it is found that the $5$-degree Chebyshev polynomial approximation of the Gamma transformation with $\gamma=0.4$ can also perform well for this image-processing task.
\begin{gather*}
    \begin{aligned}
        P^{\text{log}}_5(x) =& 1.65147\times10^{-5} + 1.44149x - 0.706486 x^2 \\
        & + 0.40947 x^3 -0.187489 x^4 + 0.043005 x^5,
    \end{aligned} \\
    \begin{aligned}
        P^{\gamma=2.2}_6(x) =& 7.81353\times10^{-5} - 9.26559\times10^{-3} x \\
        & + 0.649582 x^2 + 0.714459 x^3 - 0.638808 x^4 \\
        &  + 0.382614 x^5 - 0.0986688 x^6.
    \end{aligned} \\
    \begin{aligned}
        P^{\gamma=0.4}_5(x) =& 0.144471 + 3.17704 x - 8.73113 x^2 \\
        & + 14.8812 x^3 - 12.5205 x^4 + 4.05075 x^5, 
    \end{aligned}
\end{gather*}

\begin{figure}[htbp]
	\centering
	\begin{minipage}[t]{0.49\linewidth}
		\centering
		\includegraphics[width=\linewidth]{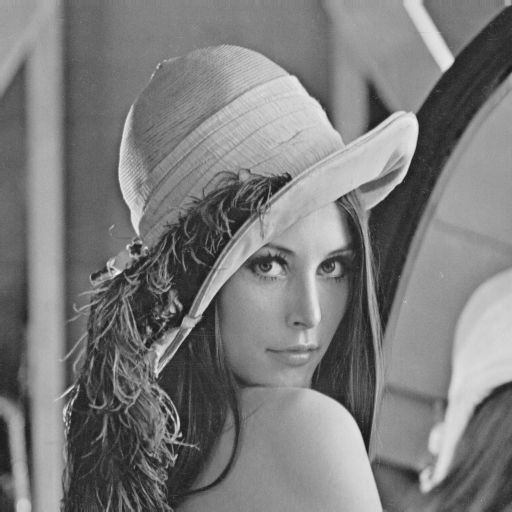}
		\subcaption{Origin image with size $512\times512$.}
		\label{fig: origin image}
	\end{minipage}
	\begin{minipage}[t]{0.49\linewidth}
		\centering
		\includegraphics[width=\linewidth]{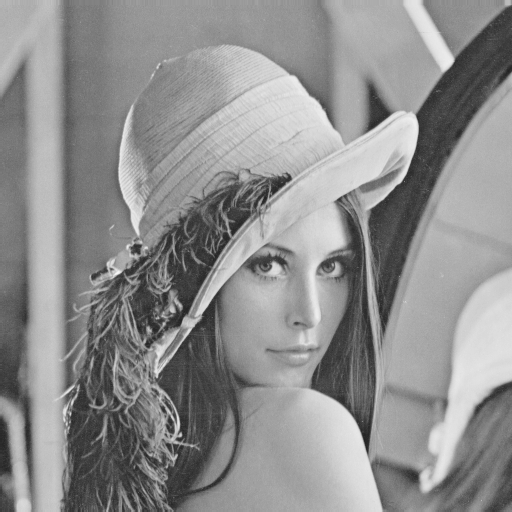}
		\subcaption{Logarithmic‐transformed image; PSNR $= 52.3$ dB.}
		\label{fig: log transformation}
	\end{minipage}
	\begin{minipage}[t]{0.49\linewidth}
		\centering
		\includegraphics[width=\linewidth]{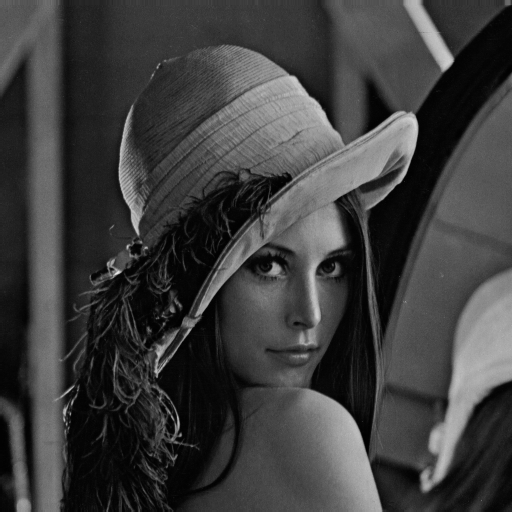}
		\subcaption{Gamma-corrected image with $\gamma=2.2$; PSNR $= 52.8$ dB.}
		\label{fig: gamma correction 2.2}
	\end{minipage}
	\begin{minipage}[t]{0.49\linewidth}
		\centering
		\includegraphics[width=\linewidth]{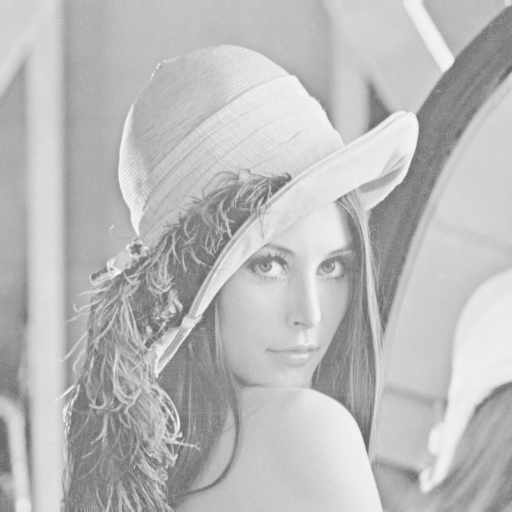}
		\subcaption{Gamma-corrected image with $\gamma=0.4$; PSNR $= 48.9$ dB.}
		\label{fig: gamma correction 0.4}
	\end{minipage}
	\caption{Grayscale image and its log transformation and gamma correction using QHMF. PSNR is computed against the ideal transformation after quantization to $8$-bit grayscale.}
	\label{fig: image and its transformation}
\end{figure}

We apply the approximation polynomials to a grayscale version of ``Lena'' (Fig.~\ref{fig: origin image}) using our QHMF construction. The transformed images are shown in Figs.~\ref{fig: log transformation}-\ref{fig: gamma correction 2.2}. The results exhibit the expected visual effects:
\begin{itemize}
    \item The logarithmic‑transformed image (Fig.~\ref{fig: log transformation}) reveals significantly more detail in the shadow regions while effectively compressing the bright areas, achieving the desired dynamic‑range compression.
    \item The Gamma‑corrected images behave as predicted: with $\gamma=2.2$ (Fig.~\ref{fig: gamma correction 2.2}), the image darkens overall, and contrast shifts toward the highlights; with $\gamma=0.4$ (Fig.~\ref{fig: gamma correction 0.4}), the image becomes distinctly brighter with enhanced contrast in dark regions.
\end{itemize}
In all cases, the polynomial approximations reproduce the ideal transformations to within the average peak signal-to-noise ratio (PSNR) values reported in Fig.~\ref{fig: image and its transformation}. (The PSNR values are computed on the actual pixel distribution of the test image.) This demonstration underscores the versatility of our QHMF framework: it can implement any pointwise polynomial operation on the amplitudes of a quantum state—whether for neural networks, image processing, or other data‑analysis tasks—with resource costs tunable to the underlying hardware.

\section{Conclusion}\label{sec: conclusion}
We have introduced a circuit framework for Hadamard matrix functions that provides a resource-tunable implementation among circuit depth, ancilla overhead, and normalization. 
The framework uses a tunable decomposition of the polynomial controlled by the parameter $m$ to interpolate between two extreme constructions: a logarithmic-depth factorization construction and a binary-tree construction. Its resource scaling is fully characterized: the factorization method for $m=2^d-1$ saturates the query lower bound in the single-oracle model with additional gate depth $\mathcal{O}(n+d)$, while the trade-off constructions for $1\leq m<2^d-1$ offer progressively lower ancilla requirements at the price of increased depth.

A rigorous error analysis complements these resource bounds. It shows that for analytic targets, the required degree grows only logarithmically in the inverse approximation accuracy, and it provides explicit error-propagation bounds for all constructions. These results yield practical guidelines for selecting the construction and the parameter $m$ under a prescribed error budget.
We validated the framework on two representative tasks: Chebyshev approximations of the sigmoid and tanh activation functions, and pointwise logarithmic and gamma transformations of a grayscale image. The numerical results confirmed the predicted depth–ancilla trade-off across the three constructions, and the image transformations achieved high PSNR values.

Several limitations remain. 
First, the trade-off constructions rely on the power-oracle assumption. If the single-oracle model is used, the higher Hadamard powers must be synthesized from $U_A$, which incurs additional query and gate costs and may alter the resource comparison. 
Second, the framework requires classical preprocessing for polynomial decomposition. In the factorization construction, complete decomposition requires root finding, which can become non-trivial for high-degree polynomials. However, this cost is one-time for a given polynomial.
Finally, the shallow-depth extreme suffers from an exponentially small post-selection success probability, which limits its applicability on near-term devices with limited sampling budgets.

\section*{Acknowledgement}
This work is supported by the Fundamental Research Funds for the Central Universities (Grant No. 3072024XX2401).

\section*{Author Contribution Statements}
C.Y.: Conceptualization, Methodology, Investigation, Software, Writing –- original draft. Y.L.: Investigation, Writing –- review \& editing. H.Y.: Conceptualization, Resources, Writing –- review \& editing, Supervision. Z.F.: Writing –- review \& editing, Supervision, Project administration. 

During the preparation of this work, portions of the manuscript were drafted or edited with the assistance of DeepSeek-V4 to improve clarity and style. All authors have reviewed and edited the content as needed and take full responsibility for the content of the publication.

\section*{Code Availability}
We provide Python code that implements our QHMF constructions and was used to generate the results in Sec.~\ref{sec: applications}. The code is openly available at
\url{https://github.com/ChunlinYANG0/QuantumHadamardProcedure}.

\bibliography{references} 

\onecolumn
\appendix

\section{Existing LCU-based Quantum Hadamard matrix function}\label{sec: LCU-based QHMF}
This appendix provides the details of the LCU-based quantum Hadamard matrix function that serves as the baseline in Sec.~\ref{subsec: comparison}. The construction is adapted from Guo et al.~\cite{guo2025quantum}. Although their original formulation was given for a normalized input $P(A/\alpha_0)$, we present it here for the unnormalized polynomial $P(A)$, so that it can be compared with the constructions developed in Sec.~\ref{sec: QHMF} under the same conventions. In the main text, this method is referenced only through its resource counts; here we provide its block-encoding and complexity analysis.

The method treats each Hadamard power $A^{\circ k}$ as an independent term, realizes a block encoding of each term via the Hadamard product of Corollary~\ref{corollary: QHP of m matrices}, and then combines them using an LCU over the polynomial coefficients.
\begin{figure}[htbp]
	\centering
	\begin{tikzpicture}
		\begin{yquant}
			qubit {idx} idx;
			qubit {} anc;
			qubit {$\ket{j}$} j;
			
			["north:$d$" 
			{font=\protect\footnotesize, inner sep=0pt}]
			slash idx;
			["north:$(d-1)n+\sum_{k=0}^{d-1}a_k$" 
			{font=\protect\footnotesize, inner sep=0pt}]
			slash anc;
			["north:$n$" 
			{font=\protect\footnotesize, inner sep=0pt}]
			slash j;
			
			hspace {5pt} -;
			box {$P_L$} idx;
			hspace {5pt} -;
			[multictrl]
			box {$U_{A^{\circ k}}$} (anc,j) ~ idx;
			hspace {5pt} -;
			box {$P_R^{\dagger}$} idx;
			hspace {5pt} -;
			
			output {$\ket{i}$} j;
		\end{yquant}
	\end{tikzpicture}
	\caption{Quantum circuit for the LCU-based QHMF.}
	\label{circuit: LCU-based QHMF}
\end{figure}
\begin{lemma}[LCU-based QHMF; adapted from~\cite{guo2025quantum}]\label{lem: LCU-based QHMF}
    Let $P(x)=\sum_{k=0}^{2^d-1}c_kx^k$ be a $(2^d-1)$-degree polynomial. In the power-oracle model, assume access to $(\alpha_l,a_l,\epsilon_l)$-block-encodings of an $n$-qubit matrix $A^{\circ2^l}\in\mathbb{C}^{2^n\times 2^n}$ for $l\in\{0,1,\cdots,d-1\}$ together with their controlled versions. Then there exist two state preparation unitaries $P_L$ and $P_R$ satisfying 
	\begin{align*}
		& \begin{aligned}
		    P_L \ket{0}^{\otimes d} 
            = \frac{1}{\sqrt{\alpha_{\rm LCU}}} \left( \sqrt{2^n\left|c_0\right|}\ket{0} + \sum_{k=1}^{2^d-1}\sqrt{\left|c_k\right|\prod_{j=0}^{d-1}\alpha_j^{\operatorname{bit}_j(k)}}\ket{k}\right),
		\end{aligned}
		\\
		& \begin{aligned}
		    P_R \ket{0}^{\otimes d} 
		      = \frac{1}{\sqrt{\alpha_{\rm LCU}}} \left( \sqrt{2^n\left|c_0\right|}e^{-i\operatorname{Arg}\left(c_0\right)}\ket{0} + \sum_{k=1}^{2^d-1}\sqrt{\left|c_k\right| \prod_{j=0}^{d-1}\alpha_j^{\operatorname{bit}_j(k)}}e^{-i\operatorname{Arg} \left(c_k\right)}\ket{k}\right),
		\end{aligned}
	\end{align*}
    where $\alpha_{\rm LCU}=2^n\left|c_0\right| + \sum_{k=1}^{2^d-1}\left|c_k\right|\prod_{j=0}^{d-1}\alpha_j^{\operatorname{bit}_j(k)}$, such that the circuit shown in Fig.~\ref{circuit: LCU-based QHMF} implements $P\left(A\right)$ with the	normalization factor $\alpha_{\rm LCU}$.
\end{lemma}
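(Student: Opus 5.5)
The plan is to verify the circuit of Fig.~\ref{circuit: LCU-based QHMF} by the standard LCU computation. We first fix, for each index $k\in\{0,\dots,2^d-1\}$, the block encoding selected by the multiplexed $U_{A^{\circ k}}$.

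\textbf{Step 1: the selected block encodings.} For $k\ge1$, write $k=\sum_j \operatorname{bit}_j(k)2^j$. Then $A^{\circ k}=\bigcirc_{j:\operatorname{bit}_j(k)=1}A^{\circ 2^j}$. By Corollary~\ref{corollary: QHP of m matrices}, the Hadamard product of the oracles $U_{A^{\circ 2^j}}$ with $\operatorname{bit}_j(k)=1$ is a block encoding of $A^{\circ k}$ with normalization $\prod_{j=0}^{d-1}\alpha_j^{\operatorname{bit}_j(k)}$.

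\textbf{Step 2: padding to a common ancilla register.} All these encodings must sit on the common ancilla register of size $(d-1)n+\sum_k a_k$. Unused ancillas are simply left idle, so they contribute $\bra{0}\ket{0}=1$ and do not change the encoded block. For $k=0$, we use the exact $(2^n,n,0)$-block-encoding $U_J$ of Lemma~\ref{lem: block encoding of all-ones matrix}, with normalization $2^n$. Thus, conditioned on $\ket{k}$ in the index register, the select operator $\mathrm{SEL}=\sum_k \ket{k}\bra{k}\otimes U_{A^{\circ k}}$ has $(0,0)$ ancilla block $A^{\circ k}/\beta_k$. Here $\beta_0=2^n$ and $\beta_k=\prod_j\alpha_j^{\operatorname{bit}_j(k)}$ for $k\ge1$. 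The multi-controlled structure realizes exactly this multiplexer: each oracle $U_{A^{\circ 2^j}}$ is applied controlled on $\operatorname{bit}_j(k)$, and the GHZ-type copies are also controlled.

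\textbf{Step 3: the LCU sandwich.} Writing $P_L\ket{0}=\sum_k \ell_k\ket{k}$ and $P_R\ket{0}=\sum_k r_k\ket{k}$, we compute
\begin{equation*}
\left(\bra{0}^{\otimes d}P_R^\dagger\otimes\bra{0}\right)\mathrm{SEL}\left(P_L\ket{0}^{\otimes d}\otimes\ket{0}\right)=\sum_k \overline{r_k}\,\ell_k\,\frac{A^{\circ k}}{\beta_k}.
\end{equation*}
Substituting the stated amplitudes gives $\overline{r_k}\ell_k=|c_k|\beta_k e^{i\operatorname{Arg}(c_k)}/\alpha_{\rm LCU}=c_k\beta_k/\alpha_{\rm LCU}$. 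The sum therefore equals $P(A)/\alpha_{\rm LCU}$. Both $P_L$ and $P_R$ exist because their target vectors are unit-norm: the squared moduli sum to $\alpha_{\rm LCU}/\alpha_{\rm LCU}=1$. Any state-preparation unitary with these first columns will do.

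\textbf{Main obstacle.} The delicate step is Step 2. We must check that the controlled Hadamard-product construction, in which the CNOT-based GHZ copies are applied only when the corresponding bit is set, yields exactly the block $A^{\circ k}/\beta_k$ on the shared ancilla register for every $k$. The index qubits are exactly the controls, so uncontrolled pieces must act trivially on the $\ket{0}$ ancilla projection. This means both the CNOT copies and the oracle calls must be bit-controlled; otherwise an uncontrolled copy would project the data index wrongly. Once this is in place, the rest of the proof is the routine LCU algebra of Step 3.
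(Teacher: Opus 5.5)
Your Steps 1 and 3 are the paper's proof. The paper takes the select operator to be $\sum_k \ket{k}\bra{k}\otimes U_{A^{\circ k}}$ and reads off its ancilla-zero blocks, $J/2^n$ for $k=0$ and $A^{\circ k}/\prod_j\alpha_j^{\operatorname{bit}_j(k)}$ for $k\ge1$. It then sandwiches this between $\bra{0}P_R^\dagger$ and $P_L\ket{0}$ to get $\frac{1}{\alpha_{\rm LCU}}\sum_k c_kA^{\circ k}$, with the same phase bookkeeping $\overline{r_k}\ell_k=c_k\beta_k/\alpha_{\rm LCU}$ that you use.

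The one thing to fix is the implementation claim in Step 2, which you flag as the main obstacle. Controlling each oracle $U_{A^{\circ 2^j}}$ and its CNOT copy only on $\operatorname{bit}_j(k)$ does \emph{not} produce the right block.

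\begin{itemize}
\item In the QHP circuit of Corollary~\ref{corollary: QHP of m matrices}, one factor acts directly on the output data register. The CNOT uncopies then force every copy's row index to equal that register's.
\item Under bit-wise control, take any $k\neq0$ whose bit assigned to the output register is $0$; this is every even $k$ if $U_A$ sits there. For such $k$ the output register is untouched.
\item Projecting that register onto $\bra{i}$ and the uncomputed copies onto $\bra{0}$ then gives $\delta_{ij}\,(A^{\circ k})_{jj}/\beta_k$. That is the block $(I\circ A^{\circ k})/\beta_k$, not $A^{\circ k}/\beta_k$.
\end{itemize}

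The paper avoids this by controlling each complete QHP circuit $U_{A^{\circ k}}$ on the full index value $\ket{k}$. That is why its complexity analysis counts $2^{d-1}$ $d$-qubit-controlled calls to each $U_{A^{\circ 2^l}}$ and $2n(\operatorname{popcount}(k)-1)$ gates $\mathrm{CX}^{d+1}_1$ per $k$.

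If you take $\mathrm{SEL}$ to be that fully multiplexed operator, as Fig.~\ref{circuit: LCU-based QHMF} does, the obstacle disappears. The block claim then follows directly from Step 1 plus your padding remark, and your argument is complete.
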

\begin{proof}
    The matrix that the quantum circuit in Fig.~\ref{circuit: LCU-based QHMF} encodes can be computed as follows:\\
    \begin{equation*}
        \begin{aligned}
            & \left(\bra{0}^{\otimes \left(d+(d-1)n+\sum_{l=0}^{d-1}a_l\right)}\otimes I_{2^{n}}\right) \left(P_R^\dagger\otimes I_{2^{dn+\sum_{l=0}^{d-1}a_l}}\right)  \left(\sum_{k=0}^{2^d-1}\ket{k}\bra{k}\otimes U_{A^{\circ k}}\right) \left(P_L \otimes I_{2^{dn+\sum_{l=0}^{d-1}a_l}}\right) \\
            & \cdot \left(\ket{0}^{\otimes \left(d+(d-1)n+\sum_{l=0}^{d-1}a_l\right)}\otimes I_{2^{n}}\right) \\
            =& \frac{1}{\alpha_{\rm LCU}} \left(\left( \sqrt{2^n\left|c_0\right|}e^{i\operatorname{Arg}\left(c_0\right)}\bra{0} + \sum_{k=1}^{2^d-1}\sqrt{\left|c_k\right| \prod_{j=0}^{d-1}\alpha_j^{\operatorname{bit}_j(k)}}e^{i\operatorname{Arg}\left(c_k\right)}\bra{k}\right)\otimes I_{2^{d(a+n)}}\right)  \\
            & \cdot \left(\ket{0}\bra{0}\otimes \frac{J}{2^n} + \sum_{k=1}^{2^d-1}\ket{k}\bra{k}\otimes \frac{A^{\circ k}}{\prod_{j=0}^{d-1}\alpha_j^{\operatorname{bit}_j(k)}}\right) \left( \left( \sqrt{2^n\left|c_0\right|}\ket{0} + \sum_{k=1}^{2^d-1}\sqrt{\left|c_k\right|\prod_{j=0}^{d-1}\alpha_j^{\operatorname{bit}_j(k)}}\ket{k}\right) \otimes I_{2^n}\right) \\
            =& \frac{1}{\alpha_{\rm LCU}} \sum_{k=0}^{2^d-1} c_k A^{\circ k}.
        \end{aligned}
    \end{equation*}
\end{proof}

The following theorem presents an analysis of the complexity of implementing the LCU-based QHMF.
\begin{lemma}[Complexity of the LCU-based QHMF]\label{lem: complexity of LCU-based QHMF}
	The LCU-based QHMF in Lemma~\ref{lem: LCU-based QHMF} can be implemented with:
    \begin{itemize}
        \item Query complexity: $Q^l_d = \mathcal{O}\left(2^d\right)$.
        \item Additional gates: $D=\mathcal{O}\left(n2^dd\left(\log d\right)^3\right)$, $S=\mathcal{O}\left(n2^dd^2\left(\log d\right)^4\right)$.
        \item Ancilla count: $N=\mathcal{O}\left(nd\right)$.
    \end{itemize}
\end{lemma}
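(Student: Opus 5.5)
The plan is to read the resource counts directly off the circuit in Fig.~\ref{circuit: LCU-based QHMF}, treating its three stages separately: the state preparations $P_L$ and $P_R^\dagger$ on the $d$-qubit index register, and the multiplexed block $\sum_k \ket{k}\bra{k}\otimes U_{A^{\circ k}}$.

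\textbf{Step 1: unfold the multiplexer.} For each index $k\in\{1,\dots,2^d-1\}$, $U_{A^{\circ k}}$ is obtained from Corollary~\ref{corollary: QHP of m matrices}. It is the QHP of the power oracles $U_{A^{\circ 2^j}}$ with $\operatorname{bit}_j(k)=1$, acting on $d$ register slots. The $k=0$ branch is $U_J$ from Lemma~\ref{lem: block encoding of all-ones matrix}. Each branch is controlled on all $d$ index qubits, so the multiplexer is a sequence of $2^d$ fully controlled branches.

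\textbf{Step 2: queries.} Each of the $2^d$ branches calls each $d$-qubit-controlled oracle $U_{A^{\circ 2^l}}$ at most once. This gives $Q^l_d=\mathcal{O}(2^d)$ for every $l$.

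\textbf{Step 3: additional gates.} Within one branch, the QHP wiring consists of $\mathcal{O}(nd)$ CNOTs, namely the $\operatorname{GHZ}$-type copying between the $d$ slots. Once the branch is controlled by $d$ index qubits, each of these becomes a $(d+1)$-controlled $X$. Using \cite{claudon2024poly}, each such gate has depth $\mathcal{O}((\log d)^3)$ and size $\mathcal{O}(d(\log d)^4)$. As remarked after Corollary~\ref{corollary: QHP of m matrices}, adding control destroys parallelism, so the gates within a branch are applied sequentially. Summing over branches gives:
\begin{itemize}
\item depth $\mathcal{O}(2^d\cdot nd\cdot(\log d)^3)$;
\item size $\mathcal{O}(2^d\cdot nd\cdot d(\log d)^4)$.
\end{itemize}
The controlled $U_J$ branch contributes only $\mathcal{O}(n)$ multi-controlled gates. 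Generic state preparation on $d$ qubits costs $\mathcal{O}(2^d)$ gates, which is lower order.

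\textbf{Step 4: ancillas.} These are read off the register widths: $d$ index qubits plus $(d-1)n+\sum_l a_l$ slot ancillas. Counting only the non-oracle part, this is $\mathcal{O}(nd)$, plus the $\mathcal{O}(d)$ borrowed ancillas for the multi-controlled gates, which are reused.

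The main obstacle is Step 3. One must justify that each branch has $\Theta(nd)$ controlled CNOTs executed sequentially, and not $\mathcal{O}(\log d)$ depth. This is what produces the extra factor $d$ in the depth relative to the binary tree. I would argue it from the observation that controlled QHP loses parallelism.
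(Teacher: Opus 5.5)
Your proposal is correct and follows the paper's proof: it unfolds the multiplexer into $2^d$ fully controlled QHP branches, giving $2^{d-1}$ controlled calls per power oracle. It then counts $\sum_k 2n\max\{\operatorname{popcount}(k)-1,0\}=\mathcal{O}(nd2^d)$ sequential $\text{CX}^{d+1}_1$ gates, each of depth $\mathcal{O}((\log d)^3)$ and size $\mathcal{O}(d(\log d)^4)$, and charges $\mathcal{O}(2^d)$ for state preparation and $d+n+(d-1)n$ ancillas. The differences are minor: you bound each branch by $\mathcal{O}(nd)$ CNOTs instead of the exact $2n(\operatorname{popcount}(k)-1)$, and the ``obstacle'' you flag is not needed, because running the gates sequentially already gives an upper bound sufficient for the $\mathcal{O}$ claims.
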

\begin{proof}
	As shown in Fig.~\ref{circuit: LCU-based QHMF}, the circuit for implementing the LCU-based QHMF contains two kinds of oracles:
	\begin{itemize}
		\item[1)] The state preparation oracles $\rm PREP,UNPREP$: \\
		They prepare two $d$-qubit states, which can be implemented with depth and size $\mathcal{O}\left(2^d\right)$~\cite{li2025binary}.
		
		\item[2)] The $d$-qubit controlled block encodings $U_{A^{\circ k}}$ for $k\in\{0,1,\cdots,2^d-1\}$: \\
		\begin{itemize}
			\item[i)] $d$-qubit controlled block-encoding $U_{A^{\circ 0}}$:\\
			There is only the constant term in the LCU that includes $U_{J}$. By Lemma~\ref{lem: block encoding of all-ones matrix}, the circuit size of $U_J$ is $\mathcal{O}\left(n\right)$ using $n$ ancilla. The $d$ control qubits lead that the depth and size for implementing the controlled $U_J$ are $\mathcal{O}\left(n+\left(\log d\right)^3\right)$ and $\mathcal{O}\left(n+d\left(\log d\right)^4\right)$, respectively, using one additional borrowed ancilla.
			
			\item[ii)] $d$-qubit controlled block-encodings $U_{A^{\circ 2^l}}$ for $l\in\{0,1,\cdots,d-1\}$:\\
			For each $l$, the number of $d$-qubit controlled encoding unitaries $U_{A^{\circ 2^l}}$ is $2^{d-1}$. 
			
			\item[iii)] $\text{CX}^{d+1}_1$ gates:\\
			The $\text{CX}^{d+1}_1$ gates are used to implement the QHP to obtain $U_{A^{\circ k}}$ for each $k$. If $k$ is a power of two, then there is no $\text{CX}^{d+1}_1$ gate; otherwise, the number of $\text{CX}^{d+1}_1$ gates is $2n\left(\operatorname{popcount}\left(k\right)-1\right)$, where $\operatorname{popcount}\left(k\right)$ is the number of $1$ in the binary representation of $k$. Thus, the total number of $\text{CX}^{d+1}_1$ gates is 
			\begin{equation*}
				\begin{aligned}
					\sum_{k=0}^{2^d-1}\left(2n\max\left\{\operatorname{popcount}\left(k\right)-1, 0\right\}\right)
					= \mathcal{O}\left(nd2^d\right).
				\end{aligned}
			\end{equation*}
			Because of the control qubits, there are no two of these $\text{CX}^{d+1}_1$ gates that can be implemented in parallel. And a $\text{CX}^{d+1}_1$ gate can be decomposed into a circuit of depth $\mathcal{O}\left(\left(\log d\right)^3\right)$ and size $\mathcal{O}\left(d\left(\log d\right)^4\right)$, using one borrowed ancilla~\cite{claudon2024poly}. 
		\end{itemize}
	\end{itemize}
	Therefore, the total cost of the LCU-based QHMF is
    \begin{itemize}
        \item $\mathcal{O}\left(2^d\right)$ query to each $d$-qubit controlled $U_{A^{\circ 2^l}}$  for $l\in\{0,1,\cdots,d-1\}$
        \item additional gate: 
        \begin{equation*}
            \begin{aligned}
                2\mathcal{O}\left(2^d\right) + \mathcal{O}\left(n+d\left(\log d\right)^4\right) + \mathcal{O}\left(nd2^d\right)\mathcal{O}\left(d\left(\log d\right)^4\right)
                = \mathcal{O}\left(n2^dd^2\left(\log d\right)^4\right)
            \end{aligned}
        \end{equation*}
        with depth 
        \begin{equation*}
            \begin{aligned}
                2\mathcal{O}\left(2^d\right) + \mathcal{O}\left(n+\left(\log d\right)^3\right) + \mathcal{O}\left(nd2^d\right)\mathcal{O}\left(\left(\log d\right)^3\right) 
                = \mathcal{O}\left(n2^dd\left(\log d\right)^3\right)
            \end{aligned}
        \end{equation*}
        \item ancilla: $d+ n+ (d-1)n = \mathcal{O}\left(nd\right)$.
    \end{itemize}
\end{proof}

\section{Proof of Theorems}

\subsection{Proof of Corollary~\ref{corollary: QHP of m matrices}}\label{sec: proof of corollary QHP of m matrices}
\begin{proof}
	Firstly, by performing $GHZ_m^n$ operator and all encoding unitaries $U_{A_k}$, $k\in\{0,1,\cdots,d-1\}$, on the initial state $\left(\bigotimes_{k=1}^{d-1}\left(\ket{0}^{\otimes \tilde{a}_{d-k}}\ket{0}^{\otimes n}\right)\right) \ket{0}^{\otimes \tilde{a}_0}\ket{j}$, we obtain that
	\begin{equation}\label{equation: right part Hadamard product d matrices}
		\begin{aligned}
			\left(\bigotimes_{k=1}^{m-1}\left(\ket{0}^{\otimes \tilde{a}_{m-k}}\ket{0}^{\otimes n}\right)\right) \ket{0}^{\otimes \tilde{a}_0}\ket{j} 
			\xrightarrow{GHZ_m^n} \bigotimes_{k=0}^{m-1} \left(\ket{0}^{\otimes \tilde{a}_{d-1-k}}\ket{j}\right) 
			\xrightarrow{U_{A_k}} \bigotimes_{k=0}^{m-1} \left(U_{A_{m-1-k}}\ket{0}^{\otimes \tilde{a}_{m-1-k}}\ket{j}\right).
		\end{aligned}
	\end{equation}
	Then, by performing $GHZ_m^n$ operator on the state $\left(\bigotimes_{k=1}^{m-1}\left(\ket{0}^{\otimes \tilde{a}_{m-k}}\ket{0}^{\otimes n}\right)\right) \ket{0}^{\otimes \tilde{a}_0}\ket{i}$, we obtain that
	\begin{equation}\label{equation: left part Hadamard product d matrices}
		\begin{aligned}
			\left(\bigotimes_{k=1}^{m-1}\left(\ket{0}^{\otimes \tilde{a}_{m-k}}\ket{0}^{\otimes n}\right)\right) \ket{0}^{\otimes \tilde{a}_0}\ket{i}
			\xrightarrow{GHZ_m^n} \bigotimes_{k=0}^{m-1} \left(\ket{0}^{\otimes \tilde{a}_{m-1-k}}\ket{i}\right).
		\end{aligned}
	\end{equation}
	Finally, taking the inner product of Equations~\eqref{equation: right part Hadamard product d matrices} and~\eqref{equation: left part Hadamard product d matrices}, we have
	\begin{equation*}
		\begin{aligned}
			& \left(\bigotimes_{k=1}^{m-1}\left(\bra{0}^{\otimes \tilde{a}_{m-k}}\bra{0}^{\otimes n}\right)\right) \bra{0}^{\otimes \tilde{a}_0}\bra{i} U_{\bigcirc_{k=0}^{m-1}A_k} \left(\bigotimes_{k=1}^{m-1}\left(\ket{0}^{\otimes \tilde{a}_{m-k}}\ket{0}^{\otimes n}\right)\right) \ket{0}^{\otimes \tilde{a}_0}\ket{j} \\
			=& \left(\bigotimes_{k=0}^{m-1}\left(\bra{0}^{\otimes \tilde{a}_{m-1-k}}\bra{i}\right)\right) \left(\bigotimes_{k=0}^{m-1} \left(U_{A_k}\ket{0}^{\otimes \tilde{a}_{m-1-k}}\ket{j}\right)\right)\\
			=& \frac{1}{\prod_{k=0}^{m-1}\alpha_k} \prod_{k=0}^{m-1}a^{(k)}_{ij}.
		\end{aligned}
	\end{equation*}
	
	Now, we consider the error in the encoding. Let $P$ and $P'$ be the unitary representations of the $GHZ_m^n$ operator that acts on the whole register and the sub-register $\ket{0}^{\otimes (m-1)n}\ket{j}$, respectively. Denote $\bar{A}_k = \tilde{\alpha}_k\left(\bra{0}^{\otimes \tilde{a}_k}\otimes I_{n}\right) U_{A_k} \left(\ket{0}^{\otimes \tilde{a}_k}\otimes I_{n}\right)$, $k\in\{0,1,\cdots,m-1\}$. Since $U_{A_k}$ is an $\left(\tilde{\alpha}_k,\tilde{a}_k,\tilde{\epsilon}_k\right)$-block-encoding of $A_k$, we have $\left\|A_k\right\|\leq\tilde{\alpha}_k$, $\left\|\bar{A}_k\right\|\leq\tilde{\alpha}_k$ and $\left\| A_k - \bar{A}_k\right\|\leq\tilde{\epsilon}_k$. Therefore, the error of the block encoding $U$ can be bounded by
    \begin{equation*}
        \begin{aligned}
            & \left\| \bigcirc_{k=0}^{m-1}A_{k} - \alpha\left(\bra{0}^{\otimes \left((m-1)n+\sum_{k=0}^{m-1}\tilde{a}_k\right)}\otimes I_{n}\right) P  \cdot\left(\bigotimes_{k=0}^{m-1}U_{A_{m-1-k}}\right)P  \left(\ket{0}^{\otimes \left((m-1)n+\sum_{k=0}^{m-1}\tilde{a}_k\right)}\otimes I_{n}\right)\right\| \\
            =& \left\| \bigcirc_{k=0}^{m-1}A_{k} - \left(\bra{0}^{\otimes (m-1)n}\otimes I_{n}\right) P'\left(\bigotimes_{k=0}^{m-1} \bar{A}_{m-1-k}\right) P'\left(\ket{0}^{\otimes (m-1)n}\otimes I_{n}\right) \right\| \\
            =& \left\| \bigcirc_{k=0}^{m-1}A_{k} - \bigcirc_{k=0}^{m-1}\bar{A}_{k} \right\| \\
            =& \left\| \bigcirc_{k=0}^{m-1}A_{k} - \bar{A}_{0} \circ \bigcirc_{k=1}^{m-1}A_{k} + \bar{A}_{0} \circ \bigcirc_{k=1}^{m-1}A_{k} - \cdots + \bigcirc_{k=0}^{m-2}\bar{A}_{k} \circ A_{m-1} - \bigcirc_{k=0}^{m-1}\bar{A}_{k} \right\| \\
            \leq& \left\| \left(A_{0} - \bar{A}_{0}\right) \circ \bigcirc_{k=1}^{m-1}A_{k} \right\| + \cdots + \left\| \bigcirc_{k=0}^{m-2}\bar{A}_{k} \circ \left( A_{m-1} - \bar{A}_{m-1}\right) \right\| \\
            \leq& \sum_{k=0}^{m-1} \left(\frac{\tilde{\epsilon}_k}{\tilde{\alpha}_k} \prod_{l=0}^{m-1}\tilde{\alpha}_{l}\right).
        \end{aligned}
    \end{equation*}
\end{proof}

\subsection{Proof of Theorem~\ref{thm: factorization-based QHMF}}\label{sec: proof of factorization-based QHMF}
\begin{proof}
	First, we perform the $GHZ_{2^d-1}^n$ operator, rotation gates $R_Y\left(\theta_k\right),R_Z\left(\phi_k\right)$ for $k\in\{0,1,\cdots,2^d-2\}$, single-qubit controlled block encodings $U_A$ and $U_{J}$ on the initial state $\ket{0}^{\otimes (1+\hat{a}_0+n)\left(2^d-2\right)} \ket{0}^{\otimes (1+\hat{a}_0)}\ket{j}$ successively, which results in
	\begin{equation}\label{equation: right part Hadamard matrix function function}
		\begin{aligned}
			& \ket{0}^{\otimes (1+\hat{a}_0+n)\left(2^d-2\right)} \ket{0}^{\otimes (1+\hat{a}_0)}\ket{j} \\
			\xrightarrow{GHZ_{2^d-1}^n} & \left(\ket{0}^{\otimes (1+\hat{a}_0)}\ket{j}\right)^{\otimes \left(2^d-1\right)} \\
			\xrightarrow{R_Y\left(\theta_k\right),R_Z\left(\phi_k\right)} & \bigotimes_{k=0}^{2^d-2}\left(e^{-i\frac{\phi_k}{2}}\cos\frac{\theta_{k}}{2}\ket{0}\ket{0}^{\otimes \hat{a}_0}\ket{j}+e^{i\frac{\phi_k}{2}}\sin\frac{\theta_{k}}{2}\ket{1}\ket{0}^{\otimes \hat{a}_0}\ket{j}\right) \\
			\xrightarrow{U_A,U_{J}}& \bigotimes_{k=0}^{2^d-2}\left(e^{-i\frac{\phi_{k}}{2}}\cos\frac{\theta_{k}}{2} \ket{0}\left(U_A\ket{0}^{\otimes \hat{a}_0}\ket{j}\right) +e^{i\frac{\phi_{k}}{2}}\sin\frac{\theta_{k}}{2}\ket{1}\left(U_{J}\ket{0}^{\otimes \hat{a}_0}\ket{j}\right)\right).
		\end{aligned}
	\end{equation}
	
	Then, by performing the $GHZ_{2^d-1}^n$ operator, rotation gate $R_Z^\dagger\left(\varphi\right)$, and rotation gates $R_Y^{\dagger}\left(\theta_k\right)$ for $k\in\{0,1,\cdots,2^d-2\}$ on the state $\ket{0}^{\otimes (1+\hat{a}_0+n)\left(2^d-2\right)} \ket{0}^{\otimes (1+\hat{a}_0)}\ket{i}$, we obtain that
	\begin{equation}\label{equation: left part Hadamard matrix function function}
		\begin{aligned}
			& \ket{0}^{\otimes (1+\hat{a}_0+n)\left(2^d-2\right)} \ket{0}^{\otimes (1+\hat{a}_0)}\ket{i} \\
			\xrightarrow{GHZ_{2^d-1}^n} & \left(\ket{0}^{\otimes (1+\hat{a}_0)}\ket{i}\right)^{\otimes \left(2^d-1\right)} \\
			\xrightarrow{R_Z^\dagger\left(\varphi\right)} & e^{i\frac{\varphi}{2}}\left(\ket{0}^{\otimes (1+\hat{a}_0)}\ket{i}\right)^{\otimes \left(2^d-1\right)} \\
			\xrightarrow{R_Y^{\dagger}\left(\theta_k\right)}& e^{i\frac{\varphi}{2}}\bigotimes_{k=0}^{2^d-2} \left(\cos\frac{\theta_{k}}{2}\ket{0}\ket{0}^{\otimes \hat{a}_0}\ket{i} - \sin\frac{\theta_{k}}{2}\ket{1}\ket{0}^{\otimes \hat{a}_0}\ket{i}\right). 
		\end{aligned}
	\end{equation}
	
	Finally, taking the inner product between Eq.~\eqref{equation: left part Hadamard matrix function function} and~\eqref{equation: right part Hadamard matrix function function}, we have
	\begin{equation*}
		\begin{aligned}
			& \bra{0}^{\otimes (1+\hat{a}_0+n)\left(2^d-2\right)} \bra{0}^{\otimes (1+\hat{a}_0)}\bra{i} U_{P(A)}
			\ket{0}^{\otimes (1+\hat{a}_0+n)\left(2^d-2\right)} \ket{0}^{\otimes (1+\hat{a}_0)}\ket{j} \\
			=& \left(e^{-i\frac{\varphi}{2}}\bigotimes_{k=0}^{2^d-2}\left(\cos\frac{\theta_{k}}{2}\bra{0}\bra{0}^{\otimes \hat{a}_0}\bra{i} -\sin\frac{\theta_{k}}{2}\bra{1}\bra{0}^{\otimes \hat{a}_0}\bra{i}\right)\right) \\
			& \cdot \left(\bigotimes_{k=0}^{2^d-2}\left(e^{-i\frac{\phi_{k}}{2}}\cos\frac{\theta_{k}}{2}\ket{0}\left(U_A\ket{0}^{\otimes a}\ket{j}\right) +e^{i\frac{\phi_{k}}{2}}\sin\frac{\theta_{k}}{2}\ket{1}\left(U_{J}\ket{0}^{\otimes \hat{a}_0}\ket{j}\right)\right)\right) \\
			=& e^{-i\frac{\varphi}{2}}\prod_{k=0}^{2^d-2} \left(\left(e^{-i\frac{\phi_k}{2}}\cos^2\frac{\theta_k}{2}\right) \frac{a_{ij}}{\alpha_0} - \left(e^{i\frac{\phi_k}{2}}\sin^2\frac{\theta_k}{2}\right) \frac{1}{2^n}\right) \\
			=& \frac{1}{\left|c_{2^d-1}\right|}e^{-i\left(\frac{\varphi}{2}+\sum_{k=0}^{2^d-2}\frac{\phi_k}{2} + \operatorname{Arg}\left(c_{2^d-1}\right)\right)} \left(\prod_{k=0}^{2^d-2}\frac{\cos^2\frac{\theta_k}{2}}{\alpha_0}\right) c_{2^d-1} \prod_{k=0}^{2^d-2} \left( a_{ij} - e^{i\phi_k}\frac{\alpha_0}{2^n}\tan^2\frac{\theta_k}{2} \right) \\
            =& \frac{1}{\left|c_{2^d-1}\right|\prod_{k=0}^{2^d-2}\left(\alpha_0+2^n\left|r_k\right|\right)} P(a_{ij}).
		\end{aligned}
	\end{equation*}	
\end{proof}

\subsection{Proof of Theorem~\ref{theorem: binary-tree-based QHMF}}\label{sec: proof for binary-tree-based QHMF}
\begin{proof}   
    The matrix that the circuit in Fig.~\ref{circuit: 1-degree QHMF} encodes can be computed as follows:
    \begin{equation*}
        \begin{aligned}
            &  \left(\bra{0}\otimes \bra{0}^{\otimes a_0} \otimes I_{2^n}\right) U_{P_{\left[2k_{d-1},2k_{d-1}+1\right]}\left(A\right)} \left(\ket{0}\otimes \ket{0}^{\otimes a_0} \otimes I_{2^n}\right) \\
            =& \left(\bra{0}\otimes \bra{0}^{\otimes a_0} \otimes I_{2^n}\right) \left(R_Y\left(-\theta_{k_{d-1}}\right)R_Z\left(\varphi_{k_{d-1}}\right)\otimes I_{2^{a_0+n}}\right) \left(\ket{0}\bra{0}\otimes U_{J} + \ket{1}\bra{1}\otimes U_{A}\right) \\
            & \left(R_Y\left(\theta_{k_{d-1}}\right)R_Z\left(\phi_{k_{d-1}}\right)\otimes I_{2^{a_0+n}}\right) \left(\ket{0}\otimes \ket{0}^{\otimes a_0} \otimes I_{2^n}\right) \\
            =& \left(\left(e^{-i\frac{\varphi_{k_{d-1}}}{2}}\cos\frac{\theta_{k_{d-1}}}{2}\bra{0} + e^{i\frac{\varphi_{k_{d-1}}}{2}}\sin\frac{\theta_{k_{d-1}}}{2}\bra{1}\right) \otimes I_{2^n} \right) \left(\ket{0}\bra{0}\otimes \frac{J}{2^n} + \ket{1}\bra{1}\otimes \frac{A}{\alpha_0}\right) \\
            & \left(\left(e^{-i\frac{\phi_{k_{d-1}}}{2}}\cos\frac{\theta_{k_{d-1}}}{2}\ket{0} + e^{-i\frac{\phi_{k_{d-1}}}{2}}\sin\frac{\theta_{k_{d-1}}}{2}\ket{1}\right) \otimes I_{2^n} \right) \\
            =& \left( e^{-i\frac{\phi_{k_{d-1}}+\varphi_{k_{d-1}}}{2}}\cos^2\frac{\theta_{k_{d-1}}}{2} \right) \frac{J}{2^n} + \left(e^{-i\frac{\phi_{k_{d-1}}-\varphi_{k_{d-1}}}{2}}\sin^2\frac{\theta_{k_{d-1}}}{2}\right) \frac{A}{\alpha_0} \\
            =& \frac{1}{2^n\left|c_{2k_{d-1}}\right| + \left|c_{2k_{d-1}+1}\right|\alpha_0} \left( c_{2k_{d-1}} J + c_{2k_{d-1}+1}A\right) \\
            =& \frac{1}{\alpha_{d-1,k_{d-1}}} P_{\left[2k_{d-1},2k_{d-1}+1\right]}\left(A\right).
        \end{aligned}
    \end{equation*}
    
    The matrix that the recursive circuit in Fig.~\ref{circuit: recursive quantum circuit} encodes can be computed as follows:
    \begin{equation*}
        \begin{aligned}
            & \left(\bra{0}^{\otimes \left(1+a_{d-l-1}+n+b_l\right)}\otimes I_{2^n}\right) U_{P_{\left[2^{d-l}k_{l},2^{d-l}(k_{l}+1)-1\right]}\left(A\right)} \left(\ket{0}^{\otimes \left(1+a_{d-l-1}+n+b_l\right)}\otimes I_{2^n}\right) \\
            =& \left(\bra{0}^{\otimes \left(1+a_{d-l-1}+n+b_l\right)}\otimes I_{2^n}\right) \left(R_Y\left(-\gamma_{l,k_{l}}\right) \otimes I_{2^{a_{d-l-1}+n+b_l+n}}\right) \left(\ket{0}\bra{0}\otimes I_{2^{a_{d-l-1}+n}} \otimes U_{P_{\left[2^{d-l}k_{l},2^{d-l-1}(2k_{l}+1)-1\right]}\left(A\right)}\right. \\
            & \left.+ \ket{1}\bra{1}\otimes\left( {\rm CNOT}^{\otimes n} \left( U_{A^{\circ2^{d-l-1}}} \otimes U_{P_{\left[2^{d-l-1}(2k_{l}+1),2^{d-l}(k_{l}+1)-1\right]}\left(A\right)} \right) {\rm CNOT}^{\otimes n} \right) \right) \\
            & \left(R_Y\left(\gamma_{l,k_{l}}\right) \otimes I_{2^{a_{d-l-1}+n+b_{l}+n}}\right) \left(\ket{0}^{\otimes \left(1+a_{d-l-1}+n+b_l\right)}\otimes I_{2^n}\right) \\
            =& \left(\left(\cos\frac{\gamma_{l,k_{l}}}{2}\bra{0} + \sin\frac{\gamma_{l,k_{l}}}{2}\bra{1}\right) \otimes I_{2^{n}}\right) \left(\ket{0}\bra{0}\otimes \frac{P_{\left[2^{d-l}k_{l},2^{d-l-1}(2k_{l}+1)-1\right]}\left(A\right)}{\alpha_{l+1,2k_l}} \right. \\
            & \left. + \ket{1}\bra{1}\otimes \left(\frac{A^{\circ2^{d-l-1}}}{\alpha_{d-l-1}} \circ \frac{P_{\left[2^{d-l-1}(2k_{l}+1),2^{d-l}(k_{l}+1)-1\right]}\left(A\right)}{\alpha_{l+1,2k_l+1}} \right) \right) \left(\left(\cos\frac{\gamma_{l,k_{l}}}{2}\ket{0} + \sin\frac{\gamma_{l,k_{l}}}{2}\ket{1}\right) \otimes I_{2^{n}}\right) \\
            =& \left(\cos^2\frac{\gamma_{l,k_{l}}}{2}\right) \frac{P_{\left[2^{d-l}k_{l},2^{d-l-1}(2k_{l}+1)-1\right]}\left(A\right)}{\alpha_{l+1,2k_l}} + \left(\sin^2\frac{\gamma_{l-1,k_{l-1}}}{2}\right) \frac{P_{\left[2^{d-l-1}(2k_{l}+1),2^{d-l}(k_{l}+1)-1\right]}\left(A\right)}{\alpha_{l+1,2k_l+1}} \circ \frac{A^{\circ 2^{d-l-1}}}{\alpha_{d-l-1}} \\
            =& \frac{P_{\left[2^{d-l}k_{l},2^{d-l-1}(2k_{l}+1)-1\right]}\left(A\right) + P_{\left[2^{d-l-1}(2k_{l}+1),2^{d-l}(k_{l}+1)-1\right]}\left(A\right) \circ A^{\circ 2^{d-l-1}}}{\alpha_{l+1,2k_l} + \alpha_{d-l-1} \alpha_{l+1,2k_l+1}} \\
            =& \frac{1}{\alpha_{l,k_l}} P_{\left[2^{d-l}k_{l},2^{d-l}(k_{l}+1)-1\right]}\left(A\right),
        \end{aligned}
    \end{equation*}
    where the last equality holds because $\alpha_{l, k_l} = \alpha_{l+1,2k_l} + \alpha_{d-l-1} \alpha_{l+1,2k_l+1}$.
    
    By inductive method, the circuit of $U_{P\left(A\right)}$ implements the Hadamard matrix function $P\left(A\right)$ with a normalization factor $\alpha_{0,0} = \sum_{k=0}^{2^d-1} 2^{n\overline{\operatorname{bit}_0(k)}}\alpha_0^{\operatorname{bit}_0(k)} |c_k|\prod_{j=1}^{d-1}\alpha_j^{\operatorname{bit}_j(k)}$.
\end{proof}

\subsection{Proof of Corollary~\ref{cor: complexity of binary-tree-based QHMF}}\label{sec: proof of complexity for binary-tree QHMF}
\begin{proof}
	The circuit of the binary-tree-based QHMF in Fig.~\ref{circuit: binary-tree-based QHMF} contains four kinds of gates: 
	\begin{itemize}
		\item[1)] (Controlled) $R_Y$ rotations for $k_l\in\{0,1,\cdots,2^l-1\}$ and $l\in\{0,1,\cdots,d-2\}$: \\
		In the $l$-th tree layer, there are $2^{l+1}$ $l$-qubit controlled $R_Y$ rotations. And an $l$-qubit controlled $R_Y$ can be decomposed into a circuit with depth $\mathcal{O}\left(\left(\log l\right)^3\right)$ and size $\mathcal{O}\left(l\left(\log l\right)^4\right)$, using one borrowed ancilla~\cite{claudon2024poly}.
		
		\item[2)] Multi-controlled $X$ gates: \\
		In the $l$-th tree layer for $l\in\{0,1,\cdots,d-2\}$, there are $n2^{l+1}$ $\text{CX}^{l+2}_1$ gates. And a $\text{CX}^{l+2}_1$ gate can be decomposed into a circuit with depth $\mathcal{O}\left(\left(\log l\right)^3\right)$ and size  $\mathcal{O}\left(l\left(\log l\right)^4\right)$, using one borrowed ancilla~\cite{claudon2024poly}.
		
		\item[3)] Controlled $R_Z,R_Y$ rotations in the $(d-1)$-th recursive layer:\\
		These controlled rotations are controlled by $(d-1)$ qubits. The number of controlled rotations is $2^{d+1}$. And a $(d-1)$-qubit controlled rotation gate can be decomposed into a circuit with depth $\mathcal{O}\left(\left(\log d\right)^3\right)$ and size $\mathcal{O}\left(d\left(\log d\right)^4\right)$, using one borrowed ancilla~\cite{claudon2024poly}.
		
		\item[4)] Controlled block encodings $U_{J}$ and $U_{A^{\circ 2^l}}$ for $l\in\{0,1,\cdots,d-1\}$:\\
		In the $l$-th tree layer, there are $2^l$ $(l+1)$-qubit controlled block encodings $U_{A^{\circ 2^{d-l-1}}}$ for $l\in\{0,1,\cdots,d-1\}$. In the $(d-1)$-th layer, there are $2^{d-1}$ $d$-qubit controlled block encodings $U_{J}$. By Lemma~\ref{lem: block encoding of all-ones matrix} and Ref.~\cite{claudon2024poly}, a $d$-qubit controlled $U_{J}$ can be implemented with depth $\mathcal{O}\left(n+\left(\log d\right)^3\right)$ and size $\mathcal{O}\left(n+d\left(\log d\right)^4\right)$, using one additional borrowed ancilla.
	\end{itemize}
	Therefore, the binary-tree-based QHMF can be implemented with cost:
    \begin{itemize}
        \item $\mathcal{O}\left(2^{d-l}\right)$ query to each $(d-l)$-qubit controlled block encodings $U_{A^{\circ 2^{l}}}$ for $l\in\{0,1,\cdots,d-1\}$,
        \item additional gate:
        \begin{equation*}
            \begin{aligned}
                & \sum_{l=1}^{d-2}2^{l+1}\mathcal{O}\left(l\left(\log l\right)^4\right) + \sum_{l=1}^{d-2}n2^{l+1}\mathcal{O}\left(l\left(\log l\right)^4\right) + 2^{d+1}\mathcal{O}\left(d\left(\log d\right)^4\right) + 2^{d-1}\mathcal{O}\left(n+d\left(\log d\right)^4\right) \\
                =& \mathcal{O}\left(n2^dd\left(\log d\right)^4\right)
            \end{aligned}
        \end{equation*}
        with depth
        \begin{equation*}
            \begin{aligned}
                & \sum_{l=1}^{d-2}2^{l+1}\mathcal{O}\left(\left(\log l\right)^3\right) + \sum_{l=1}^{d-2}n2^{l+1}\mathcal{O}\left(\left(\log l\right)^3\right) + 2^{d+1}\mathcal{O}\left(\left(\log d\right)^3\right) + 2^{d-1}\mathcal{O}\left(n +\left(\log d\right)^3\right) \\
                =& \mathcal{O}\left(n2^d\left(\log d\right)^3\right)
            \end{aligned}
        \end{equation*}
        \item ancilla: $d+n+(d-1)n+\sum_{l=0}^{d-1}a_l = \mathcal{O}\left(nd+\sum_{l=0}^{d-1}a_l\right)$.
    \end{itemize}
\end{proof}

\end{document}